\newtheorem{theorem}{Theorem}
\newtheorem{lemma}[theorem]{Lemma}
\newtheorem{proposition}[theorem]{Proposition}
\newtheorem{corollary}[theorem]{Corollary}
\newtheorem{definition}[theorem]{Definition}
\newtheorem{example}[theorem]{Example}
\newcommand{\tr}{\ensuremath\mathrm{tr}}
\newcommand{\bra}[1]{\ensuremath\langle{#1}|}%
\newcommand{\ket}[1]{{\ensuremath|{#1}\rangle}}%
\newcommand{\ketbra}[2]{\ket{#1}\bra{#2}}
\newcommand{\braket}[2]{{\langle{#1}|{#2}\rangle}}
\DeclareMathOperator{\Tr}{Tr}
\newcommand{\errorbasis}[2]{^{#1}_{#2}}
\newcommand{\circuittensor}[3]{C({#1})\errorbasis{#2}{#3}}
\newcommand{\circuitoperator}[1]{\mathbf{C}\left({#1}\right)}
\newcommand{\pauligroup}[0]{\mathcal{P}}
\begin{document}

\title{Quantum Circuit Tensors and Enumerators with Applications to Quantum Fault Tolerance}

\author{Alon~Kukliansky\orcidlink{0009-0003-6743-9018}\thanks{A. Kukliansky is at the U.S. Naval Postgraduate School.}
and Brad~Lackey\orcidlink{0000-0002-3823-8757}\thanks{B. Lackey is at Microsoft Quantum, Microsoft Corporation.}}

\maketitle
\begin{abstract}
    We extend the recently introduced notion of tensor enumerator to the circuit enumerator. We provide a mathematical framework that offers a novel method for analyzing circuits and error models without resorting to Monte Carlo techniques. We introduce an analogue of the Poisson summation formula for stabilizer codes, facilitating a method for the exact computation of the number of error paths within the syndrome extraction circuit of the code that does not require direct enumeration. We demonstrate the efficacy of our approach by explicitly providing the number of error paths in a distance five surface code under various error models, a task previously deemed infeasible via simulation. We also show our circuit enumerator is related to the process matrix of a channel through a type of MacWilliams identity.
\end{abstract}
\begin{IEEEkeywords}
Quantum codes, circuit tensors, weight enumerators, tensor enumerators, circuit enumerators, MacWilliams identity
\end{IEEEkeywords}

\tableofcontents

\section{Introduction}

Quantum error correction is an essential part of scalable quantum computation. A key component of this is the fault tolerance of the circuits used to extract syndromes of quantum codes \cite{shor1996fault, gottesman1998theory, preskill1998fault, knill2004fault1}. Typically, the analysis of such circuits relies on direct simulation using Monte Carlo trials \cite{knill2004fault2, wang2009threshold}. Often lower bounds of the fault tolerant threshold are obtained by counting error paths \cite{dennis2002topological}. Otherwise, there are a few general methods for the analysis of fault tolerant protocols, see for example \cite{beverland2024fault} for one.

Our approach is based on tensor enumerators of \cite{cao2023quantum}. These provide a method to explicitly analyze circuits and error models without the use of Monte Carlo techniques. In particular, we have developed a formalism that can represent both unitary circuits and quantum channels as tensor objects, from which enumerators can be derived.

We provide an analogue of the Poisson summation formula for stabilizer codes that allows for rapid enumeration of error paths.
 
Generally, the computation of quantum weight enumerators and hence circuit enumerators, is exponential in the number of qubits. For small circuits this is feasible, however, for larger circuits, this is only tractable when the circuit is sparse or has a special structure, which is often the case for syndrome extraction circuits. In particular, we find we can explicitly count the error paths of a distance five surface code under a variety of error models, which would be impossible using simulation.

This paper is organized as follows: we provide the needed background and layout of our notation in \S{\ref{section:background}}. Next, in \S{\ref{section:ct_linear_map}} we formalize the circuit tensor and Choi state for a linear map, prove composition laws, and compute these for key examples. In \S{\ref{section:ct_of_q_channel}}, we define the Choi matrix and circuit tensor for any quantum channel, and prove the circuit tensor composition laws at this level of generality. \S{\ref{section:ct_examples}} contains many circuit tensor constructions for common quantum circuits, with a final example of the teleportation circuit. We incorporate noise sources into the quantum circuit in \S{\ref{section:noise-analysis}}, and show how to analyze them using circuit enumerators. We extend the teleportation example to include quantum and classical noise sources, calculate its circuit enumerator, and generate the error model of the output state. \S{\ref{section:Poisson-summation}} holds the Poisson summation theorem for circuit enumerators of stabilizer codes. Following, \S{\ref{section:app_ft}} shows how to use circuit enumerators to analyze a noisy syndrome extraction circuit for a stabilizer code. Finally, in \S{\ref{section:concliu}} we discuss our theory and its impact.

In an appendix, we show our circuit tensor is related to the process matrix of a channel through a type of MacWilliams identity.

\section{Background}\label{section:background}

Here we provide an accelerated presentation of the background material we need for this work. We discuss quantum and classical error bases on Hilbert spaces, the Pauli basis in particular, as a means for introducing circuit tensors and enumerators. We also present the Shor-Laflamme enumerators for stabilizer codes, as well as their tensor enumerators.

Much of this work holds in the context of a general error basis. However, as in \cite{cao2023quantum}, we restrict to ``nice error bases'' \cite{knill1996group, knill1996non} with Abelian index group \cite{klappenecker2002beyond}, defined as follows.

\begin{definition}[Hilbert space error basis]\label{definition:error-basis}
    An \emph{error basis} on a Hilbert space $\mathfrak{H}$ is a basis of unitary operators $\mathcal{E}$, which
    \begin{enumerate}
        \item contains the identity $I \in \mathcal{E}$;
        \item is trace-orthogonal $\Tr(E^\dagger F) = 0$ if $E\not= F$; and,
        \item satisfies $EF = \omega(E,F) FE$ for all $E,F \in \mathcal{E}$.
    \end{enumerate}
    Here, we take $\omega(E,F) \in \mathbb{C}$ to have $\omega(E,F)^r = 1$ for some fixed $r$.
\end{definition}

Let $\mathcal{E}$ be an error basis on a Hilbert space $\mathfrak{H}$ of dimension $q$. Fix any basis $\{\ket{x}\}_{x=1}^q$ of $\mathfrak{H}$. We define the \emph{Bell state} relative to $\mathcal{E}$ to be $\ket\beta = \frac{1}{\sqrt{q}}\sum_{x} \ket{x}\otimes \ket{x} \in \mathfrak{H} \otimes \mathfrak{H}$. We have the following result from \cite[Lemma VII.1]{cao2023quantum}.

\begin{lemma}\label{lemma:bell-state-stabilizer}
$\ketbra\beta\beta = \frac{1}{q^2}\sum_{E\in \mathcal{E}} E \otimes E^* \in L(\mathfrak{H}\otimes\mathfrak{H})$.
\end{lemma}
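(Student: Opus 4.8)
The plan is to reduce the claimed identity to the completeness (resolution-of-identity) relation for the orthonormal operator basis obtained by rescaling $\mathcal E$. First I would observe that, since $\mathcal E$ is a basis of the $q^2$-dimensional space $L(\mathfrak H)$, it has exactly $q^2$ elements; each $E\in\mathcal E$ is unitary, so $\Tr(E^\dagger E)=q$, and together with trace-orthogonality (Definition~\ref{definition:error-basis}(2)) this says the operators $E/\sqrt q$ are pairwise orthonormal with respect to the Hilbert--Schmidt inner product $\langle A,B\rangle=\Tr(A^\dagger B)$. Being $q^2$ orthonormal vectors in a $q^2$-dimensional space, $\{E/\sqrt q:E\in\mathcal E\}$ is therefore an orthonormal basis of $L(\mathfrak H)$.

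Next I would pass to the fixed basis $\{\ket x\}$ and write $E_{xy}=\bra x E\ket y$. Orthonormality is the statement $\sum_{x,y}\overline{E_{xy}}\,F_{xy}=q\,\delta_{E,F}$; that is, the vectorizations $\mathrm{vec}(E)\in\mathbb C^{q^2}$ form an orthogonal basis of $\mathbb C^{q^2}$ with each vector of norm $\sqrt q$. The corresponding completeness relation is the ``second orthogonality relation'' $\sum_{E\in\mathcal E} E_{x_1 y_1}\,\overline{E_{x_2 y_2}}=q\,\delta_{x_1 x_2}\,\delta_{y_1 y_2}$, and this is essentially the only nontrivial input. I would then simply compare matrix elements of the two sides of the asserted identity in the product basis $\{\ket{x_1}\otimes\ket{x_2}\}$: since $\ket\beta=\frac1{\sqrt q}\sum_a\ket a\otimes\ket a$ we get $\bra{x_1 x_2}\ketbra\beta\beta\ket{y_1 y_2}=\frac1q\delta_{x_1 x_2}\delta_{y_1 y_2}$, while $\bra{x_1 x_2}\big(\frac1{q^2}\sum_E E\otimes E^*\big)\ket{y_1 y_2}=\frac1{q^2}\sum_E E_{x_1 y_1}\,\overline{E_{x_2 y_2}}$ because $E^*$ is the entrywise conjugate in this same basis. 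The second orthogonality relation shows these coincide.

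Equivalently, and more invariantly, for any orthonormal operator basis $\{M_i\}$ of $L(\mathfrak H)$ one has $\sum_i M_i\otimes M_i^*=\sum_{a,b}\ketbra a b\otimes\ketbra a b=q\,\ketbra\beta\beta$, and specializing to $M_i=E/\sqrt q$ yields the stated formula. The only real subtlety — the ``main obstacle'' such as it is — is the normalization bookkeeping together with the logical direction of the orthogonality relation: the hypotheses give orthogonality of the $E$'s among themselves, and one must note that in a space of matching dimension this upgrades (after the $\sqrt q$ rescaling forced by unitarity) to a genuine orthonormal basis, from which the dual completeness relation follows. One should also keep track that the $*$ in $E^*$ denotes complex conjugation of matrix entries in exactly the basis used to define $\ket\beta$, so that the conjugate lands on the correct tensor factor.
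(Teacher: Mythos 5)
Your argument is correct and complete. The paper itself does not prove this lemma --- it simply imports it as Lemma~VII.1 of the cited tensor-enumerator paper --- so there is no in-text proof to compare against, but your route is the standard one: conditions (1)--(2) of Definition~\ref{definition:error-basis} plus unitarity make $\{E/\sqrt q\}$ an orthonormal basis of $L(\mathfrak H)$ under the Hilbert--Schmidt inner product, the dual completeness relation $\sum_E E_{x_1y_1}\overline{E_{x_2y_2}}=q\,\delta_{x_1x_2}\delta_{y_1y_2}$ follows, and matching matrix elements in the product basis finishes the job. Your normalization bookkeeping checks out ($\frac1{q^2}\cdot q = \frac1q$ on both sides), and your invariant reformulation $\sum_i M_i\otimes M_i^* = q\,\ketbra\beta\beta$ for any Hilbert--Schmidt orthonormal basis $\{M_i\}$ is a clean way to see that only orthonormality, not the commutation structure $\omega(E,F)$, is used. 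For completeness, an alternative proof (suggested by the lemma's role as a ``stabilizer'' statement) uses the transpose trick $(E\otimes I)\ket\beta=(I\otimes E^{T})\ket\beta$ to show each $E\otimes E^*$ fixes $\ket\beta$, notes that $\{E\otimes E^*\}$ forms a group isomorphic to the abelian index group, and identifies the group average $\frac1{q^2}\sum_E E\otimes E^*$ with the projector onto the (one-dimensional) fixed subspace; that version does lean on condition (3), whereas yours does not, making yours marginally more general.
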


 Our goal is to analyze quantum circuits, particularly those used for quantum error correction, and so we focus on the Pauli basis $\mathcal{P}_q$ of $\mathbb{C}^q$, which for $q > 2$ is defined by
\begin{equation}
    \mathcal{P}_q = \{ X_q^\alpha Z_q^\beta \::\: \alpha,\beta = 0, \dots, q-1 \},
\end{equation}
where 
\begin{equation}\label{eq:quantum_error_basis}
    X_q = \sum_{x=0}^{q-1} \ketbra{x+1\:\text{(mod $q$)}}{x} \text{\ and\ } Z_q = \sum_{x=0}^{q-1} \zeta_q^x \ketbra{x}{x},
\end{equation}
with $\zeta_q = e^{2\pi i/q}$. For $q=2$ we instead take the usual Pauli operators $\mathcal{P}_2 = \{I,X,Y,Z\}$. When the local dimension $q$ is understood or is irrelevant, we drop it from our notation.

As is usual in quantum information, classical information is encoded as diagonal operators. In the case of states, these are density operators; in the case of our error bases, these will be powers of the Pauli $Z$-operator. For a classical system with data $\{0,\dots, N-1\}$, we have the Hilbert space $\mathbb{C}^N$ with error bases $\{Z_N^\alpha\}_{\alpha=0}^{N-1}$ with
\begin{equation}\label{eq:classical_error_basis_zn}
   Z_N = \begin{pmatrix}
   1 & 0 & \cdots & 0 \\
   0 & \zeta_N & \cdots & 0 \\
   \vdots & \vdots & \ddots & \vdots \\
   0 & 0 & \cdots & \zeta_N^{N-1}
   \end{pmatrix},
\end{equation}
where, as above $\zeta_N = e^{2\pi i/N}$, and hence $\alpha \in \{0, \dots, N-1\}$:
\begin{equation}\label{eq:powers_of_classical_error_basis_zn}
   Z_N^\alpha = \begin{pmatrix}
   1 & 0 & \cdots & 0 \\
   0 & \zeta_N^\alpha & \cdots & 0 \\
   \vdots & \vdots & \ddots & \vdots \\
   0 & 0 & \cdots & \zeta_N^{(N-1)\alpha}
   \end{pmatrix}.
\end{equation}

Systems that consist of multiple subsystems are obtained through the tensor product. For example, given a local system with Hilbert space $\mathfrak{H}$ and error basis $\mathcal{E}$, a quantum code of length $n$ will have as its Hilbert space $\mathfrak{H}^{\otimes n}$, for which we can take the error basis
\begin{equation}
    \mathcal{E}^n = \{ E_1\otimes \cdots \otimes E_n \::\: E\in\mathcal{E}\}.
\end{equation}

We will often be faced with the task of tracking phases introduced by commuting Pauli operators; to unify across local dimensions $q$ we introduce the notation, which is consistent with its use in Definition~\ref{definition:error-basis} for general error bases, 
\begin{equation}
    \omega(P,Q) = \tfrac{1}{q}\Tr(P^\dagger Q^\dagger P Q).
\end{equation}
In particular $\omega(P,Q) = 1$ if and only if $P$ and $Q$ commute.

\begin{definition}[Phase relative to the Pauli error basis]\label{def:rel_pauli_phase}
    For a $n$-qubit Pauli operator $P$, we define its phase relative to the Pauli error basis as $\mu$:
    \begin{equation}
    \mu(P) = \begin{cases}
     1 & P \in \mathcal{P}^n\\
    -1 & -P \in \mathcal{P}^n\\
    -i & iP \in \mathcal{P}^n\\
    i & -iP \in \mathcal{P}^n
    \end{cases}.
    \end{equation}
\end{definition}

Note that this also applies to hybrid quantum/classical systems. For example, when performing nondemolition measurement of a qubit state, the resulting system consist of the post-measurement quantum state and the classical bit obtain from the measurement readout. Hence the Hilbert space is $\mathbb{C}^2\otimes \mathbb{C}^2$ with error basis
\begin{equation}
    \{ P\otimes Z^\alpha \::\: P\in\{I,X,Y,Z\},\ \alpha\in\{0,1\}\}.
\end{equation}

All of our examples will be qubit circuits, $q=2$. In fact, these will mostly be Clifford circuits, built from the Hadamard $H$, phase gate $S$, and CNOT. But as mentioned in the introduction, the power of our methods will also allow us to treat the $\frac{\pi}{8}$-gate $T$, as well as non-unitary operations such as state preparation $\mathcal{SP}_{\ket{\psi}}$ and destructive and nondemolition measurement, $\mathcal{MD}_{P}$ and $\mathcal{MP}_{P}$, with respect to a Pauli operator $P$.

Measurement is an example of a general quantum channel. Our notation for a quantum channel is $\mathcal{A}:\mathfrak{H} \leadsto \mathfrak{K}$, where it is important to recognize that $\mathcal{A}$ is not a function on the Hilbert space itself, but rather on operators on the Hilbert space. Applied on a density operator $\rho$ on $\mathfrak{H}$, the channel has operator sum, or Choi-Kraus, form
\begin{equation}
    \mathcal{A}(\rho) = \sum_{j} A_j \rho A_j^\dagger,
\end{equation}
where the Kraus operators $\{A_j\}$ need not be unique \cite[\S{8.2}]{Nielsen_Chuang_2010}. The analysis of error models uses quantum channels; for example the decoherence channel is given by $\mathcal{D}(\rho) = \sum_{P\in\mathcal{P}} P\rho P^\dagger$, which is closes related to the uniform Pauli error channel
\begin{equation}
    \mathcal{D}_p(\rho) = (1-\tfrac{3p}{4}) \rho + \tfrac{p}{4}(X\rho X + Y\rho Y + Z\rho Z).
\end{equation}

Our main application is the fault-tolerance analysis of stabilizer codes. We will write $\mathfrak{C}$ for such a code, with its stabilizer denoted as $\mathcal{S} = \mathcal{S}(\mathfrak{C}) = \langle S_1, \cdots, S_{n-k}\rangle$ where the generators $S_j$ are assumed independent. As usual we call $n$ the length and $k$ the dimension of $\mathfrak{C}$. The syndrome $s = (s_1, \dots, s_{n-k}) \in \{\pm 1\}^{n-k}$, is the results of projective measurements of the stabilizer generators; for an error operator $P$ this means $s_j = \omega(P,S_j)$.

The normalizer of $\mathfrak{C}$ is defined as
\begin{equation}\label{eq:N-definition}
    \mathcal{N} = \mathcal{N}(\mathfrak{C}) = \{ P\in \mathcal{P}^n \::\: \omega(P,S) = 1 \text{ for all $S\in \mathcal{S}$}\}.
\end{equation}
Elements of the $\mathcal{N}$ preserve $\mathfrak{C}$, and so (when not in $\mathcal{S}$) are naturally associated with logical errors. The definition~\eqref{eq:N-definition} of $\mathcal{N}$ can also be viewed as stating that $\mathcal{N}$ is the dual to $\mathcal{S}$ where ``orthogonality'' in this context is commutativity. That is $\mathcal{N}$ is the set of all the Pauli operators that are orthogonal to $\mathcal{S}$ in this sense. While not obvious the converse holds,
\begin{equation}
    \mathcal{S} = \{ P\in \mathcal{P}^n \::\: \omega(P,N) = 1 \text{ for all $N\in \mathcal{N}$}\}.
\end{equation}
In particular, we will often make use of the relations
\begin{equation}\label{eq:stabilizer-duality1}
    \sum_{D\in\mathcal{S}(\mathfrak{C})} \omega(D,E) =\begin{cases}    
     2^{n-k} & \text{if $E \in \mathcal{N}(\mathfrak{C})$,}\\
     0 & \text{otherwise,}
     \end{cases}
\end{equation}
and
\begin{equation}\label{eq:stabilizer-duality2}
    \sum_{D\in\mathcal{N}(\mathfrak{C})} \omega(D,E) = \begin{cases}  
     2^{n+k} & \text{if $E \in \mathcal{S}(\mathfrak{C})$,}\\
     0 & \text{otherwise.}
     \end{cases}
\end{equation}

\begin{corollary}\label{crol:alpha_stab}
Let $\ket\psi \in (\mathbb{C}^2)^{\otimes n}$ be a stabilizer state with stabilizer group $\mathcal{S}$. Each element of $\mathcal{S}$ is a Pauli operator, but it need not be in the positive Pauli basis $\mathcal{P}^n$ as it might include a nontrivial phase $-1$, $i$, or $-i$. Yet, we always have $-I \not\in \mathcal{S}$ as $\ket\psi$ is a $+1$ eigenstate for all elements of $\mathcal{S}$. It follows then $\pm iP \not\in \mathcal{S}$ for any $P \in \mathcal{P}^n$ as otherwise $(\pm iP)^2 = -I \in \mathcal{S}$. Hence any element $S$ of $\mathcal{S}$ can only have phase $\pm 1$ relative to $\mathcal{P}^n$, meaning $\mu(S)\in\{1,-1\}$.
\end{corollary}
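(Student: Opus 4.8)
The statement is essentially self-contained, so the plan is simply to formalize the three observations it already indicates and to be explicit about which properties of a stabilizer state are being used. First I would recall the defining data: $\mathcal{S}$ is an abelian subgroup of the $n$-qubit Pauli group $\langle iI, X_1, Z_1, \dots, X_n, Z_n\rangle$, and every $S\in\mathcal{S}$ fixes the state, $S\ket\psi=\ket\psi$. From the latter, $-I\notin\mathcal{S}$, since $(-I)\ket\psi=-\ket\psi\ne\ket\psi$ for the nonzero vector $\ket\psi$; put differently, requiring $\ket\psi$ to be a simultaneous $+1$-eigenvector of all of $\mathcal{S}$ forces $\mathcal{S}$ to avoid $-I$.

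Second, I would invoke the factorization $M=\mu(M)\,P$ with $P\in\mathcal{P}^n$ from Definition~\ref{def:rel_pauli_phase}, together with the fact that at local dimension $q=2$ every element of $\mathcal{P}^n$ squares to the identity (each tensor factor lies in $\{I,X,Y,Z\}$ and $X^2=Y^2=Z^2=I$). Hence for any $S\in\mathcal{S}$ we have $S^2=\mu(S)^2P^2=\mu(S)^2 I$. Because $\mathcal{S}$ is a group, $S^2\in\mathcal{S}$, so $\mu(S)^2 I\in\mathcal{S}$; the exclusion of $-I$ rules out $\mu(S)^2=-1$, i.e.\ $\mu(S)\in\{i,-i\}$. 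Therefore $\mu(S)\in\{1,-1\}$ for every $S\in\mathcal{S}$, which is the assertion.

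There is no substantive obstacle here: the only points requiring care are invoking the correct notion of ``stabilizer state'' — in particular that $\mathcal{S}$ is a genuine group fixing $\ket\psi$, which is exactly what drives both the exclusion of $-I$ and the closure step $S^2\in\mathcal{S}$ — and noting that the squaring identity $P^2=I$ is specific to the qubit Pauli basis used throughout the paper, so the argument would need adjustment for a general nice error basis.
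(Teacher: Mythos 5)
Your argument is correct and follows essentially the same route as the paper: exclude $-I$ from $\mathcal{S}$ using the $+1$-eigenstate condition, then use closure under squaring together with $P^2=I$ for qubit Paulis to rule out $\mu(S)=\pm i$. The only cosmetic difference is that you make the group-closure step and the qubit-specific identity $P^2=I$ explicit, which the paper leaves implicit.
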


The quantum weight enumerators \cite{shor1997quantum}
\begin{align}
    \label{eq:Shor-Laflamme-A} A(z;M_1,M_2) &= \sum_{P\in \mathcal{P}^n} \Tr(PM_1)\Tr(PM_2) z^{\mathrm{wt}(P)},\\
    \label{eq:Shor-Laflamme-B} B(z;M_1,M_2) &= \sum_{P\in \mathcal{P}^n} \Tr(PM_1 PM_2) z^{\mathrm{wt}(P)}
\end{align}
capture information about errors; when $M_1 = M_2 = \Pi_{\mathfrak{C}}$, the orthogonal projection onto $\mathfrak{C}$, these enumerate the Pauli operators in $\mathcal{S}(\mathfrak{C})$ and $\mathcal{N}(\mathfrak{C})$ respectively of each weight.

These weight enumerators $A$ and $B$ above are connected by a quantum MacWilliams transform \cite{shor1997quantum}. After homogenizing $A(w,z) = w^n A(z/w)$, and similarly for $B$, this reads:
\begin{equation}
    B(w,z;M_1,M_2) = A\left(\tfrac{w+3z}{2}, \tfrac{w-z}{2}; M_1, M_2\right).
\end{equation}
This identity has been extended to local dimension $q>2$ \cite{rains1998quantum}, the quantum analogue of the complete enumerator \cite{hu2019complete, hu2020weight}, and vector and tensor enumerators over these \cite{cao2023quantum}.

In this work, our circuit tensor is related to the total tensor enumerators of \cite{cao2023quantum},
\begin{align}
    A(z;M_1,M_2) &= \sum_{P,Q\in \mathcal{P}^n} \Tr(PM_1)\Tr(QM_2) e_{P,Q},\\
    B(z;M_1,M_2) &= \sum_{P,Q\in \mathcal{P}^n} \Tr(PM_1 QM_2) e_{P,Q}.
\end{align}
Where $e_{P,Q}$ are formal basis elements of the vector space of  $4^n\times 4^n$ matrices indexed by a pair of Pauli operators, which is of dimension $4^{2n}$. Here we are interested in the operational character of this matrix, so we write these as $e\errorbasis{P}{Q}$.

As a final note, we will occasionally run into the trivial system. For example when preparing a quantum state, the initial system is trivial. The associated Hilbert space of this system is just $\mathbb{C}$, for which there is only a single error operator $1$. We will usually suppress the notation of this Hilbert space and error basis for readability. Namely, for the case of state preparation, we will simply write $e\errorbasis{}{P}$ instead of $e\errorbasis{1}{P}$.

\section{Circuit tensor of a linear map}\label{section:ct_linear_map}

To apply tensor enumerator methods to problems in fault-tolerance we will need to construct circuit tensors for general quantum channels. Nonetheless, we find it instructive to first define and analyze the analogous objects for just a single linear operator. In this section, we define the Choi state \cite{delfosse2023spacetime} and circuit tensor for a linear operator and prove composition laws for these. We also apply this formalism to some key examples such as state preparation and unitary operations. 

Let $\mathfrak{H}$ and $\mathfrak{K}$ be finite dimensional Hilbert spaces; we write $L(\mathfrak{H},\mathfrak{K})$ for the space of linear operators from $\mathfrak{H}$ to $\mathfrak{K}$.

\begin{definition}[Choi state of a linear operator]
    Let $A\in L(\mathfrak{H},\mathfrak{K})$ be any linear operator, $\mathcal{E}$ an error basis on $\mathfrak{H}$, and $\beta$ the Bell state of $\mathcal{E}$. Then the \emph{Choi state} of $A$ is 
    \begin{equation}
    \ket{T_A} = (I \otimes A)\ket\beta = \tfrac{1}{\sqrt{\dim(\mathfrak{H})}}\sum_{x} \ket{x} \otimes A\ket{x} \in \mathfrak{H} \otimes \mathfrak{K}.
    \end{equation}
\end{definition}

Note that the Choi state need not be properly normalized. In particular, $\braket{T_A}{T_B} = \frac{1}{\dim(\mathfrak{H})} \tr(A^\dagger B)$, and so orthonormality of Choi states is directly related to that of their underlying operators in the trace norm. In particular, a Choi state will be unit length if $A$ is an isometry.

\begin{example}[States and effects]\label{example:state-prep-choi-state}
    Let $\ket\psi \in \mathfrak{H}$ be a pure state. View this as the operation of state preparation: a linear map $\mathbb{C} \to \mathfrak{H}$ given by $\alpha \mapsto \alpha\ket\psi$. Then under the identification of $\mathbb{C}\otimes\mathfrak{H} = \mathfrak{H}$, one has
    $\ket{T_{\ket\psi}} = \ket\psi$.

    Dually, the effect $\bra\psi$ is a linear map $\mathfrak{H} \to \mathbb{C}$. Operationally, this is interpreted as post-selection upon measuring $\ket{\psi}$. Then 
    \begin{equation}
        \ket{T_{\bra\psi}} = \tfrac{1}{\sqrt{\mathrm{dim}(\mathfrak{H}})} \sum_{x} \ket{x}\braket{\psi}{x}.
    \end{equation}
\end{example}

The key link to tensor networks throughout what follows is that the composition of operators can be realized through the trace of tensors. This is of course already apparent in matrix-product states, see for example \cite{cirac2021matrix}. The following simple, yet central, result can be viewed as just expressing this from the perspective of the Jamio{\l}kowski-Choi duality \cite{zyczkowski2004duality,jiang2013channel}.

\begin{proposition}[Choi state for a compositaion of linear operator]\label{proposition:Choi:composition}
    Let $A \in L(\mathfrak{H},\mathfrak{K})$ and $B \in L(\mathfrak{K},\mathfrak{L})$, and $\ket\beta \in \mathfrak{K}\otimes\mathfrak{K}$ the Bell state. Then
    \begin{equation}
        (I_{\mathfrak{H}} \otimes \bra\beta \otimes I_{\mathfrak{L}})(\ket{T_A}\otimes \ket{T_B}) = \tfrac{1}{\dim(\mathfrak{K})}\ket{T_{BA}}.
    \end{equation}
\end{proposition}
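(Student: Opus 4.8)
The plan is to expand both Choi states in the defining sum-over-basis form and contract the middle Bell state directly. Write $\ket{T_A} = \frac{1}{\sqrt{\dim\mathfrak{H}}}\sum_x \ket{x}\otimes A\ket{x} \in \mathfrak{H}\otimes\mathfrak{K}$ and $\ket{T_B} = \frac{1}{\sqrt{\dim\mathfrak{K}}}\sum_y \ket{y}\otimes B\ket{y} \in \mathfrak{K}\otimes\mathfrak{L}$, so that
\begin{equation*}
\ket{T_A}\otimes\ket{T_B} = \tfrac{1}{\sqrt{\dim\mathfrak{H}}\sqrt{\dim\mathfrak{K}}}\sum_{x,y} \ket{x}\otimes A\ket{x}\otimes\ket{y}\otimes B\ket{y}.
\end{equation*}
The operator $I_{\mathfrak{H}}\otimes\bra\beta\otimes I_{\mathfrak{L}}$ acts on the middle two tensor factors, which live in $\mathfrak{K}\otimes\mathfrak{K}$ exactly where $\bra\beta = \frac{1}{\sqrt{\dim\mathfrak{K}}}\sum_z \bra{z}\otimes\bra{z}$ is defined.

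Next I would apply the contraction: $(\bra\beta)(A\ket{x}\otimes\ket{y}) = \frac{1}{\sqrt{\dim\mathfrak{K}}}\sum_z \braket{z}{Ax}\braket{z}{y}$. The second inner product collapses the $z$-sum to $z=y$, leaving $\frac{1}{\sqrt{\dim\mathfrak{K}}}\braket{y}{Ax} = \frac{1}{\sqrt{\dim\mathfrak{K}}}\bra{y}A\ket{x}$. Summing over $y$ against the surviving $\ket{x}\otimes(\cdot)\otimes B\ket{y}$ gives $\sum_y \bra{y}A\ket{x}\, B\ket{y} = B\left(\sum_y \ketbra{y}{y}\right)A\ket{x} = BA\ket{x}$, using completeness of the $\{\ket{y}\}$ basis of $\mathfrak{K}$. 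Collecting the prefactors, the result is $\frac{1}{\sqrt{\dim\mathfrak{H}}\,\dim\mathfrak{K}}\sum_x \ket{x}\otimes BA\ket{x} = \frac{1}{\dim\mathfrak{K}}\ket{T_{BA}}$, since $\ket{T_{BA}} = \frac{1}{\sqrt{\dim\mathfrak{H}}}\sum_x\ket{x}\otimes BA\ket{x}$ and $BA \in L(\mathfrak{H},\mathfrak{L})$.

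There is no real obstacle here; the only thing to be careful about is bookkeeping of which tensor factor each operator acts on and tracking the three separate normalization constants ($\frac{1}{\sqrt{\dim\mathfrak{H}}}$ from $\ket{T_A}$, $\frac{1}{\sqrt{\dim\mathfrak{K}}}$ from $\ket{T_B}$, and $\frac{1}{\sqrt{\dim\mathfrak{K}}}$ from $\bra\beta$), which combine to the claimed $\frac{1}{\dim\mathfrak{K}}$ relative to the normalization built into $\ket{T_{BA}}$. An alternative, essentially equivalent, route is to invoke Lemma~\ref{lemma:bell-state-stabilizer} to rewrite $\ketbra\beta\beta$ as $\frac{1}{q^2}\sum_{E} E\otimes E^*$ and use trace-orthogonality of the error basis to perform the contraction operator-algebraically, but the direct basis computation above is shorter and more transparent.
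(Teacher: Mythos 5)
Your proposal is correct and follows essentially the same route as the paper's proof: expand $\ket{T_A}$ and $\ket{T_B}$ in the basis, contract the middle two factors against $\bra\beta$, collapse one sum by orthonormality, and use completeness $\sum_y \ketbra{y}{y}=I_{\mathfrak{K}}$ to produce $BA$, with the three normalization factors combining to $\tfrac{1}{\dim(\mathfrak{K})}$. Nothing further is needed.
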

\begin{proof}
    We compute, albeit with some abuse of notation,
\begin{align}
    \nonumber &(I \otimes \bra{\beta} \otimes I) (\ket{T_A}\otimes \ket{T_B})\\
    \nonumber &= \tfrac{\dim(\mathfrak{H})^{-0.5}}{\dim(\mathfrak{K})} \sum_x \bigg\langle xx\bigg|_{2,3} \left(\sum_{y,z} \ket{y} \otimes A\ket{y} \otimes \ket{z} \otimes B\ket{z}\right)\\
    \nonumber &= \tfrac{1}{\dim(\mathfrak{K})\sqrt{\dim(\mathfrak{H})}} \sum_{x,y,z} \ket{y} \otimes \bra{x} A \ket{y} \otimes \braket{x}{z} \otimes B\ket{z}\\
    \nonumber &= \tfrac{1}{\dim(\mathfrak{K})\sqrt{\dim(\mathfrak{H})}} \sum_{x,y} \ket{y} \otimes B\ket{x} \bra{x} A \ket{y}\\
    &= \tfrac{1}{\dim(\mathfrak{K})\sqrt{\dim(\mathfrak{H})}} \sum_{y} \ket{y} \otimes BA\ket{y} = \tfrac{1}{\dim(\mathfrak{K})} \ket{T_{BA}}.
\end{align}
\end{proof}

We introduce the \emph{circuit tensor} of a linear operator by using its Choi state. One can view this object as a matrix representation of the operator using elements of error bases for indices. It has appeared in the literature under various names, for example the ``channel representation'' (of a unitary) in \cite[\S{2.3}]{Gosset2014}.

\begin{definition}[Circuit tensor of a linear operator]
    Let $A \in L(\mathfrak{H},\mathfrak{K})$, and let $\mathcal{E}$ and $\mathcal{E}'$ be error bases of $\mathfrak{H}$ and $\mathfrak{K}$ respectively. The \emph{circuit tensor} of $A$ is
    \begin{equation}\label{eqn:circuit-tensor-definition}
    \circuittensor{A}{E}{E'} = \Tr((E^* \otimes E')\ketbra{T_A}{T_A}),
    \end{equation}
    where $E \in \mathcal{E}$ and $E'\in \mathcal{E}'$.
\end{definition}

As above if $\mathcal{E}$ and $\mathcal{E}'$ are error bases of $\mathfrak{H}$ and $\mathfrak{K}$ respectively, we write $e^E_{E'}$ for the matrix unit of $(E,E')$: the $\dim(\mathfrak{H})^2\times\dim(\mathfrak{K})^2$ matrix, indexed by elements of $\mathcal{E}$ and $\mathcal{E}'$, with $1$ in its $(E,E')$-entry and $0$ elsewhere. We can then write the circuit tensor in component-free form:
\begin{equation}
    \circuitoperator{A} = \sum_{E\in \mathcal{E},E'\in \mathcal{E}'} \Tr((E^* \otimes E')\ketbra{T_A}{T_A}) e^E_{E'}.
\end{equation}

\begin{example}[State and effects, continued]\label{example:state-prep-circuit-tensor}
In Example \ref{example:state-prep-choi-state} above, we saw the Choi state associated with preparing or post-selected on a state is essentially that state itself. The domain of state preparation is $\mathbb{C}$ whose only error basis is $\{1\}$. The circuit tensor of state preparation then has only a single row, and so we suppress the use of $1$ as its index. That is:
\begin{equation}\label{eq:state-prep-circuit-tensor}
    \circuittensor{\ket\psi}{}{E} = \Tr(E \ketbra{T_{\ket\psi}}{T_{\ket\psi}}) = \bra\psi E \ket\psi.
\end{equation}
In particular, state preparation of qubit state $\ket\psi$ in the Pauli error basis has the circuit tensor
\begin{equation}\label{eq:state_prep_ct}
\circuitoperator{\ket\psi} = e\errorbasis{}{I} + \bra\psi X \ket\psi e\errorbasis{}{X} + \bra\psi Y \ket\psi e\errorbasis{}{Y} + \bra\psi Z \ket\psi e\errorbasis{}{Z}.
\end{equation}
Preparation of Pauli eigenstates then have the circuit tensors
\begin{equation}\label{eq:ct:state_prep}
    \begin{array}{r@{\:}c@{\:}lr@{\:}c@{\:}l}
        \circuitoperator{\ket{+}} &=& e\errorbasis{}{I} + e\errorbasis{}{X}, &
        \circuitoperator{\ket{-}} &=& e\errorbasis{}{I} - e\errorbasis{}{X},\\
        \circuitoperator{\ket{+i}} &=& e\errorbasis{}{I} + e\errorbasis{}{Y}, &
        \circuitoperator{\ket{-i}} &=& e\errorbasis{}{I} - e\errorbasis{}{Y},\\
        \circuitoperator{\ket{0}} &=& e\errorbasis{}{I} + e\errorbasis{}{Z}, &
        \circuitoperator{\ket{1}} &=& e\errorbasis{}{I} - e\errorbasis{}{Z}.      
    \end{array}
\end{equation}

Dually, the range of post-selection on $\ket\psi$ is $\mathbb{C}$, and so its circuit tensor has only a single column:
\begin{align}
    \nonumber &\circuittensor{\bra\psi}{E}{} = \Tr(E^* \ketbra{T_{\bra\psi}}{T_{\bra\psi}})\\
    \nonumber &\quad = \tfrac{1}{\dim{\mathfrak{H}}}\sum_{xy} \braket{y}{\psi}\bra{y}E^*\ket{x}\braket{\psi}{x}\\
    &\quad = \tfrac{1}{\dim{\mathfrak{H}}}\sum_{xy} \braket{\psi}{x}\bra{x}E\ket{y} \braket{y}{\psi} = \tfrac{1}{\dim{\mathfrak{H}}} \bra\psi E \ket\psi.
\end{align}
And so for a qubit effect in the Pauli basis
\begin{equation}\begin{split}
&\circuitoperator{\bra\psi}\\ 
&\quad = \frac{1}{2}\left(e\errorbasis{I}{} + \bra\psi X \ket\psi e\errorbasis{X}{} + \bra\psi Y \ket\psi e\errorbasis{Y}{} + \bra\psi Z \ket\psi e\errorbasis{Z}{}\right).
\end{split} \end{equation}

\end{example}

\begin{example}[Circuit tensor of a Bell state]\label{ex:bell_sp}
Preparation of the Bell state $\ket \beta = \frac{1}{\sqrt{2}}(\ket{00} + \ket{11})$ has the circuit tensor
\begin{equation}\label{eq:ct_bell_sp}
     \circuitoperator{\ket\beta} = e\errorbasis{}{II} + e\errorbasis{}{XX} - e\errorbasis{}{YY} + e\errorbasis{}{ZZ}.
\end{equation}

\end{example} 

Both of the previous two examples illustrate the circuit tensors of stabilizer states and indicate a link between the tensors and the stabilizers associated with the state. In generality, let $\ket\psi \in (\mathbb{C}^2)^{\otimes n}$ be an $n$-qubit stabilizer state with stabilizer group $\mathcal{S}$. As in Corollary~\ref{crol:alpha_stab}, we have $\mu:\mathcal{S}\to \{+1,-1\}$ so that for each $S \in \mathcal{S}$ we have $\mu(S)S \in \mathcal{P}^n$. Thus by writing $\ketbra\psi\psi = \frac{1}{2^n}\sum_{S\in\mathcal{S}} S$, the circuit tensor for preparing $\ket\psi$ is
\begin{align}
    \circuittensor{\ket\psi}{}{E} &= \Tr(E\ketbra\psi\psi) = \frac{1}{2^n}\sum_{S\in\mathcal{S}} \Tr(ES)\nonumber\\
    &= \begin{cases}
     \mu(E) & \text{if $E \in \mathcal{S}$,}\\
     0 & \text{otherwise,}\end{cases}
\end{align}
as all nonidentity Pauli operators have trace zero. Therefore we have proven the following result:

\begin{proposition}[Circuit tensor of a stabilizer state]\label{proposition:stabilizer-state-tensor}
    Let $\ket\psi \in (\mathbb{C}^2)^{\otimes n}$ be a stabilizer state with stabilizer group $\mathcal{S}$, and $\mu:\mathcal{S}\to \{+1,-1\}$ be the phase relative to the Pauli basis, see Corollary~\ref{crol:alpha_stab}. Then 
    \begin{equation}
        \circuitoperator{\ket\psi} = \sum_{S\in\mathcal{S}} \mu(S) e\errorbasis{}{S}.
    \end{equation}
\end{proposition}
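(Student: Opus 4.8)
The plan is to evaluate the single-row circuit tensor of state preparation directly, combining the computation from Example~\ref{example:state-prep-circuit-tensor} with the stabilizer decomposition of the projector onto $\ket\psi$. Concretely, for each $E\in\mathcal{P}^n$ we have $\circuittensor{\ket\psi}{}{E}=\Tr(E\,\ketbra{T_{\ket\psi}}{T_{\ket\psi}})=\bra\psi E\ket\psi=\Tr(E\,\ketbra\psi\psi)$, so the whole statement reduces to computing $\Tr(E\,\ketbra\psi\psi)$ for positive Pauli operators $E$.

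Next I would insert the standard identity $\ketbra\psi\psi=\frac{1}{2^n}\sum_{S\in\mathcal{S}}S$, valid because $\ketbra\psi\psi$ is the orthogonal projection onto the common $+1$-eigenspace of the $2^n$ elements of $\mathcal{S}$. This turns the entry into $\frac{1}{2^n}\sum_{S\in\mathcal{S}}\Tr(ES)$, and it remains to evaluate $\Tr(ES)$ for $E\in\mathcal{P}^n$ and $S\in\mathcal{S}$. By Corollary~\ref{crol:alpha_stab} each $S\in\mathcal{S}$ has phase $\mu(S)\in\{+1,-1\}$, so $\mu(S)S\in\mathcal{P}^n$ and $S=\mu(S)\,(\mu(S)S)$. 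Trace-orthogonality of $\mathcal{P}^n$ (item~2 of Definition~\ref{definition:error-basis}), together with $\Tr(P^\dagger P)=2^n$ and $P^\dagger=P$ for qubit Paulis, gives $\Tr(ES)=\mu(S)\,\Tr\!\big(E\,(\mu(S)S)\big)=2^n\mu(S)$ when $E=\mu(S)S$ and $0$ otherwise.

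Finally I would collect terms. For a fixed $E\in\mathcal{P}^n$ there is at most one $S\in\mathcal{S}$ with $\mu(S)S=E$ — if both $E\in\mathcal{S}$ and $-E\in\mathcal{S}$ then $-I\in\mathcal{S}$, contradicting Corollary~\ref{crol:alpha_stab} — so $\circuittensor{\ket\psi}{}{E}$ equals $\mu(S)$ for that $S$ when it exists and $0$ otherwise, i.e. $\circuittensor{\ket\psi}{}{E}=\mu(E)$ exactly when $E\in\mathcal{S}$ (reading $\mathcal{S}$ with its phases) and $0$ elsewhere. Reassembling the component-free form $\circuitoperator{\ket\psi}=\sum_{E\in\mathcal{P}^n}\circuittensor{\ket\psi}{}{E}\,e\errorbasis{}{E}$ and reindexing the sum over the group $\mathcal{S}$ — identifying $e\errorbasis{}{S}$ with $e\errorbasis{}{\mu(S)S}$ up to the scalar $\mu(S)$ — yields $\sum_{S\in\mathcal{S}}\mu(S)\,e\errorbasis{}{S}$, as claimed.

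I do not anticipate a genuine obstacle; the only point requiring care is the phase bookkeeping between the group $\mathcal{S}$, whose elements may carry a $-1$ phase, and the index set $\mathcal{P}^n$ of the circuit tensor, which contains only positive Paulis. Corollary~\ref{crol:alpha_stab} (no order-four phases and $-I\notin\mathcal{S}$) is precisely what makes this identification unambiguous, so no further input is needed.
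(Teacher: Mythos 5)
Your proposal is correct and follows essentially the same route as the paper: reduce the entry to $\Tr(E\ketbra\psi\psi)$ via Example~\ref{example:state-prep-circuit-tensor}, expand $\ketbra\psi\psi=\tfrac{1}{2^n}\sum_{S\in\mathcal{S}}S$, and use trace-orthogonality of the Pauli basis together with Corollary~\ref{crol:alpha_stab} to extract the phase $\mu(S)$. The only difference is that you spell out the uniqueness of the $S$ with $\mu(S)S=E$ and the reindexing between $\mathcal{S}$ and $\mathcal{P}^n$, which the paper leaves implicit.
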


For a general linear operator $A \in L(\mathfrak{H},\mathfrak{K})$, we can expand its circuit tensor as follows.
\begin{align}\label{eq:circuit-tensor-to-B-enumerator}
    \nonumber &\circuittensor{A}{E}{E'} = \bra{T_A} E^*\otimes E' \ket{T_A}\\
    \nonumber &\quad = \tfrac{1}{\dim(\mathfrak{H})} \sum_{xy} (\bra{x} \otimes \bra{x}A^\dagger) (E^* \otimes E') (\ket{y} \otimes A\ket{y})\\
    \nonumber &\quad = \tfrac{1}{\dim(\mathfrak{H})} \sum_{xy} \bra{x}E^*\ket{y} \bra{x} A^\dagger E' A\ket{y}\\
    \nonumber &\quad = \tfrac{1}{\dim(\mathfrak{H})} \sum_{xy} \bra{y} E^\dagger \ket{x} \bra{x} A^\dagger E' A\ket{y}\\
    &\quad = \tfrac{1}{\dim(\mathfrak{H})} \Tr(E^\dagger A^\dagger E'A).
\end{align}
So, except for a different normalization, the circuit tensor has precisely the same form as the $\mathrm{B}$-tensor enumerator of \cite{cao2023quantum}. In that work however, the $\mathrm{B}$-tensor enumerator is only defined for Hermitian operators; here we see the circuit tensor is the natural generalization to any linear map $A$.

In Proposition~\ref{proposition:Choi:composition} above, The composition of operators had a simple, yet slightly awkward, representation in terms of Choi states. However, this becomes entirely natural in the language of circuit tensors. From \eqref{eq:circuit-tensor-to-B-enumerator} we see that
\begin{equation}\label{eq:ct_identity}
    \circuitoperator{I} = \sum_{E\in\mathcal{E}} e^E_E,
\end{equation}
 and hence the circuit tensor of the identity operator is the identity matrix. Moreover, the circuit tensor is multiplicative, albeit order-reversing, as shown by the following result.

\begin{theorem}[Composition of circuit tensors]
    For $A\in L(\mathfrak{H},\mathfrak{K})$ and $B\in L(\mathfrak{K},\mathfrak{L})$, we have $\circuitoperator{BA} = \circuitoperator{A}\circuitoperator{B}$. That is,
    \begin{equation}
        \circuittensor{BA}{E}{E'} = \sum_F \circuittensor{A}{E}{F}\: \circuittensor{B}{F}{E'}.
    \end{equation}
\end{theorem}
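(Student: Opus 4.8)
The plan is to reduce the composition law for circuit tensors to the composition law for Choi states already established in Proposition~\ref{proposition:Choi:composition}. The key observation is that, by definition, $\circuittensor{BA}{E}{E'} = \Tr((E^*\otimes E')\ketbra{T_{BA}}{T_{BA}})$, and Proposition~\ref{proposition:Choi:composition} expresses $\ket{T_{BA}}$ (up to the factor $\dim(\mathfrak{K})$) as the contraction of $\ket{T_A}\otimes\ket{T_B}$ against a Bell state on the middle copy of $\mathfrak{K}$. Substituting this in, $\circuittensor{BA}{E}{E'}$ becomes an inner product of $\ket{T_A}\otimes\ket{T_B}$ against itself, with $E^*$ acting on the $\mathfrak{H}$-factor, $E'$ on the $\mathfrak{L}$-factor, and a pair of Bell-state projectors $\ketbra{\beta}{\beta}$ sandwiched on the two $\mathfrak{K}$-factors, with an overall constant of $\dim(\mathfrak{K})^{-2}\cdot\dim(\mathfrak{K})^2 = 1$ once the normalizations are tracked (the $\Tr(E^\dagger A^\dagger E' A)/\dim(\mathfrak{H})$ form from \eqref{eq:circuit-tensor-to-B-enumerator} is an equally good starting point and may be cleaner).

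The crucial step is then to invoke Lemma~\ref{lemma:bell-state-stabilizer}, which says $\ketbra{\beta}{\beta} = \frac{1}{\dim(\mathfrak{K})^2}\sum_{F\in\mathcal{E}'}F\otimes F^*$ — wait, here the relevant error basis is $\mathcal{E}_{\mathfrak{K}}$ on the intermediate space, so $\ketbra{\beta}{\beta} = \frac{1}{q^2}\sum_F F\otimes F^*$ with $q = \dim(\mathfrak{K})$. Expanding the middle Bell projector in this Pauli-sum form turns the contraction into a single sum over $F\in\mathcal{E}_{\mathfrak{K}}$ of a product of two pieces: one piece involving $E^*$, $A$, and $F$ (which recognizably assembles into $\circuittensor{A}{E}{F}$ via the definition applied to $A$ with $F$ acting on the $\mathfrak{K}$-copy), and one piece involving $F^*$ or $F$, $B$, and $E'$ (which assembles into $\circuittensor{B}{F}{E'}$). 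The factorization follows because $\ket{T_A}$ lives on $\mathfrak{H}\otimes\mathfrak{K}$ and $\ket{T_B}$ on $\mathfrak{K}\otimes\mathfrak{L}$, so once the middle operators act independently on each tensor factor the expectation splits as a product. One must be careful that the $F$ appearing in the $A$-piece and the $F^*$ appearing in the $B$-piece are consistent with the definitions — in particular that the conjugation/dagger on the second index of $\circuitoperator{A}$ matches the bare $F$ on the first index of $\circuitoperator{B}$; checking this bookkeeping is where the normalization constants and complex conjugates must line up exactly.

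I expect the main obstacle to be precisely this index/conjugate/normalization bookkeeping: the circuit tensor definition \eqref{eqn:circuit-tensor-definition} has an $E^*$ on the domain side and an $E'$ on the range side, Proposition~\ref{proposition:Choi:composition} carries a $\dim(\mathfrak{K})^{-1}$, Lemma~\ref{lemma:bell-state-stabilizer} carries a $\dim(\mathfrak{K})^{-2}$, and the Choi states themselves carry $\dim(\mathfrak{H})^{-1/2}$ factors that must cancel to reproduce the $\dim(\mathfrak{H})^{-1}$ in \eqref{eq:circuit-tensor-to-B-enumerator}. Getting a clean proof will mean choosing whichever of the two equivalent forms of the circuit tensor (the Choi-state form or the trace form $\frac{1}{\dim\mathfrak{H}}\Tr(E^\dagger A^\dagger E' A)$) makes these cancellations most transparent. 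An alternative, possibly shorter route that sidesteps the Bell-state manipulation entirely: start from $\circuittensor{BA}{E}{E'} = \frac{1}{\dim(\mathfrak{H})}\Tr(E^\dagger A^\dagger B^\dagger E' B A)$ using \eqref{eq:circuit-tensor-to-B-enumerator}, insert the resolution of identity $I_{\mathfrak{K}} = \frac{1}{\dim(\mathfrak{K})}\sum_{F\in\mathcal{E}_{\mathfrak{K}}} \Tr(F^\dagger\,\cdot\,)F$ (the completeness of the error basis as an orthogonal operator basis) between $A^\dagger$ and $B^\dagger$, or rather to expand $B^\dagger E' B$ in the basis $\{F\}$, and observe that the resulting coefficients and the leftover trace are exactly $\circuittensor{B}{F}{E'}$ and $\circuittensor{A}{E}{F}$ respectively; this makes the sum over the intermediate error basis manifest and reduces everything to the orthogonality relation $\frac{1}{\dim\mathfrak{K}}\Tr(F^\dagger G) = \delta_{FG}$. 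I would likely present this trace-based argument as the main proof and relegate the Choi-state viewpoint to a remark.
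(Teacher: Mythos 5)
Your proposal is correct, and your preferred (trace-based) argument is a genuinely different route from the paper's. The paper proves the identity by running your first sketch in reverse: it starts from $\sum_F \circuittensor{A}{E}{F}\,\circuittensor{B}{F}{E'}$, merges the two traces into a single trace over $\mathfrak{H}\otimes\mathfrak{K}\otimes\mathfrak{K}\otimes\mathfrak{L}$, recognizes $\sum_F F\otimes F^* = \dim(\mathfrak{K})^2\,\ketbra{\beta}{\beta}$ via Lemma~\ref{lemma:bell-state-stabilizer}, and then invokes Proposition~\ref{proposition:Choi:composition} to collapse the Bell contraction to $\ketbra{T_{BA}}{T_{BA}}$ — exactly the bookkeeping you describe, with the $\dim(\mathfrak{K})^{2}$ from the Bell expansion cancelling the $\dim(\mathfrak{K})^{-1}$ squared from the Choi composition. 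Your alternative route bypasses all of this: writing $\circuittensor{BA}{E}{E'} = \tfrac{1}{\dim\mathfrak{H}}\Tr(E^\dagger A^\dagger B^\dagger E' B A)$ and expanding $B^\dagger E' B = \sum_F c_F F$ with $c_F = \tfrac{1}{\dim\mathfrak{K}}\Tr(F^\dagger B^\dagger E' B) = \circuittensor{B}{F}{E'}$ immediately yields $\sum_F \circuittensor{B}{F}{E'}\cdot\tfrac{1}{\dim\mathfrak{H}}\Tr(E^\dagger A^\dagger F A) = \sum_F \circuittensor{A}{E}{F}\,\circuittensor{B}{F}{E'}$, using only trace-orthogonality of the error basis and the form \eqref{eq:circuit-tensor-to-B-enumerator}. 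This is shorter and more elementary (no Bell states, no Choi composition), and it makes transparent that the intermediate sum is just a resolution of the operator $B^\dagger E' B$ in the error basis of $\mathfrak{K}$; what the paper's route buys in exchange is that it generalizes verbatim to the channel setting (the theorem in \S\ref{section:ct_of_q_channel}), where the Choi matrix is the natural object and no single trace formula of the form $\Tr(E^\dagger A^\dagger E' A)$ is available without first choosing Kraus operators.
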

\begin{proof}
    We simply compute, using Proposition \ref{proposition:Choi:composition} above:
    \begin{align}
        \nonumber &\sum_F \circuittensor{A}{E}{F}\ \circuittensor{B}{F}{E'}\\
        \nonumber &= \sum_F \Tr((E^*\otimes F) \ket{T_A}\bra{T_A}) \Tr((F^*\otimes E') \ket{T_B}\bra{T_B})\\
        \nonumber &= \sum_F \Tr((E^*\otimes F \otimes F^* \otimes E') (\ket{T_A}\otimes \ket{T_B})(\bra{T_A}\otimes \bra{T_B}))\\
        \nonumber &= \dim(\mathfrak{K})^2 \Tr((E^*\otimes \ketbra{\beta}{\beta} \otimes E') \\
        \nonumber &~~~~~~~~~~~~~~~~~~(\ket{T_A}\otimes \ket{T_B})(\bra{T_A}\otimes \bra{T_B}))\\
        &= \Tr((E^* \otimes E') \ket{T_{BA}}\bra{T_{BA}}) = \circuittensor{BA}{E}{E'}, 
    \end{align}
    where we have also used $\ketbra{\beta}{\beta} = \frac{1}{\dim^2} \sum_F F\otimes F^*$.
\end{proof}

\begin{example}[Circuit tensor of Pauli operators]
    Any Pauli group has $PQ = \omega(P,Q) QP$. So Pauli operators have diagonal circuit tensors:
    \begin{align}
        \nonumber \circuitoperator{P} &= \frac{1}{\dim(\mathfrak{H})}\sum_{Q,Q'\in\mathcal{P}^n} \Tr(Q^\dagger P^\dagger Q' P) e\errorbasis{Q}{Q'}\\
        &= \sum_{Q\in\mathcal{P}^n} \omega(Q,P) e\errorbasis{Q}{Q}.
    \end{align}
    In particular, for the single qubit Pauli operators:
    \begin{align}
        \label{eq:circuit-tensor-X} \circuitoperator{X} &= e\errorbasis{I}{I} + e\errorbasis{X}{X} - e\errorbasis{Y}{Y} - e\errorbasis{Z}{Z}\\
        \label{eq:circuit-tensor-Y}\circuitoperator{Y} &= e\errorbasis{I}{I} - e\errorbasis{X}{X} + e\errorbasis{Y}{Y} - e\errorbasis{Z}{Z}\\
        \label{eq:circuit-tensor-Z}\circuitoperator{Z} &= e\errorbasis{I}{I} - e\errorbasis{X}{X} - e\errorbasis{Y}{Y} + e\errorbasis{Z}{Z}.
    \end{align}
\end{example}

Examining equations (\ref{eq:circuit-tensor-X},\ref{eq:circuit-tensor-Y},\ref{eq:circuit-tensor-Z}), we see that the circuit tensor behaves similarly to how each of these operators acts on the Bloch sphere. For example, on the Bloch sphere, the action of $X$ is a $\pi$-rotation about the $x$-axis, which is precisely \eqref{eq:circuit-tensor-X} except for the $e^I_I$ term.

Let us recall how the Bloch-sphere representation is constructed. Ordinarily, we would use the Pauli basis, but any error basis $\mathcal{E}$ of a Hilbert space $\mathfrak{H}$ suffices. Suppose we are given a unitary operator $U\in\mathcal{U}(\mathfrak{H})$. For each $E'\in \mathcal{E}$ we conjugate by $U$ to get an operator $U^\dagger E' U$. We then expand this operator as a linear combination of operators in our error basis
\begin{equation}\label{eq:Bloch-sphere-representation}
U^\dagger E' U = \sum_{E\in \mathcal{E}} c^{E}_{E'} E.
\end{equation}
This defines a matrix $c^{E}_{E'} = c^{E}_{E'}(U)$ which is our Bloch-sphere representation of $U$.

Now, from \eqref{eq:circuit-tensor-to-B-enumerator} above we have
\begin{align}
    \nonumber\circuittensor{U}{E}{E'} &= \tfrac{1}{\dim(\mathfrak{H})} \Tr(E^\dagger U^\dagger E' U)\\
    \label{eq:circuit-tensor-is-Bloch-sphere} &= \tfrac{1}{\dim(\mathfrak{H})} \sum_{F\in \mathcal{E}} c^{F}_{E'} \Tr(E^\dagger F) = c^{E}_{E'}.
\end{align}
Therefore the circuit tensor of a unitary operator is precisely its Bloch-sphere representation as defined above.

While $I\in \mathcal{E}$, we would not normally include this when building the Bloch-sphere representation of $U$. Indeed $I$ is exceptional in the construction, as can be seen in the following lemma.

\begin{lemma}
    For any unitary $U$ the coefficient $c\errorbasis{E}{I}$ in \eqref{eq:Bloch-sphere-representation} equals $1$ for $E=I$ and $0$ for any $E\in\{X,Y,Z\}$.
\end{lemma}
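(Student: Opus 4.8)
The plan is to observe that the claim is really a statement about the $E'=I$ column of the Bloch-sphere matrix, and that this column is forced entirely by unitarity of $U$. First I would recover the coefficients in \eqref{eq:Bloch-sphere-representation} explicitly. Taking the inner product of both sides of $U^\dagger E' U = \sum_{E\in\mathcal{E}} c^E_{E'} E$ against a fixed $F\in\mathcal{E}$ in the trace form, and using that $\mathcal{E}$ is trace-orthogonal with $\Tr(F^\dagger F) = \dim(\mathfrak{H})$ (part (2) of Definition~\ref{definition:error-basis} together with the normalization of unitary error-basis elements), gives
\begin{equation}
    c^E_{E'} = \tfrac{1}{\dim(\mathfrak{H})}\Tr(E^\dagger U^\dagger E' U).
\end{equation}
This is exactly the identity already used in \eqref{eq:circuit-tensor-is-Bloch-sphere}, so I may simply cite it.

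Next I would specialize to $E' = I$. Since $U$ is unitary, $U^\dagger I U = U^\dagger U = I$, so $c^E_I = \tfrac{1}{\dim(\mathfrak{H})}\Tr(E^\dagger)$. Finally, invoking again part (2) of Definition~\ref{definition:error-basis} with $F = I \in \mathcal{E}$, every non-identity element of $\mathcal{E}$ satisfies $\Tr(E^\dagger) = \Tr(E^\dagger I) = 0$, while $\Tr(I^\dagger) = \dim(\mathfrak{H})$. Hence $c^I_I = 1$ and $c^E_I = 0$ for all $E \neq I$; in the qubit Pauli basis this is precisely the asserted statement for $E \in \{X,Y,Z\}$.

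I do not expect a real obstacle here: the lemma is an immediate corollary of unitarity plus trace-orthogonality of the error basis, and the only thing to be careful about is to phrase it so it reads uniformly for a general error basis rather than just the qubit Pauli case. Alternatively, one could give an even more elementary argument in the Pauli case: $U^\dagger U = I$, and the decomposition of any operator into the Pauli basis is unique, so the coefficient of $I$ is $1$ and of $X,Y,Z$ is $0$; but the trace-orthogonality argument above is cleaner and makes explicit why $I$ plays the exceptional role alluded to in the surrounding text.
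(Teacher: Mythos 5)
Your argument is correct and is essentially identical to the paper's own proof: both set $E'=I$ in \eqref{eq:circuit-tensor-is-Bloch-sphere}, use $U^\dagger I U = I$, and conclude via trace-orthogonality that $\Tr(E^\dagger)$ vanishes except when $E=I$. No further comment is needed.
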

\begin{proof}
    When setting $E'=I$ in \eqref{eq:circuit-tensor-is-Bloch-sphere} we have:
    \begin{equation}\label{eq:lemma_proof_1_eq}\begin{split}
        c\errorbasis{E}{I} &= \tfrac{1}{\dim(\mathfrak{H})} \Tr(E^\dagger U^\dagger I U) \\
        &= \tfrac{1}{\dim(\mathfrak{H})} \Tr(E^\dagger) = \begin{cases}
            1 &  E=I,\\
            0 & E\in\{X,Y,Z\}.
        \end{cases}
    \end{split}\end{equation}
\end{proof}

Therefore the circuit tensor of a unitary operator decomposes as $\circuitoperator{U} = e\errorbasis{I}{I} + \sum_{E,E' \not= I} c\errorbasis{E}{E'} e\errorbasis{E}{E'}$, and it is only the second term we ordinarily call the Bloch-sphere representation of $U$.

\begin{example}[Circuit tensor of Clifford operators]\label{example:Clifford_ct}
    Clifford operators normalize the Pauli group, thus for any $n$ qubit Clifford $G$ and Pauli $P\in\mathcal{P}^n$ we know that $GPG^\dagger$ is a Pauli operator, and hence in $\mathcal{P}^n$ up to a phase. We can also write it as $\forall_G \forall_{P_1\in\mathcal{P}^n} \exists_{P_2\in\mathcal{P}^n} GP_1=\mu(GP_1G^\dagger)P_2G$, see Defenitation~\ref{def:rel_pauli_phase}.
    This relation leads the Clifford operators to have a signed permutation circuit tensor:
    \begin{equation}
        \circuitoperator{G} = \sum_{P \in \mathcal{P}^n} \mu(GPG^\dagger)e\errorbasis{P}{\pm GPG^\dagger}
    \end{equation}

    In particular, for the Hadamard gate we have~\cite[4.18]{Nielsen_Chuang_2010}:
    \begin{equation}
        \begin{array}{ccc}
           HZ=XH,  & HX=ZH,   & HY=-YH.
        \end{array}
    \end{equation}
    Hence it follows that
    \begin{equation}\label{eq:ct_H}
        \circuitoperator{H} = e\errorbasis{I}{I} + e\errorbasis{X}{Z} + e\errorbasis{Z}{X} -    e\errorbasis{Y}{Y}.
    \end{equation}
    The case is similar for the $S$ gate, and some combinations of it with the Hadammard gate. We leave the following computations to the reader:
    \begin{align}
            \circuitoperator{S} &= e\errorbasis{I}{I} + e\errorbasis{X}{Y} + e\errorbasis{Z}{Z} -   e\errorbasis{Y}{X},\\
            \circuitoperator{S^\dagger} &= e\errorbasis{I}{I} - e\errorbasis{X}{Y} + e\errorbasis{Z}{Z} +   e\errorbasis{Y}{X},\\
            \circuitoperator{S\circ H} &= 
            e\errorbasis{I}{I} + e\errorbasis{X}{Z} + e\errorbasis{Z}{Y} + e\errorbasis{Y}{X}, \label{eq:ct_S_H}\\
            \circuitoperator{H \circ S^\dagger} &= 
            e\errorbasis{I}{I} + e\errorbasis{X}{Y} + e\errorbasis{Z}{X} + e\errorbasis{Y}{Z}. \label{eq:ct_H_Sd}
        \end{align}
    \end{example}

\begin{example}[Circuit tensor of CNOT]

To calculate the circuit tensor for the CNOT gate, denoted $CX$, we will look at how each error in our error basis applied to any of the input legs can be commuted to the other side of the CNOT, similarly to what was done in Example~\ref{example:Clifford_ct}. We can use the CNOT commutation rules found in~\cite[Table 1]{gottesman1998heisenberg}, and expand them to include also $Y$. A circuit representation of the commutation rules can be seen in Fig.~\ref{fig:xyz_cnot_comm}, and their mathematical representation is: 
\begin{align}
  CX(I \otimes X ) &= (I \otimes X)CX \\
  CX(Z \otimes I ) &= (Z \otimes I)CX \\
  CX(I \otimes Z ) &= (Z \otimes Z)CX \\
  CX(Y \otimes I ) &= (Y \otimes X)CX \\
  CX(I \otimes Y ) &= (Z \otimes Y)CX
\end{align}

\begin{figure}[ht]
    \centering
    \includegraphics[width=\linewidth]{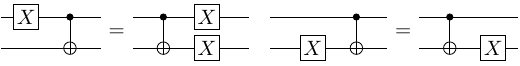}
    \includegraphics[width=\linewidth]{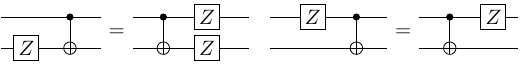}
    \includegraphics[width=\linewidth]{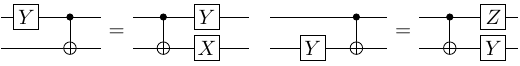}
    \caption[CNOT and Pauli commutation rules]{Commutation rules between Paulis and CNOT}
    \label{fig:xyz_cnot_comm}
\end{figure}

Using rules on Fig.~\ref{fig:xyz_cnot_comm} we can construct the full circuit tensor for the CNOT gate:
\begin{align}
\circuitoperator{CX} &=  e\errorbasis{I\otimes I}{I\otimes I}  + e\errorbasis{I\otimes X}{I\otimes X} + e\errorbasis{I\otimes Y}{Z\otimes Y} + e\errorbasis{I\otimes Z}{Z\otimes Z}  \nonumber\\
&+ e\errorbasis{X\otimes I}{X\otimes X} + e\errorbasis{X\otimes X}{X\otimes I} + e\errorbasis{X\otimes Y}{Y\otimes Z} - e\errorbasis{X\otimes Z}{Y\otimes Y}  \nonumber\\
&+ e\errorbasis{Y\otimes I}{Y\otimes X} + e\errorbasis{Y\otimes X}{Y\otimes I} - e\errorbasis{Y\otimes Y}{X\otimes Z} + e\errorbasis{Y\otimes Z}{X\otimes Y} \nonumber\\
&+ e\errorbasis{Z\otimes I}{Z\otimes I} + e\errorbasis{Z\otimes X}{Z\otimes X} + e\errorbasis{Z\otimes Y}{I\otimes Y} + e\errorbasis{Z\otimes Z}{I\otimes Z}
\label{eq:ct_cnot}
\end{align}
\end{example}

\begin{example}[Circuit tensor of $T$]
    It is straightforward to compute that $TXT^\dagger = \frac{1}{\sqrt{2}}(X + Y)$, $TYT^\dagger = \frac{1}{\sqrt{2}} (Y - X)$, and $TZT^\dagger = Z$. Therefore we can directly use \eqref{eq:Bloch-sphere-representation} and \eqref{eq:circuit-tensor-is-Bloch-sphere} to write out the circuit tensor for the $T$ gate:
    \begin{equation}\label{eq:ct:Tgate}
        \circuitoperator{T} = e\errorbasis{I}{I} + e\errorbasis{Z}{Z} + \frac{1}{\sqrt{2}}(e\errorbasis{X}{X} +e\errorbasis{X}{Y} - e\errorbasis{Y}{X} + e\errorbasis{Y}{Y})
    \end{equation}
\end{example}

In the above, we have focussed on the circuit tensor for quantum operations. Nonetheless, we can also compute circuit tensors for classical operations.

\begin{proposition}[Choi state and circuit tensor of a classical function]\label{proposition:clasical_ct}
For any classical function $f:\{0,\dots, N-1\} \to \{0, \dots, M-1\}$ define the operator
\begin{equation}
A_f = \sum_{x=0}^{N-1} \ket{f(x)}\bra{x} \in L(\mathbb{C}^N,\mathbb{C}^M).
\end{equation}

The Choi state of this operator, which for simplicity we denote $\ket{T_f}$, is
\begin{equation}    
\ket{T_f} = \frac{1}{\sqrt{N}}\sum_{x=0}^{N-1} \ket{x}\otimes A_f\ket{x} = \frac{1}{\sqrt{N}}\sum_{x=0}^{N-1} \ket{x}\otimes \ket{f(x)},
\end{equation}
and its circuit tensor, similarly with notation $\circuittensor{f}{}{}$ , is
\begin{equation}\begin{split}
    \circuittensor{f}{Z_N^\alpha}{Z_M^\beta} &= \Tr\left[(Z_N^{-\alpha}\otimes Z_M^\beta) \ketbra{T_f}{T_f}\right] \\&= \frac{1}{N} \sum_{x=0}^{N-1} \zeta_N^{-\alpha x} \zeta_M^{\beta f(x)},
\end{split}\end{equation}
where $\zeta_N=e^{\tfrac{2\pi i}{N}}$ is the canonical $N$-th root-of-unity.
\end{proposition}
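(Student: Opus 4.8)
The plan is to verify both formulas by direct substitution into the definitions; nothing beyond the Jamio{\l}kowski--Choi bookkeeping already set up in this section is needed. First I would treat the Choi state. Applying the definition of the Choi state of a linear operator with $\mathfrak{H} = \mathbb{C}^N$, so $\dim(\mathfrak{H}) = N$, gives $\ket{T_{A_f}} = (I\otimes A_f)\ket\beta = \frac{1}{\sqrt{N}}\sum_{x}\ket{x}\otimes A_f\ket{x}$, and since $A_f\ket{x} = \sum_{x'}\ket{f(x')}\braket{x'}{x} = \ket{f(x)}$ this is exactly the claimed expression. It is worth remarking that $A_f$ is an isometry precisely when $f$ is injective, so in general $\ket{T_f}$ need not be a unit vector; this is harmless, as all the relevant definitions are stated for arbitrary linear maps.

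Next I would handle the circuit tensor. The error bases in play are the classical bases $\{Z_N^\alpha\}$ on $\mathbb{C}^N$ and $\{Z_M^\beta\}$ on $\mathbb{C}^M$, so I evaluate \eqref{eqn:circuit-tensor-definition} at $E = Z_N^\alpha$ and $E' = Z_M^\beta$. Since $Z_N^\alpha$ is diagonal with entries $\zeta_N^{j\alpha}$ by \eqref{eq:powers_of_classical_error_basis_zn}, its entrywise complex conjugate is $(Z_N^\alpha)^* = Z_N^{-\alpha}$. Rewriting the trace as an expectation value in $\ket{T_f}$ and expanding the two copies of the Choi state,
\begin{equation}
    \circuittensor{f}{Z_N^\alpha}{Z_M^\beta} = \tfrac{1}{N}\sum_{x,y}\bra{x}Z_N^{-\alpha}\ket{y}\,\bra{f(x)}Z_M^\beta\ket{f(y)}.
\end{equation}
Because $\ket{y}$ is an eigenvector of $Z_N^{-\alpha}$ with eigenvalue $\zeta_N^{-\alpha y}$, the first factor equals $\zeta_N^{-\alpha y}\delta_{x,y}$; the Kronecker delta then automatically forces $f(x) = f(y)$, so the second factor collapses to $\zeta_M^{\beta f(x)}$, and summing over $y$ leaves $\frac{1}{N}\sum_x \zeta_N^{-\alpha x}\zeta_M^{\beta f(x)}$, as claimed. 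Equivalently, one can substitute $A = A_f$, $E^\dagger = Z_N^{-\alpha}$, $E' = Z_M^\beta$ into \eqref{eq:circuit-tensor-to-B-enumerator} and evaluate $\Tr(Z_N^{-\alpha}A_f^\dagger Z_M^\beta A_f)$ directly, which is the same computation organized slightly differently.

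I expect essentially no obstacle: this is just the specialization of \eqref{eq:circuit-tensor-to-B-enumerator} to diagonal error operators. The only points that warrant a moment's care are the identification $(Z_N^\alpha)^* = Z_N^{-\alpha}$ and the observation that the $\delta_{x,y}$ arising from the first tensor leg already guarantees $f(x) = f(y)$ in the second leg, so that no injectivity hypothesis on $f$ is needed.
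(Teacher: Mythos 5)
Your proof is correct and follows essentially the same route as the paper: rewrite the trace as the expectation value $\bra{T_f}(Z_N^{-\alpha}\otimes Z_M^\beta)\ket{T_f}$, expand the two copies of the Choi state, and use the diagonality of $Z_N^{-\alpha}$ to collapse the double sum to a single sum. Your added remarks on $(Z_N^\alpha)^* = Z_N^{-\alpha}$ and on the delta forcing $f(x)=f(y)$ are harmless refinements of the same computation.
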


\begin{proof}
\begin{equation}\begin{split}
    &\Tr\left[(Z_N^{-\alpha}\otimes Z_M^\beta) \ketbra{T_f}{T_f}\right] = \Tr\left[\bra{T_f}(Z_N^{-\alpha}\otimes Z_M^\beta) \ket{T_f}\right] 
    \\&\quad=\bra{T_f}(Z_N^{-\alpha}\otimes Z_M^\beta) \ket{T_f} 
    \\ &\quad= \frac{1}{N}\sum_{x=0}^{N-1}\left[\bra{x}\otimes \bra{f(x)}\right]\left[Z_N^{-\alpha}\otimes Z_M^\beta\right]\left[\sum_{y=0}^{N-1}\ket{y}\otimes \ket{f(y)}\right]
    \\ &\quad= \frac{1}{N}\sum_{x=0}^{N-1} \sum_{y=0}^{M-1}\left[\bra{x} Z_N^{-\alpha} \ket{y}\right] \otimes \left[\bra{f(x)} Z_M^{\beta} \ket{f(y)}\right]
    \\ &\quad= \frac{1}{N}\sum_{x=0}^{N-1} \sum_{y=0}^{M-1}\left[\bra{x} Z_N^{-\alpha} \ket{y}\right] \cdot \left[\bra{f(x)} Z_M^{\beta} \ket{f(y)}\right]
    \\ &\quad= \frac{1}{N}\sum_{x=0}^{N-1} \left[\bra{x} Z_N^{-\alpha} \ket{x}\right] \cdot \left[\bra{f(x)} Z_M^{\beta} \ket{f(x)}\right]
    \\ &\quad= \frac{1}{N}\sum_{x=0}^{N-1} \zeta_N^{-\alpha x}  \zeta_M^{\beta f(x)}
\end{split} \end{equation}
Above we used the trace cyclic property, distributivity of the tensor product, and that $Z_N^{-\alpha}$ is a diagonal matrix~\eqref{eq:powers_of_classical_error_basis_zn}.
\end{proof}

\begin{example}[Circuit tensor for classical identity and not]
To find the circuit tensor for the classical identity ($I_{classical}$) and not($X_{classical}$), we will directly use Proposition~\ref{proposition:clasical_ct}. Using both $N$ and $M$ equal to 2, we have $\zeta_N=\zeta_M=-1$:
\begin{equation}\begin{split}
    \circuittensor{I_{classical}}{Z^\alpha}{Z^\beta} &= \tfrac{1}{2}\sum_{x=0}^1 (-1)^{-\alpha x +\beta x}\\
    &=\begin{cases}
        1 &  \text{if~} \alpha=\beta,\\
        0 & otherwise.
    \end{cases}
\end{split}\end{equation}
    \begin{equation}\label{eq:ct_classical_identity}
        \circuitoperator{I_{classical}}  = e\errorbasis{I}{I}+e\errorbasis{Z}{Z}
    \end{equation}

\begin{equation}\begin{split}
    \circuittensor{X_{classical}}{Z^\alpha}{Z^\beta} &= \tfrac{1}{2}\sum_{x=0}^1 (-1)^{-\alpha x +\beta (1-x)}\\
    &=\begin{cases}
        (-1)^\alpha &  \text{if~} \alpha = \beta,\\
        0 & otherwise.
    \end{cases}
\end{split}\end{equation}

    \begin{equation}\label{eq:ct_classical_not}
        \circuitoperator{X_{classical}}  = e\errorbasis{I}{I} - e\errorbasis{Z}{Z}
    \end{equation}
\end{example}

\begin{proposition}[Circut tensor for a multi-input multi-output classical function]\label{proposition:multi_output_clasical_ct}
    For every classical function that has $n$ inputs and $m$ outputs $f:\left\{\{0,\dots, N_i-1\}\right\}_{i=1}^{n} \to \left\{\{0, \dots, M_j-1\}\right\}_{j=1}^{m}$, its circuit tensor is:
    \begin{align}
        &\circuittensor{f}{\otimes_{i=1}^{n}Z_{N_i}^{\alpha_i}}{\otimes_{j=1}^{m}Z_{M_j}^{\beta_j}}\nonumber\\
        &\qquad=\frac{1}{\prod_{i=1}^{n}N_i} \sum_{\mathbf{x}=(x_1, \dots, x_n)} \prod_{i=1}^n\zeta_{N_i}^{-\alpha_i x_i} \prod_{j=1}^m\zeta_{M_j}^{\beta_j f_{(j)}(\mathbf{x})}.
    \end{align}
    Above we used $f_{(j)}(\mathbf{x})$ as the $j$-th output of function f.
\end{proposition}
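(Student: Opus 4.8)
## Proof Proposal

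The plan is to reduce this multi-input, multi-output statement to the single-input, single-output case already established in Proposition~\ref{proposition:clasical_ct}, using the tensor-product structure of both the Hilbert spaces and the error bases. First I would observe that a classical function with inputs in $\prod_{i=1}^n \{0,\dots,N_i-1\}$ and outputs in $\prod_{j=1}^m \{0,\dots,M_j-1\}$ is still just an ordinary classical function $f:\{0,\dots,N-1\}\to\{0,\dots,M-1\}$ where $N = \prod_i N_i$ and $M = \prod_j M_j$, once we identify the product index sets with single index sets via mixed-radix numbering, $\mathbf{x} = (x_1,\dots,x_n) \leftrightarrow x$. Under this identification, the operator $A_f = \sum_{\mathbf{x}} \ket{f(\mathbf{x})}\bra{\mathbf{x}}$ and its Choi state $\ket{T_f} = \frac{1}{\sqrt{N}}\sum_{\mathbf{x}} \ket{\mathbf{x}}\otimes\ket{f(\mathbf{x})}$ are exactly as in Proposition~\ref{proposition:clasical_ct}.

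The key step is then to recognize that the error operators being contracted against are tensor products: on the input side $\bigotimes_{i=1}^n Z_{N_i}^{\alpha_i}$ and on the output side $\bigotimes_{j=1}^m Z_{M_j}^{\beta_j}$. Here I would invoke that the tensor product of the classical error bases $\{Z_{N_i}^{\alpha_i}\}$ is (a relabeling of) the classical error basis on $\mathbb{C}^N$, since each $Z_{N_i}$ is diagonal and $\bigotimes_i Z_{N_i}^{\alpha_i}$ acts on the basis vector $\ket{\mathbf{x}}$ by the phase $\prod_{i=1}^n \zeta_{N_i}^{\alpha_i x_i}$. That is, $\bra{\mathbf{x}} \big(\bigotimes_i Z_{N_i}^{-\alpha_i}\big) \ket{\mathbf{y}} = \delta_{\mathbf{x},\mathbf{y}} \prod_i \zeta_{N_i}^{-\alpha_i x_i}$, and similarly $\bra{f(\mathbf{x})} \big(\bigotimes_j Z_{M_j}^{\beta_j}\big) \ket{f(\mathbf{y})} = \delta_{f(\mathbf{x}),f(\mathbf{y})} \prod_j \zeta_{M_j}^{\beta_j f_{(j)}(\mathbf{x})}$, where $f_{(j)}(\mathbf{x})$ is the $j$-th component of $f(\mathbf{x})$. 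Plugging the Choi state into the definition $\circuittensor{f}{\,\cdot\,}{\,\cdot\,} = \Tr[(\bigotimes_i Z_{N_i}^{-\alpha_i} \otimes \bigotimes_j Z_{M_j}^{\beta_j})\ketbra{T_f}{T_f}]$ and running the same computation as in the proof of Proposition~\ref{proposition:clasical_ct} — trace as an inner product, expand the double sum over $\mathbf{x},\mathbf{y}$, use diagonality to collapse to $\mathbf{x}=\mathbf{y}$ — yields $\frac{1}{N}\sum_{\mathbf{x}} \prod_i \zeta_{N_i}^{-\alpha_i x_i} \prod_j \zeta_{M_j}^{\beta_j f_{(j)}(\mathbf{x})}$, which with $N = \prod_i N_i$ is exactly the claimed formula.

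There is no real obstacle here; the only point requiring a little care is the bookkeeping that $\bigotimes_{j=1}^m Z_{M_j}^{\beta_j}$, when applied to $\ket{f(\mathbf{x})} = \ket{f_{(1)}(\mathbf{x})}\otimes\cdots\otimes\ket{f_{(m)}(\mathbf{x})}$, factors the output phase as $\prod_j \zeta_{M_j}^{\beta_j f_{(j)}(\mathbf{x})}$ — i.e., that the identification of the product output space with $\mathbb{C}^M$ is compatible with how $f$ distributes over its components. Once that is spelled out, the result is a direct instance of Proposition~\ref{proposition:clasical_ct} with the sums over the single index $x$ rewritten as iterated sums over $\mathbf{x} = (x_1,\dots,x_n)$ and the phase factors factored according to the tensor structure. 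I would present this as a short corollary-style argument rather than repeating the full chain of equalities.
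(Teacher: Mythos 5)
Your proof is correct and matches the paper's approach exactly: the paper itself offers no detailed argument, stating only that ``the proof follows the same lines as the proof of Proposition~4'' and leaving the details to the reader, which is precisely the reduction-plus-direct-computation you carry out (Choi state of $A_f$, diagonality of the tensored $Z$ operators, collapse of the double sum to $\mathbf{x}=\mathbf{y}$). One small caution: the set $\bigl\{\bigotimes_i Z_{N_i}^{\alpha_i}\bigr\}$ is a valid error basis on $\mathbb{C}^{\prod_i N_i}$ but is \emph{not} in general a relabeling of $\{Z_N^\alpha\}$ (e.g.\ $Z_2\otimes Z_2$ has eigenvalues $\pm 1$ only, while $Z_4^\alpha$ involves $\pm i$), though this does not affect your argument since the computation uses only diagonality and the factored phases, never that identification.
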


The proof follows the same lines as the proof of Proposition~\ref{proposition:clasical_ct}, we leave the details for the reader.

We finish this section with a circuit tensor construction for the xor function. Examples of other primitive boolean operations that commonly appear in quantum fault-tolerance circuits can be found in Appenndix~\ref{app:calssical_ct_ex}.

\begin{example}[Xor circuit tensor]\label{exp:xor}
    The operator for \textbf{xor} of two bits is:
    \begin{equation}
        A_{xor} = \sum_{x_0,x_1} \ket{x_0\oplus x_1}\bra{x_0,x_1} \in L((\mathbb{C}^2)^{\otimes 2},\mathbb{C}^2).
        \end{equation}
Using the circuit tensor definition in Proposition~\ref{proposition:multi_output_clasical_ct} we get:
\begin{align}
    \circuittensor{\mathtt{xor}}{Z^{\alpha_0}\otimes Z^{\alpha_1}}{Z^\beta} &= \tfrac{1}{4} \sum_{x_0,x_1} (-1)^{-\alpha_0 x_0 - \alpha_1 x_1 + \beta (x_0 \oplus x_1)} \nonumber\\
    &= \begin{cases}
     1 & \text{ if $\alpha_0 = \alpha_1 = \beta$,}\\
     0 & \text{otherwise.}
     \end{cases}
\end{align}
 Or using a tensor basis 
 \begin{equation}\label{eq:xor_ct}
 \circuitoperator{\mathtt{xor}} = e\errorbasis{I\otimes I}{I} + e\errorbasis{Z\otimes Z}{Z}.
 \end{equation} 
\end{example}

\section{Circuit tensor of a quantum channel}\label{section:ct_of_q_channel}

In this section, we extend the results of the previous section to quantum channels. The linear operator in Section~\ref{section:ct_linear_map} is replaced by a Kraus operator of the quantum channel, from which we build the Choi matrix (generalizing the Choi state) and the circuit tensor. Note that while the selection of Kraus operators for a channel is not unique, the Choi matrix and circuit tensor we form from them will be. The composition laws for Choi matrices and circuit tensors for channels are analogous to those of a single linear operator. We also provided detailed examples of projective and destructive measurements viewed as a quantum channel.

\begin{definition}[Choi matrix of a quantum channel]
    Let $\mathcal{A}:\mathfrak{H} \leadsto \mathfrak{K}$ be a quantum channel, and $\{A_j\} \subset L(\mathfrak{H},\mathfrak{K})$ a set of Kraus operators for this channel. Then the \emph{Choi matrix} of $\mathcal{A}$ is $T_{\mathcal{A}} = \sum_j \ketbra{T_{A_j}}{T_{A_j}}$.
\end{definition}

As the Kraus operators of a channel are not uniquely defined, it is not clear that the Choi matrix as defined is unique. Nonetheless, if as in the previous section we write each $\ket{T_{A_j}} = \sum_x \ket{x}\otimes A_j\ket{x}$ for some orthonormal basis $\{\ket{x}\}$ of $\mathfrak{H}$, we find
\begin{align}
    \nonumber T_\mathcal{A} &= \sum_j \left(\sum_x \ket{x}\otimes A_j\ket{x}\right)\left(\sum_y \bra{y}\otimes \bra{y}A_j^\dagger\right)\\\nonumber
    &= \sum_{j}\sum_{x,y} \ketbra{x}{y} \otimes A_j \ketbra{x}{y} A_j^\dagger\\
    &= \sum_{x,y} \ketbra{x}{y} \otimes \mathcal{M}(\ketbra{x}{y}),
\end{align}
and our Choi matrix coincides with the original formulation \cite{choi1975completely}, and so is independent of the choice of Kraus operators.

\begin{proposition}[Choi matrix for composition of quantum channels]\label{proposition:choi-matrix-composition}
    Let $\mathcal{A}:\mathfrak{H} \leadsto \mathfrak{K}$ and $\mathcal{B}:\mathfrak{K} \leadsto \mathfrak{L}$ be quantum channels. Then
    \begin{align}
        \nonumber &(I_\mathfrak{H} \otimes \bra\beta \otimes I_\mathfrak{L}) (T_{\mathcal{A}}\otimes T_{\mathcal{B}}) (I_\mathfrak{H}\otimes \ket\beta \otimes I_\mathfrak{L})\\
        &\qquad = \tfrac{1}{\dim(\mathfrak{K})^2} T_{\mathcal{B}\circ\mathcal{A}}
    \end{align}
\end{proposition}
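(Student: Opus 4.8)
The plan is to reduce the statement for channels to the already-established statement for linear operators (Proposition~\ref{proposition:Choi:composition}), by writing each channel in terms of a set of Kraus operators and summing. Fix Kraus decompositions $\mathcal{A}(\rho) = \sum_i A_i \rho A_i^\dagger$ and $\mathcal{B}(\sigma) = \sum_j B_j \sigma B_j^\dagger$, so that by definition $T_{\mathcal{A}} = \sum_i \ketbra{T_{A_i}}{T_{A_i}}$ and $T_{\mathcal{B}} = \sum_j \ketbra{T_{B_j}}{T_{B_j}}$. Then the composite channel $\mathcal{B}\circ\mathcal{A}$ has Kraus operators $\{B_j A_i\}_{i,j}$, so $T_{\mathcal{B}\circ\mathcal{A}} = \sum_{i,j} \ketbra{T_{B_j A_i}}{T_{B_j A_i}}$. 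Since the Choi matrix is independent of the choice of Kraus operators (as established just before the proposition), this representation is legitimate.

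Next I would expand the left-hand side using bilinearity of the tensor product and of the map $M \mapsto (I\otimes\bra\beta\otimes I) M (I\otimes\ket\beta\otimes I)$:
\begin{align}
    \nonumber &(I_\mathfrak{H} \otimes \bra\beta \otimes I_\mathfrak{L}) (T_{\mathcal{A}}\otimes T_{\mathcal{B}}) (I_\mathfrak{H}\otimes \ket\beta \otimes I_\mathfrak{L})\\
    \nonumber &\quad = \sum_{i,j} (I \otimes \bra\beta \otimes I)\big(\ketbra{T_{A_i}}{T_{A_i}}\otimes \ketbra{T_{B_j}}{T_{B_j}}\big)(I\otimes \ket\beta \otimes I).
\end{align}
For each fixed pair $(i,j)$, the bra and ket parts factor, and Proposition~\ref{proposition:Choi:composition} applied to $A_i$ and $B_j$ (and its conjugate-transpose applied on the right) gives $(I\otimes\bra\beta\otimes I)(\ket{T_{A_i}}\otimes\ket{T_{B_j}}) = \frac{1}{\dim(\mathfrak{K})}\ket{T_{B_j A_i}}$ and the adjoint statement. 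Multiplying these yields $\frac{1}{\dim(\mathfrak{K})^2}\ketbra{T_{B_j A_i}}{T_{B_j A_i}}$ for each term; summing over $i,j$ produces $\frac{1}{\dim(\mathfrak{K})^2}\sum_{i,j}\ketbra{T_{B_j A_i}}{T_{B_j A_i}} = \frac{1}{\dim(\mathfrak{K})^2} T_{\mathcal{B}\circ\mathcal{A}}$, which is exactly the claim.

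The only genuinely delicate point is the bookkeeping in the term-by-term step: one must be careful that the operator $(I_\mathfrak{H}\otimes\bra\beta\otimes I_\mathfrak{L})$ acts on the middle two tensor factors of $\mathfrak{H}\otimes\mathfrak{K}\otimes\mathfrak{K}\otimes\mathfrak{L}$ coming from $\ket{T_{A_i}}\in\mathfrak{H}\otimes\mathfrak{K}$ and $\ket{T_{B_j}}\in\mathfrak{K}\otimes\mathfrak{L}$, matching precisely the hypothesis of Proposition~\ref{proposition:Choi:composition} with $\mathfrak{K}$ in the role of the contracted space. Once the factors are correctly identified, everything is a formal consequence of bilinearity plus the linear-operator composition law; no new computation is required. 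I would also remark that the result is independent of the chosen Kraus decompositions, either by invoking the uniqueness of the Choi matrix directly or by noting that the left-hand side manifestly depends only on $T_{\mathcal{A}}$ and $T_{\mathcal{B}}$.
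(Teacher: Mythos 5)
Your proposal is correct and follows essentially the same route as the paper's own proof: fix Kraus decompositions, note that $\{B_jA_i\}$ is a Kraus set for $\mathcal{B}\circ\mathcal{A}$, apply Proposition~\ref{proposition:Choi:composition} (and its adjoint) term by term, and sum. Your explicit remarks on the tensor-factor bookkeeping and on independence of the Kraus choice are sensible additions but do not change the argument.
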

\begin{proof}
    Let $\{A_j\} \subset L(\mathfrak{H},\mathfrak{K})$ and $\{B_k\} \subset L(\mathfrak{K},\mathfrak{L})$ be Kraus operators for $\mathcal{A}$ and $\mathcal{B}$ respectively. Then $\{B_kA_j\}$ is a set of Kraus operators for $\mathcal{B}\circ\mathcal{A}$, albeit not generally of minimal size. Nonetheless, from Proposition \ref{proposition:Choi:composition}
    \begin{equation}
        (I_\mathfrak{H} \otimes \bra\beta \otimes I_\mathfrak{L})(\ket{T_{A_j}}\otimes \ket{T_{B_k}}) = \tfrac{1}{\dim(\mathfrak{K})}  \ket{T_{B_kA_j}},
    \end{equation}
    and hence
    \begin{align}
        \nonumber&\tfrac{1}{\dim(\mathfrak{K})^2} T_{\mathcal{B}\circ\mathcal{A}} = \sum_{j,k} \ketbra{T_{B_kA_j}}{T_{B_kA_j}}\\
        &\quad = (I_\mathfrak{H} \otimes \bra\beta \otimes I_\mathfrak{L}) (T_{\mathcal{A}}\otimes T_{\mathcal{B}}) (I_\mathfrak{H}\otimes \ket\beta \otimes I_\mathfrak{L}).
    \end{align}
\end{proof}

The circuit tensor of a linear map was formed from the projector onto its Choi state, and hence the extension of circuit tensors to general quantum channels is natural.

\begin{definition}[Circuit tensor of a quantum channel]
    Let $\mathfrak{H},\mathfrak{K}$ be Hilbert spaces with error bases $\mathcal{E},\mathcal{E}'$ respectively, and let $\mathcal{A}:\mathfrak{H}\leadsto \mathfrak{K}$ be a quantum channel. Then the \emph{circuit tensor} of $\mathcal{A}$ is $\circuittensor{\mathcal{A}}{E}{E'} = \Tr((E^* \otimes E')T_{\mathcal{A}})$ where $E\in \mathcal{E}$, $E'\in \mathcal{E}'$, and $T_{\mathcal{A}}$ is the Choi matrix of $\mathcal{A}$.
\end{definition}

By construction, for a choice of Kraus operators $\{A_j\}$ for the channel $\mathcal{A}$ we have $T_{\mathcal{A}} = \sum_j \ketbra{T_{A_j}}{T_{A_j}}$, and hence
\begin{align}
    \nonumber \circuittensor{\mathcal{A}}{E}{E'} &= \sum_j \Tr((E^* \otimes E')\ketbra{T_{A_j}}{T_{A_j}})\\
    &= \sum_j \circuittensor{A_j}{E}{E'}.
\end{align}
That is, the circuit tensor of a quantum channel is just the sum of the circuit tensors of its constituent Kraus operators. As noted earlier, this would seem to indicate that the circuit tensor depends on the choice of Kraus operators. However as it is defined in terms of the Choi matrix, this is not the case.

Now, $\circuittensor{A_j}{E}{E'} = \frac{1}{\dim(\mathfrak{H})}\Tr(E^\dagger A_j^\dagger E' A_j)$, and so we can further write
\begin{equation}
    \circuitoperator{\mathcal{A}} = \tfrac{1}{\dim(\mathfrak{H})} \sum_{j,E,E'} \Tr(E^\dagger A_j^\dagger E' A_j) e^E_{E'}
\end{equation}
showing a further generalization of the $B$-tensor enumerator of \cite{cao2023quantum} to arbitrary quantum channels.

\begin{example}[Destructive measurement] Let $\{\ket{\phi_j}\}_{j=0}^{q-1}$ be an orthonormal basis of a Hilbert space $\mathfrak{H}$, and consider the operation of measuring a quantum state in this basis. That is, we input a quantum state from $\mathfrak{H}$ and output an element of $\{0,\dots,q-1\}$. Recall that as classical information the output is represented as density operators $\ketbra{0}{0}, \dots, \ketbra{q-1}{q-1}$ on $\mathbb{C}^q$. So as a quantum channel, this measurement operator is $\mathcal{MD}_{\{\ket{\phi_j}\}}: \mathfrak{H} \leadsto \mathbb{C}^q$. In particular,
\begin{equation}
    (\mathcal{MD}_{\{\ket{\phi_j}\}})(\rho) = \sum_{j=0}^{q-1} \ketbra{j}{j}\cdot \bra{\phi_j}\rho\ket{\phi_j}.
\end{equation}
Therefore we may take $\{\ket{j}\bra{\phi_j}\}_{j=0, \dots, q-1}$ as Kraus operators for this channel.

Suppose we have error basis $\mathcal{E}$ of $\mathfrak{H}$; for $\mathbb{C}^q$ we take error basis $\{Z^\alpha\}_{\alpha = 0}^{q-1}$. Then
\begin{align}
    \nonumber \circuittensor{\mathcal{MD}_{\{\ket{\phi_j}\}}}{E}{Z^\alpha} &= \frac{1}{q} \sum_{j=0}^{q-1} \Tr(E^\dagger \ket{\phi_j}\bra{j} Z^\alpha \ket{j}\bra{\phi_j})\\
    &= \frac{1}{q} \sum_{j=0}^{q-1}\zeta_q^{\alpha j} \bra{\phi_j} E^\dagger \ket{\phi_j},\label{eq:circuit-tensor-destructive-measurement}
\end{align}
where $\zeta_q$ is the primitive $q$-th root-of-unity. We see that owing to our choice of error basis of $\mathbb{C}^q$, the circuit tensor is encoding information about the measurement in the Fourier domain.
\end{example}

\begin{example}[Destructive Pauli measurement]
Let us specialize the above example to measuring a qubit with respect to a Pauli eigenbasis. We will consider $X$, with eigenbasis $\{\ket{+},\ket{-}\}$, the cases of $Y$ and $Z$ being similar. For compactness, let us simply write $\mathcal{MD}_X = \mathcal{MD}_{\{\ket{+},\ket{-}\}}$. Then by \eqref{eq:circuit-tensor-destructive-measurement},
\begin{equation}\label{eq:pauli-measurement-1}
    \circuittensor{\mathcal{MD}_{X}}{P}{Z^\alpha} = \tfrac{1}{2} \left( \bra{+} P \ket{+} + (-1)^\alpha  \bra{-} P \ket{-}\right).
\end{equation}

Now, we have $Y\ket{+} = -i\ket{-}$ and $Z\ket{+} = \ket{-}$ and so \eqref{eq:pauli-measurement-1}~reduces to
\begin{equation}
    \circuitoperator{\mathcal{MD}_{X}} = e\errorbasis{I}{I} + e\errorbasis{X}{Z}.
\end{equation}
Let us use the notation of $\mathcal{MD}_Z = \mathcal{MD}_{\{\ket{0},\ket{1}\}}$ and $\mathcal{MD}_Y = \mathcal{MD}_{\{\ket{i},\ket{-i}\}}$, then we can similarly get $\circuitoperator{\mathcal{MD}_{Y}} = e\errorbasis{I}{I} + e\errorbasis{Y}{Z}$ and $\circuitoperator{\mathcal{MD}_{Z}} = e\errorbasis{I}{I} + e\errorbasis{Z}{Z}$, and so in short
\begin{equation}\label{eq:md_tensor}
    \circuitoperator{\mathcal{MD}_{P}} = e\errorbasis{I}{I} + e\errorbasis{P}{Z}
\end{equation}
for any $P \in \{X, Y, Z\}$.
\end{example}

Let us now consider the slightly more complex example of non-demolition measurement. For simplicity, consider a binary projective measurement. As a quantum channel, this inputs a quantum state on the Hilbert space $\mathfrak{H}$ and outputs both a classical bit and the post-measurement state. Recall a classical bit is viewed as $\ketbra{0}{0}$ or $\ketbra{1}{1}$, density operators on $\mathbb{C}^2$. Hence the signature of this measurement as a quantum channel is $\mathfrak{H} \leadsto \mathbb{C}^2\otimes \mathfrak{H}$.

\begin{example}[Projective measurement]\label{example:binary-measurement}
    Let $\mathcal{MP}_\mathbf{R}:\mathfrak{H} \leadsto \mathbb{C}^2\otimes \mathfrak{H}$ be the quantum channel for measurement with respect to a binary projective-valued measure $\mathbf{R} = \{\Pi_0,\Pi_1\}$. We may write
    \begin{equation}
     (\mathcal{MP}_{\mathbf{R}})(\rho) = \ketbra{0}{0} \otimes \Pi_0\rho\Pi_0 + \ketbra{1}{1} \otimes \Pi_1\rho\Pi_1
    \end{equation}
    and so use as Kraus operators $\{\ket{0}\otimes \Pi_0, \ket{1}\otimes \Pi_1\}$. If $\mathcal{E}$ is our error basis for $\mathfrak{H}$, then for $\mathbb{C}^2\otimes \mathfrak{H}$ we take the error basis $\{ Z^\alpha\otimes E \::\: \alpha\in\{0,1\},\: E\in\mathcal{E}\}$. Then
    \begin{align}\label{eq:circuit-tensor-binary-measurement}
        &\circuittensor{\mathcal{MP}_\mathbf{R}}{E}{Z^\alpha\otimes E'} \nonumber\\&= \tfrac{1}{\dim(\mathfrak{H})} \sum_{j\in\{0,1\}} \Tr(E^\dagger (\bra{j}\otimes \Pi_j) (Z^\alpha\otimes E') (\ket{j}\otimes \Pi_j) \nonumber \\
        &= \tfrac{1}{\dim(\mathfrak{H})} \sum_{j\in\{0,1\}} (-1)^{\alpha j} \Tr(E^\dagger \Pi_j E' \Pi_j).
    \end{align}
\end{example}

\begin{example}[Projective Pauli measurement]\label{example:pauli-measurement}
    Let us specialize Example~\ref{example:binary-measurement} to an $n$-qubit Pauli operator measurement, $\Pi_j = \frac{1}{2}(I + (-1)^j S)$ where $S\in \mathcal{P}^n$. Write $\mathcal{MP}_S$ for this channel. Taking $\alpha = 0$ in \eqref{eq:circuit-tensor-binary-measurement},
    \begin{align}
        \circuittensor{\mathcal{MP}_S}{P}{I\otimes P'} &= \tfrac{1}{2^n}\left(\Tr(P\Pi_0 P'\Pi_0) +   \Tr(P\Pi_1 P'\Pi_1)\right)\nonumber\\
        &= \frac{1}{2^n} \begin{cases}
         \Tr(P^\dagger P') &\omega(P',S) = 1,\\
         0 &\text{otherwise,}
         \end{cases}
    \end{align}
    and for $\alpha = 1$ in \eqref{eq:circuit-tensor-binary-measurement},
    \begin{align}
        \circuittensor{\mathcal{MP}_S}{P}{Z\otimes P'} &= \tfrac{1}{2^n}\left(\Tr(P\Pi_0 P'\Pi_0) -   \Tr(P\Pi_1 P'\Pi_1)\right)\nonumber\\
        &= \frac{1}{2^n} \begin{cases}
        \Tr(P^\dagger P' S) &\omega(P',S) = 1,\\
        0 &\text{otherwise,}
        \end{cases}
    \end{align}
    In particular, we can write the circuit tensor as
    \begin{equation}\label{eq:ct_proj_meas_pauli_n_qubits}
        \circuitoperator{\mathcal{MP}_S} = \sum_{P\::\:\omega(P,S) = 1} e\errorbasis{P}{I\otimes P} + \mu(PS) e\errorbasis{P}{Z\otimes \mu(PS)PS}
    \end{equation}
    where as in Defenition~\ref{def:rel_pauli_phase}, we have used $\mu(PS)$ to represent the phase of $PS$ relative to the positve Pauli basis $\mathcal{P}^n$. Additionally, we write $\mu(PS)PS$ as the positive basis element.
\end{example}

\begin{theorem}[Circuit tensor for quantum channels composition]
    Let $\mathcal{A}:\mathfrak{H} \leadsto \mathfrak{K}$ and $\mathcal{B}:\mathfrak{K} \leadsto \mathfrak{L}$ be quantum channels. Then $ \circuitoperator{\mathcal{B}\circ\mathcal{A}} = \circuitoperator{\mathcal{A}}\circuitoperator{\mathcal{B}}$. That is,
    \begin{equation}
        \circuittensor{\mathcal{B}\circ\mathcal{A}}{E}{E'} = \sum_F \circuittensor{\mathcal{A}}{E}{F}\: \circuittensor{\mathcal{B}}{F}{E'}.
    \end{equation}
\end{theorem}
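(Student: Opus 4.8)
The plan is to reduce the channel-level composition law to the operator-level composition law already proved (Theorem ``Composition of circuit tensors'') together with the Choi-matrix composition identity (Proposition~\ref{proposition:choi-matrix-composition}). Concretely, I would first fix Kraus decompositions $\{A_j\}$ for $\mathcal{A}$ and $\{B_k\}$ for $\mathcal{B}$, recall that $\{B_k A_j\}$ is a (possibly non-minimal) Kraus family for $\mathcal{B}\circ\mathcal{A}$, and then use the fact established just after the definition that the circuit tensor of a channel is the sum of the circuit tensors of any constituent Kraus operators.

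The computation then proceeds in three short steps. First, expand the left side:
\begin{align}
    \nonumber \circuittensor{\mathcal{B}\circ\mathcal{A}}{E}{E'} &= \sum_{j,k} \circuittensor{B_kA_j}{E}{E'}.
\end{align}
Second, apply the operator-level composition theorem to each term, $\circuittensor{B_kA_j}{E}{E'} = \sum_F \circuittensor{A_j}{E}{F}\, \circuittensor{B_k}{F}{E'}$. Third, interchange the (finite) sums over $j,k,F$ and recognize $\sum_j \circuittensor{A_j}{E}{F} = \circuittensor{\mathcal{A}}{E}{F}$ and $\sum_k \circuittensor{B_k}{F}{E'} = \circuittensor{\mathcal{B}}{F}{E'}$, which yields exactly $\sum_F \circuittensor{\mathcal{A}}{E}{F}\, \circuittensor{\mathcal{B}}{F}{E'}$, i.e.\ the matrix product $\circuitoperator{\mathcal{A}}\circuitoperator{\mathcal{B}}$.

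The only subtlety, and the one point worth flagging explicitly in the write-up, is well-definedness: the intermediate equalities are phrased in terms of a particular Kraus choice, so I should remark that the final expression depends only on the Choi matrices $T_\mathcal{A}$, $T_\mathcal{B}$, and $T_{\mathcal{B}\circ\mathcal{A}}$ — which are themselves Kraus-independent — so the identity is independent of all choices made. Alternatively, and perhaps more cleanly, one can bypass Kraus operators entirely: take $\Tr((E^*\otimes E')\,\cdot\,)$ of both sides of Proposition~\ref{proposition:choi-matrix-composition}, insert $\ketbra{\beta}{\beta} = \frac{1}{\dim(\mathfrak{K})^2}\sum_F F\otimes F^*$ on the middle register, and collapse the $\dim(\mathfrak{K})^2$ factors against the one in Proposition~\ref{proposition:choi-matrix-composition}; the trace then factors across the tensor product into $\sum_F \Tr((E^*\otimes F)T_\mathcal{A})\,\Tr((F^*\otimes E')T_\mathcal{B})$, which is the claim. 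I expect no real obstacle here — the work was done in proving the operator-level version and Proposition~\ref{proposition:choi-matrix-composition}; this theorem is essentially bilinearity of the construction plus bookkeeping, and I would keep the proof to a few lines.
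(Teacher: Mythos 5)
Your proposal is correct, and your primary route is genuinely different from the paper's. The paper proves this directly at the level of Choi matrices: it starts from $\sum_F \circuittensor{\mathcal{A}}{E}{F}\,\circuittensor{\mathcal{B}}{F}{E'}$, merges the two traces into a single trace over the tensor product, uses Lemma~\ref{lemma:bell-state-stabilizer} to recognize $\sum_F F\otimes F^* = \dim(\mathfrak{K})^2\,\ketbra{\beta}{\beta}$, and then invokes Proposition~\ref{proposition:choi-matrix-composition} to collapse everything to $\Tr((E^*\otimes E')T_{\mathcal{B}\circ\mathcal{A}})$ --- in effect redoing at the channel level the same Bell-state manipulation used for the operator-level theorem. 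Your main argument instead reduces the channel statement to the already-proved operator-level composition law by bilinearity: expand both channels in Kraus operators, note $\{B_kA_j\}$ is a Kraus family for $\mathcal{B}\circ\mathcal{A}$, apply the operator theorem termwise, and interchange the finite sums so that $\sum_{j,k}$ factors into $\bigl(\sum_j\bigr)\bigl(\sum_k\bigr)$ for each fixed $F$. That is a valid and arguably more economical derivation, and you correctly identify and dispose of the one subtlety (Kraus-independence of both sides via the Choi matrix). What your route buys is brevity and reuse of prior work; what the paper's route buys is a proof that never touches a Kraus decomposition and so needs no well-definedness remark. Your alternative sketch at the end --- tracing $(E^*\otimes E')$ against Proposition~\ref{proposition:choi-matrix-composition} and resolving the middle register with $\ketbra{\beta}{\beta}=\tfrac{1}{\dim(\mathfrak{K})^2}\sum_F F\otimes F^*$ --- is essentially the paper's proof read in the opposite direction.
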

\begin{proof}
    We expand
    \begin{align}
        \nonumber \sum_F &\circuittensor{\mathcal{A}}{E}{F}\: \circuittensor{\mathcal{B}}{F}{E'} \\
        \nonumber &= \sum_F \Tr((E^* \otimes F)T_{\mathcal{A}}) \Tr((F^* \otimes E')T_{\mathcal{B}})\\
        \nonumber &= \sum_F \Tr\left( \left((E^* \otimes F)T_{\mathcal{A}}\right) \otimes \left((F^* \otimes E')T_{\mathcal{B}}\right)\right)\\
        &= \dim(\mathfrak{K})^2\cdot \Tr\left(\left(E^* \otimes \ket\beta\bra{\beta} \otimes E'\right)\left(T_{\mathcal{A}}\otimes T_{\mathcal{B}}\right)\right).
    \end{align}
    where we have used Lemma \ref{lemma:bell-state-stabilizer}. Now manipulating the trace,
    \begin{align}
        \nonumber &\Tr\left(\left(E^* \otimes \ket\beta\bra{\beta} \otimes E'\right)\left(T_{\mathcal{A}}\otimes T_{\mathcal{B}}\right)\right)\\
        \nonumber &\qquad= \Tr\left(\left(E^* \otimes I_{\mathfrak{K}} \otimes E'\right) \left( I_{\mathfrak{H}} \otimes \ket\beta \otimes I_{\mathfrak{K}} \right)\cdot\right.\\
        \nonumber &\qquad\qquad\qquad\left. (T_{\mathcal{A}}\otimes T_{\mathcal{B}})\left(I_{\mathfrak{H}} \otimes \bra\beta \otimes I_{\mathfrak{K}} \right)\right)\\
        \nonumber &\qquad= \Tr\left(\left(E^* \otimes E'\right) \left( I_{\mathfrak{H}} \otimes \bra\beta \otimes I_{\mathfrak{K}} \right)\right.\cdot\\
        \nonumber &\qquad\qquad\qquad\left. (T_{\mathcal{A}}\otimes T_{\mathcal{B}})\left(I_{\mathfrak{H}} \otimes \ket\beta \otimes I_{\mathfrak{K}} \right)\right)\\
        &\qquad= \tfrac{1}{\dim(\mathfrak{K})^2} \Tr\left( (E^* \otimes E') T_{\mathcal{B}\circ\mathcal{A}}\right),
    \end{align}
    where we have used Proposition \ref{proposition:choi-matrix-composition} above. Hence
    \begin{align}
        \sum_F \circuittensor{\mathcal{A}}{E}{F}\: \circuittensor{\mathcal{B}}{F}{E'} &= \Tr\left( (E^* \otimes E') T_{\mathcal{B}\circ\mathcal{A}}\right) \nonumber\\
        &= \circuittensor{\mathcal{B}\circ\mathcal{A}}{E}{E'}.
    \end{align}
\end{proof}

\section{Examples of quantum circuits}\label{section:ct_examples}
In this section, we provide constructions of circuit tensors for common quantum circuits that build upon the circuit tensors of quantum gates and measurements developed \S\S{\ref{section:ct_linear_map}~-~\ref{section:ct_of_q_channel}}. We start the section with an example showing how to construct a circuit tensor of a projective Pauli measurement, being one of the most common sub-circuits in fault tolerant quantum computing, using destructive measurements and Clifford operations. Next, we develop the circuit tensor for a classically controlled quantum gate, essential for post-measurement
feedback. We end the section with a large example, the teleportation circuit, that exhibits many of the basic blocks forming quantum circuits: state preparation, entangling gate, measurements, and classically controlled operations.

In Example~\ref{example:pauli-measurement} above, we derived the circuit tensor for a projective Pauli measurement,  $\mathcal{M}\mathcal{P}_S$ where $S \in \pauligroup^n$, by treating it as a general quantum channel. For our first example, we rederive this by composing the circuit tensors of a quantum circuit that implements the projective Pauli measurement with CNOTs, single-qubit Clifford gates, and a single-qubit destructive measurement. While projective Pauli measurement is native to, say, surface codes~\cite{litinski2019game}, other architectures do not have this as a native gate. Fig.~\ref{fig:projectiv_meas} shows circuits performing single qubit projective Z, X, and Y measurements.

\begin{figure}[ht]
    \centering
    \includegraphics[width=\linewidth]{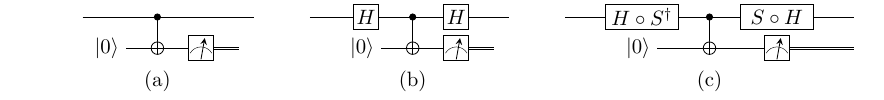}
    \caption[Projective Pauli measurements circuits]{Projective Pauli measurements: (a) projective Z measurement, (b) projective X measurement, (c) projective Y measurement.}
    \label{fig:projectiv_meas}
\end{figure}

In the simplest case of $S=Z$, and illustrated in Fig.~\ref{fig:projectiv_meas}(a), the circuit tensor of this circuit can be written, with a slight abuse of notation, as
\begin{equation}
    \circuitoperator{\mathcal{MP}_Z} = \circuitoperator{\mathcal{MD}^{(2)}_{Z}} \circ \circuitoperator{CX} \circ \circuitoperator{\mathcal{SP}^{(2)}_{\ket{0}}}.
\end{equation}
Here, the state preparation $\mathcal{SP}^{(2)}_{\ket{0}}$ occurs as the second qubit with an identity operation on the first qubit whose notation we suppress. Similarly, the destructive measurement $\mathcal{MD}^{(2)}_{Z}$ refers to the measurement of the second qubit, with the identity operation on the first.

In particular, from \eqref{eq:ct:state_prep} we have $\circuitoperator{\mathcal{SP}_{\ket{0}}} = e\errorbasis{}{I} + e\errorbasis{}{Z}$ and so:
\begin{align}
    \circuitoperator{\mathcal{SP}^{(2)}_{\ket{0}}} &= \circuitoperator{I} \otimes \circuitoperator{\mathcal{SP}_{\ket{0}}}\nonumber\\ 
    &= e\errorbasis{I}{I\otimes I}  + e\errorbasis{X}{X\otimes I} + e\errorbasis{Y}{Y\otimes I} + e\errorbasis{Z}{Z\otimes I} \nonumber \\
    &\qquad + e\errorbasis{I}{I\otimes Z}  + e\errorbasis{X}{X\otimes Z} + e\errorbasis{Y}{Y\otimes Z} + e\errorbasis{Z}{Z\otimes Z}.
\end{align}
Then from \eqref{eq:ct_cnot}, the application of the CNOT yields
\begin{align}
    &\circuitoperator{CX} \circ \circuitoperator{\mathcal{SP}^{(2)}_{\ket{0}}} = e\errorbasis{I}{I\otimes I}  + e\errorbasis{X}{X\otimes X} + e\errorbasis{Y}{Y\otimes X}  + e\errorbasis{Z}{Z\otimes I} \nonumber \\
    &\qquad+ e\errorbasis{I}{Z\otimes Z}  - e\errorbasis{X}{Y\otimes Y} + e\errorbasis{Y}{X\otimes Y} + e\errorbasis{Z}{I\otimes Z}.\label{eq:example-77}
\end{align}
Now, from \eqref{eq:md_tensor}, $\circuitoperator{\mathcal{MD}_{Z}} = e\errorbasis{I}{I} + e\errorbasis{Z}{Z}$ and
\begin{align}
    \circuitoperator{\mathcal{MD}^{(2)}_{Z}} &= \circuitoperator{I} \otimes \circuitoperator{\mathcal{MD}_{Z}}\nonumber\\
    &= e\errorbasis{I\otimes I}{I\otimes I}  + e\errorbasis{X\otimes I}{X\otimes I} + e\errorbasis{Y\otimes I}{Y\otimes I} + e\errorbasis{Z\otimes I}{Z\otimes I} \nonumber \\
    &\qquad + e\errorbasis{I\otimes Z}{I\otimes Z}  + e\errorbasis{X\otimes Z}{X\otimes Z} + e\errorbasis{Y\otimes Z}{Y\otimes Z} + e\errorbasis{Z\otimes Z}{Z\otimes Z}.\label{eq:example-78}
\end{align}

Composing \eqref{eq:example-77} with \eqref{eq:example-78} gives
\begin{equation} \label{eq:simple_projective_z_mes}
    \circuitoperator{\mathcal{MP}_Z} = e\errorbasis{I}{I\otimes I} + e\errorbasis{Z}{Z \otimes I} + e\errorbasis{I}{Z\otimes Z} + e\errorbasis{Z}{I\otimes Z},
\end{equation}
which aligns with \eqref{eq:ct_proj_meas_pauli_n_qubits}.

Only slightly more complicated are the $X$-basis and $Y$-basis projective measurement 
 of Fig.~\ref{fig:projectiv_meas}(b,c). Here we need to add Clifford operations to our first qubit before and after the CNOT. We leave the full derivation for the reader, referring to (\ref{eq:ct_H}, \ref{eq:ct_S_H}, \ref{eq:ct_H_Sd}):
\begin{align} \label{eq:simple_projective_x_mes}
    \circuitoperator{\mathcal{MP}_X} &= \left[\circuitoperator{H} \otimes \circuitoperator{\mathcal{MD}_{Z}}\right] \circ \circuitoperator{CX} \nonumber \\
    &\quad\qquad \circ \left[\circuitoperator{H} \otimes \circuitoperator{\mathcal{SP}_{\ket{0}}}\right] \nonumber\\
    &\quad=e\errorbasis{I}{I\otimes I}  + e\errorbasis{X}{X\otimes I} + e\errorbasis{I}{X\otimes Z}  + e\errorbasis{X}{I\otimes Z},
\end{align}
and
\begin{align} \label{eq:simple_projective_y_mes}
    \circuitoperator{\mathcal{MP}_Y} &= \left[\circuitoperator{S\circ H} \otimes \circuitoperator{\mathcal{MD}_{Z}}\right] \circ \circuitoperator{CX} \nonumber \\
    &\quad\qquad \circ \left[\circuitoperator{H\circ S^\dagger} \otimes \circuitoperator{\mathcal{SP}_{\ket{0}}}\right] \nonumber\\
    &\quad=e\errorbasis{I}{I\otimes I}  + e\errorbasis{Y}{Y\otimes I} + e\errorbasis{I}{Y\otimes Z}  + e\errorbasis{X}{I\otimes Z}.
\end{align}

From the above, it is easy to see how to construct a general projective Pauli measurement. To that end, we will note the pre and post-entanglement operations as follows:
\begin{align}
    &\circuitoperator{\textit{pre-en}(P)} =  
      \begin{cases}
      \circuitoperator{I} & P\in \{I,Z\}, \\
      \circuitoperator{H} & P=X, \\
      \circuitoperator{\mathcal{S\circ H}} & P=Y, \\
      \end{cases} \\
    &\circuitoperator{\textit{post-en}(P)} = 
    \begin{cases}
      \circuitoperator{I} & P\in \{I,Z\}, \\
      \circuitoperator{H} & P=X, \\
      \circuitoperator{\mathcal{H\circ S^\dagger}} & P=Y. \\
      \end{cases} 
\end{align}

We will also note the entanglement operation between qubit $i$ and $j$, while all the other qubits are idle as:

\begin{equation}
    \circuitoperator{\textit{entangle}^{(i,j)}(P)} =  
    \begin{cases}
      \circuitoperator{I} \otimes \circuitoperator{I} & P=I, \\
      \circuitoperator{CX} & P \neq I. \\
    \end{cases}
\end{equation}

Then for any Pauli operator $P\in\mathcal{P}^n$ we have:
\begin{align}
&\circuitoperator{\mathcal{MP}_P} \nonumber \\
&\quad= \circuitoperator{\mathcal{MD}^{(n+1)}_{Z}} \circ \left[\bigotimes_{i=1}^{n} \circuitoperator{\textit{post-en}(P_i)} \otimes \circuitoperator{I}\right] \nonumber \\
& \qquad \circ \bigcirc_{i=1}^n \circuitoperator{\textit{entangle}^{(i,n+1)}(P_i)}  \nonumber \\
& \qquad \circ \left[ \bigotimes_{i=1}^{n} \circuitoperator{\textit{pre-en}(P_i)} \otimes \circuitoperator{I}\right]  \circ \circuitoperator{\mathcal{SP}^{(n+1)}_{\ket{0}}} .
\end{align}

For our second example we consider post-measurement feedback, a common construction in quantum algorithms and protocols: perform one of a selection of operations based on the outcome of a measurement. Some examples include gate injection~\cite[\S{10.6.2}]{Nielsen_Chuang_2010}; repeat-until-success circuits \cite{paetznick2013repeat, wiebe2014quantum}; and quantum state teleportation~\cite[\S{1.3.7}]{Nielsen_Chuang_2010}. The last of these we will treat in detail below. 

At this stage we only consider selecting between one of two operations; we will consider the general case in \S\ref{section:noise-analysis} below. To keep a high level of generality, let us suppose each of our operations are quantum channels with signature $\mathfrak{H} \leadsto \mathfrak{K}$. Our selector bit provides an addition classical input, whose associated Hilbert space is $\mathbb{C}^2$, hence our overall circuit will be a quantum channel with signature $\mathbb{C}^2\otimes \mathfrak{H} \leadsto \mathfrak{K}$.

\begin{proposition}[Classical selection between two quantum channels]\label{proposition:selector-channel}
Let $\mathcal{M}^{(0)}, \mathcal{M}^{(1)}: \mathfrak{H} \leadsto \mathfrak{K}$ be quantum channels, and let $\mathcal{M}:\mathbb{C}^2\otimes \mathfrak{H} \leadsto \mathfrak{K}$ be the channel that selects $\mathcal{M}^{(0)},\mathcal{M}^{(1)}$ given the input bit. Then the circuit tensor of $\mathcal{M}$ is:
\begin{equation}
    \circuittensor{\mathcal{M}}{Z^\alpha\otimes E}{E'} = \frac{1}{2}\left(\circuittensor{\mathcal{M}^{(0)}}{E}{E'} + (-1)^\alpha \circuittensor{\mathcal{M}^{(1)}}{E}{E'}\right).
\end{equation}
\end{proposition}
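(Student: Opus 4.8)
The plan is to build explicit Kraus operators for the selector channel $\mathcal{M}$, write down its Choi matrix, and then read off the circuit tensor directly from the definition $\circuittensor{\mathcal{M}}{F}{E'} = \Tr((F^*\otimes E')T_{\mathcal{M}})$. First I would fix Kraus operators $\{M^{(0)}_j\}$ and $\{M^{(1)}_k\}$ for $\mathcal{M}^{(0)}$ and $\mathcal{M}^{(1)}$ respectively. The selector channel reads the classical control bit and applies the corresponding channel, so its action on a density operator on $\mathbb{C}^2\otimes\mathfrak{H}$ is $\mathcal{M}(\rho) = \sum_j (\bra{0}\otimes M^{(0)}_j)\rho(\ket{0}\otimes M^{(0)}_j{}^\dagger) + \sum_k (\bra{1}\otimes M^{(1)}_k)\rho(\ket{1}\otimes M^{(1)}_k{}^\dagger)$; hence the Kraus operators of $\mathcal{M}$ are $\{\bra{0}\otimes M^{(0)}_j\} \cup \{\bra{1}\otimes M^{(1)}_k\}$, each an element of $L(\mathbb{C}^2\otimes\mathfrak{H},\mathfrak{K})$.

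Next I would compute the Choi state of a single such Kraus operator. For $A = \bra{b}\otimes M$ with $b\in\{0,1\}$, using the Bell state $\ket{\beta}$ of $\mathbb{C}^2\otimes\mathfrak{H}$ built from the product basis $\{\ket{b'}\otimes\ket{x}\}$, one gets $\ket{T_A} = \tfrac{1}{\sqrt{2\dim\mathfrak{H}}}\sum_{b',x}\ket{b'}\otimes\ket{x}\otimes(\bra{b}b'\rangle M\ket{x}) = \ket{b}\otimes\tfrac{1}{\sqrt{2}}\ket{T_M}$, where $\ket{T_M}$ is the Choi state of $M$ taken with the $\tfrac{1}{\sqrt{\dim\mathfrak{H}}}$ normalization. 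So $\ketbra{T_A}{T_A} = \tfrac12\,\ketbra{b}{b}\otimes\ketbra{T_M}{T_M}$. Summing over all Kraus operators gives the Choi matrix $T_{\mathcal{M}} = \tfrac12\big(\ketbra{0}{0}\otimes T_{\mathcal{M}^{(0)}} + \ketbra{1}{1}\otimes T_{\mathcal{M}^{(1)}}\big)$, where the first tensor factor lives on the control $\mathbb{C}^2$, the remaining factors on $\mathfrak{H}\otimes\mathfrak{K}$.

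Finally I would evaluate the circuit tensor. The error basis on the domain $\mathbb{C}^2\otimes\mathfrak{H}$ is $\{Z^\alpha\otimes E\}$, so I need $\Tr\big(((Z^\alpha\otimes E)^*\otimes E')\,T_{\mathcal{M}}\big)$. Since $(Z^\alpha)^* = Z^{-\alpha} = Z^\alpha$ on $\mathbb{C}^2$ and is diagonal, $\Tr(Z^\alpha\ketbra{b}{b}) = (-1)^{\alpha b}$; pulling this scalar out of the tensor factor on $\mathbb{C}^2$ and using linearity of the trace over the $\ketbra{0}{0}$ and $\ketbra{1}{1}$ terms yields $\circuittensor{\mathcal{M}}{Z^\alpha\otimes E}{E'} = \tfrac12\big(\Tr((E^*\otimes E')T_{\mathcal{M}^{(0)}}) + (-1)^\alpha\Tr((E^*\otimes E')T_{\mathcal{M}^{(1)}})\big) = \tfrac12\big(\circuittensor{\mathcal{M}^{(0)}}{E}{E'} + (-1)^\alpha\circuittensor{\mathcal{M}^{(1)}}{E}{E'}\big)$, as claimed. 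The only genuinely delicate point is bookkeeping the tensor-factor orderings consistently — the control register threads through $T_{\mathcal{M}}$ in the first slot while the Choi-state convention interleaves domain and codomain spaces — so I would be careful to state which factor each operator acts on, but no step requires more than the composition/normalization facts already established for Choi states and circuit tensors.
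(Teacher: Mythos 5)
Your proposal is correct and follows essentially the same route as the paper's proof: identifying $\{\bra{0}\otimes A^{(0)}_j,\ \bra{1}\otimes A^{(1)}_k\}$ as Kraus operators for $\mathcal{M}$, deriving $T_{\mathcal{M}} = \tfrac12(\ketbra{0}{0}\otimes T_{\mathcal{M}^{(0)}} + \ketbra{1}{1}\otimes T_{\mathcal{M}^{(1)}})$ from the factored Choi states, and evaluating the trace with $\Tr(Z^\alpha\ketbra{b}{b}) = (-1)^{\alpha b}$. Your explicit attention to the tensor-factor ordering is a sound precaution the paper handles only implicitly.
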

\begin{proof}
    Let $\{A^{(0)}_j\}$ and $\{A^{(1)}_k\}$ be sets of Kraus operators for $\mathcal{M}^{(0)}$ and $\mathcal{M}^{(1)}$ respectively. Then $\left\{\bra{0}\otimes A^{(0)}_j, \bra{1}\otimes A^{(1)}_k\right\}$ is a set of Kraus operators for $\mathcal{M}$. Now, for example, 
    \begin{align}
        \nonumber \ket{T_{\bra{0}\otimes A^{(0)}_j}} &= \tfrac{1}{\sqrt{2\dim\mathfrak{H}}} \sum_{b,x} \ket{b,x} \otimes \left(\bra{0}\otimes A^{(0)}_j\right) \ket{b,x}\\
        &= \tfrac{1}{\sqrt{2}} \ket{0}\otimes \ket{T_{A^{(0)}_j}}.
    \end{align}
    and so
    \begin{equation}\label{eq:Choi-matrix-sum}
        T_{\mathcal{M}} = \tfrac{1}{2}\left( \ketbra{0}{0} \otimes T_{\mathcal{M}^{(0)}} + \ketbra{1}{1}\otimes T_{\mathcal{M}^{(1)}}\right) 
    \end{equation}
    
    Therefore, we compute:
    \begin{align}
        \nonumber &\circuittensor{\mathcal{M}}{Z^\alpha\otimes E}{E'} = \Tr((Z^\alpha \otimes E^*\otimes E)     T_{\mathcal{M}})\\
        \nonumber &\quad = \frac{1}{2}(\Tr((Z^\alpha \otimes E^*\otimes E) (\ketbra00 \otimes T_{\mathcal{M}^{(0)}}))\\
        \nonumber &\qquad\qquad + \Tr((Z^\alpha \otimes E^*\otimes E) (\ketbra11 \otimes T_{\mathcal{M}^{(1)}})))\\
        \nonumber &\quad = \frac{1}{2}\left(\Tr((E^*\otimes E) T_{\mathcal{M}^{(0)}}) + (-1)^\alpha \Tr((E^*\otimes E)  T_{\mathcal{M}^{(1)}})\right)\\
        &\quad = \frac{1}{2}\left(\circuittensor{\mathcal{M}^{(0)}}{E}{E'} + (-1)^\alpha \circuittensor{\mathcal{M}^{(1)}}{E}{E'}\right).
    \end{align}
\end{proof}

Next, we will show a simple example of a classically controlled Pauli operation.

\begin{example}[Classically controlled Pauli operation]\label{ex:clas_cp}
    In a classically controlled $n$-qubit Pauli operation, we have $n$ qubits that go through a quantum channel controlled by a classical wire. The two possible channels are the $n$-qubit identity with the circuit tensor $\circuittensor{\mathcal{I}}{E}{E'}=\tfrac{1}{2^n}\Tr(E^\dagger E')$ or the $n$-qubit Pauli $P$ channel, with the circuit tensor $\circuittensor{P}{E}{E'}=\tfrac{1}{2^n}\Tr(E^\dagger P^\dagger E' P)$. Therefore, using Proposition~\ref{proposition:selector-channel} the circuit tensor of the classically controlled $n$-qubit $P$ operations is: 
    \begin{align}
        \circuittensor{\text{cntl-}P}{Z^\alpha \otimes E}{E'} &= \tfrac{1}{2^{n+1}}\left[\Tr(E^\dagger E') + (-1)^\alpha \Tr(E^\dagger P^\dagger E' P)\right] \nonumber \\
        &= \begin{cases}
                \tfrac{1}{2}\left(1+(-1)^\alpha\omega(E,P)\right) &  E = E'  \\
                0 & E \ne E'.
            \end{cases}
    \end{align}
    And using the tensor basis, we get
    \begin{equation}\label{eq:classical_cont_pauli_ct}
    \circuitoperator{\text{cntl-}P} = \sum_{\text{ \tiny$ \begin{matrix} E\in \mathcal{P}^n \\ \omega(E,P) = 1   \end{matrix}$}} e\errorbasis{I\otimes E}{E} + \sum_{\text{ \tiny$ \begin{matrix} E\in \mathcal{P}^n \\ \omega(E,P) = -1   \end{matrix}$}} e\errorbasis{Z\otimes E}{E}  \ .
    \end{equation}
    
    We specify the Pauli to be $Z$ or $X$ for example, then in the component-free form we get:
    \begin{equation}\label{eq:ct_clas_cz}
        \circuitoperator{\text{cntl-}Z} = e\errorbasis{I\otimes I}{I} + e\errorbasis{Z\otimes X}{X} + e\errorbasis{Z\otimes Y}{Y} +e\errorbasis{I\otimes Z}{Z} \ ,
    \end{equation}
    \begin{equation}\label{eq:ct_clas_cx}
        \circuitoperator{\text{cntl-}X} = e\errorbasis{I\otimes I}{I} + e\errorbasis{I\otimes X}{X} + e\errorbasis{Z\otimes Y}{Y} +e\errorbasis{Z\otimes Z}{Z} \ .
    \end{equation}
\end{example}

We will conclude this section with a construction of the circuit tensor for the teleportation circuit. It consists of many building blocks we developed in the paper, and can act as a cookbook on how to construct circuit tensors for complicated circuits.

\begin{figure}[ht]
    \centering
    \includegraphics[width=0.9\linewidth]{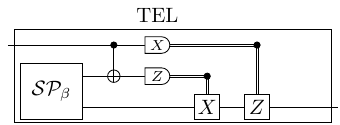}
    \caption[Teleportion circuit]{Teleportion circuit composed of a Bell-state preparation (including the creation of two new qubits - $\mathcal{S}\mathcal{P}_{\beta}$), an entangling operation, X and Z destructive measurements, and classically controlled Pauli corrections for the output qubit. In the paper we refer to the full circuit as TEL.}
    \label{fig:teleportation}
\end{figure}

\begin{example}[Circuit tensor for the teleportation circuit]\label{ex:teleportation}
We will show how to construct the circuit tensor of the teleportation circuit. Fig.~\ref{fig:teleportation} presents the teleportation circuit, it has an input of one qubit and an output of one qubit. Internally it creates a Bell State~\eqref{eq:ct_bell_sp}, entangels the input qubit to one half of the Bell State~\eqref{eq:ct_cnot}, then it performs the destructive $X$ and $Z$ measurements~\eqref{eq:md_tensor} and applies the classically controlled Pauli corrections~\eqref{eq:ct_clas_cz}~\eqref{eq:ct_clas_cx}. To create the teleportation circuit tensor We also use the circuit tensor of the quantum and classical identity~\eqref{eq:ct_identity}~\eqref{eq:ct_classical_identity}.

\begin{equation}\label{eq:noiseless_teleportation}\begin{split}
 \circuitoperator{TEL} &=\circuitoperator{\text{cntl-}Z} \circ [\circuitoperator{I_{classical}} \otimes \circuitoperator{\text{cntl-}X}]  \\
 & \circ [\circuitoperator{\mathcal{MD}_X} \otimes \circuitoperator{\mathcal{MD}_Z} \otimes \circuitoperator{I} ]  \\
 &  \circ [\circuitoperator{CX} \otimes \circuitoperator{I}] \circ [\circuitoperator{I} \otimes \circuitoperator{\mathcal{SP}_\beta}] \ .
\end{split}\end{equation}

For ease of presentation, we will derive the above in parts:

\begin{equation} \label{eq:tel_p2} \begin{split}
    &\circuitoperator{\mathcal{MD}_X} \otimes \circuitoperator{\mathcal{MD}_Z} \otimes \circuitoperator{I} \\
    &\quad= (e\errorbasis{I}{I} + e\errorbasis{X}{Z}) \otimes (e\errorbasis{I}{I} + e\errorbasis{Z}{Z}) \otimes \circuitoperator{I} \\
    &\quad =\quad e\errorbasis{I\otimes I\otimes I}{I\otimes I\otimes I} + ~e\errorbasis{X\otimes I\otimes I}{Z\otimes I\otimes I} + e\errorbasis{I\otimes Z\otimes I}{I\otimes Z\otimes I} ~+ e\errorbasis{X\otimes Z\otimes I}{Z\otimes Z\otimes I} \\
    &\qquad + e\errorbasis{I\otimes I\otimes X}{I\otimes I\otimes X} + e\errorbasis{X\otimes I\otimes X}{Z\otimes I\otimes X} + e\errorbasis{I\otimes Z\otimes X}{I\otimes Z\otimes X} + e\errorbasis{X\otimes Z\otimes X}{Z\otimes Z\otimes X} \\
    &\qquad + e\errorbasis{I\otimes I\otimes Y}{I\otimes I\otimes Y} + e\errorbasis{X\otimes I\otimes Y}{Z\otimes I\otimes Y} + e\errorbasis{I\otimes Z\otimes Y}{I\otimes Z\otimes Y} ~+ e\errorbasis{X\otimes Z\otimes Y}{Z\otimes Z\otimes Y} \\
    &\qquad + e\errorbasis{I\otimes I\otimes Z}{I\otimes I\otimes Z} + e\errorbasis{X\otimes I\otimes Z}{Z\otimes I\otimes Z} + e\errorbasis{I\otimes Z\otimes Z}{I\otimes Z\otimes Z} ~+ e\errorbasis{Z\otimes X\otimes Z}{Z\otimes Z\otimes Z}
\end{split} \end{equation}

\begin{equation}\label{eq:tel_p1}\begin{split}
    &\circuitoperator{\text{cntl-}Z} \circ [\circuitoperator{I_{classical}} \otimes \circuitoperator{\text{cntl-}X}]  \\ 
    &~= (e\errorbasis{I\otimes I}{I} + e\errorbasis{Z\otimes X}{X} + e\errorbasis{Z\otimes Y}{Y} +e\errorbasis{I\otimes Z}{Z}) \\
    &\quad \circ (\quad e\errorbasis{I\otimes I\otimes I}{I\otimes I} + e\errorbasis{I\otimes I\otimes X}{I\otimes X} + e\errorbasis{I\otimes Z\otimes Y}{I\otimes Y} +e\errorbasis{I\otimes Z\otimes Z}{I\otimes Z} \\
    &\qquad + e\errorbasis{Z\otimes I\otimes I}{Z\otimes I} + e\errorbasis{Z\otimes I\otimes X}{Z\otimes X} + e\errorbasis{Z\otimes Z\otimes Y}{Z\otimes Y} +e\errorbasis{Z\otimes Z\otimes Z}{Z\otimes Z}) \\
    &~= e\errorbasis{I\otimes I\otimes I}{I} +e\errorbasis{I\otimes Z\otimes Z}{Z} +  e\errorbasis{Z\otimes I\otimes X}{X} + e\errorbasis{Z\otimes Z\otimes Y}{Y}
\end{split} \end{equation}

\begin{equation} \label{eq:tel_p3}\begin{split}
&[\circuitoperator{CX} \otimes \circuitoperator{I}] \circ [ \circuitoperator{I} \otimes \circuitoperator{\mathcal{SP}_\beta}] \\
&\quad=[\circuitoperator{CX} \otimes \circuitoperator{I}]\\
&\qquad \circ(\quad e\errorbasis{I}{I\otimes I\otimes I} + e\errorbasis{I}{I\otimes X\otimes X} ~- e\errorbasis{I}{I\otimes Y\otimes Y} ~+ e\errorbasis{I}{I\otimes Z\otimes Z} \\
&\qquad\quad +e\errorbasis{X}{X\otimes I\otimes I}  + e\errorbasis{X}{X\otimes X\otimes X} - e\errorbasis{X}{X\otimes Y\otimes Y}  + e\errorbasis{X}{X\otimes Z\otimes Z} \\
&\qquad\quad + e\errorbasis{Y}{Y\otimes I\otimes I} + e\errorbasis{Y}{Y\otimes X\otimes X} - e\errorbasis{Y}{Y\otimes Y\otimes Y} + e\errorbasis{Y}{Y\otimes Z\otimes Z} \\
&\qquad\quad + e\errorbasis{Z}{Z\otimes I\otimes I} + e\errorbasis{Z}{Z\otimes X\otimes X}- e\errorbasis{Z}{Z\otimes Y\otimes Y} + e\errorbasis{Z}{Z\otimes Z\otimes Z}) \\
&\quad= e\errorbasis{I}{I\otimes I\otimes I} + e\errorbasis{I}{I\otimes X\otimes X} - e\errorbasis{I}{Z\otimes Y\otimes Y} + e\errorbasis{I}{Z\otimes Z\otimes Z} \\
&\qquad + e\errorbasis{X}{X\otimes X\otimes I} + e\errorbasis{X}{X\otimes I\otimes X} - e\errorbasis{X}{Y\otimes Z\otimes Y} - e\errorbasis{X}{Y\otimes Y\otimes Z}\\
&\qquad  + e\errorbasis{Y}{Y\otimes X\otimes I} + e\errorbasis{Y}{Y\otimes I\otimes X} + e\errorbasis{Y}{X\otimes Z\otimes Y} + e\errorbasis{Y}{X\otimes Y\otimes Z}  \\
&\qquad + e\errorbasis{Z}{Z\otimes I\otimes I} + e\errorbasis{Z}{Z\otimes X\otimes X} - e\errorbasis{Z}{I\otimes Y\otimes Y} + e\errorbasis{Z}{I\otimes Z\otimes Z}
\end{split} \end{equation}

Composing \eqref{eq:tel_p1} with \eqref{eq:tel_p2} we get:
\begin{equation}\label{eq:tel_p1_2}
    e\errorbasis{I\otimes I\otimes I}{I} +e\errorbasis{I\otimes Z\otimes Z}{Z} +  e\errorbasis{X\otimes I\otimes X}{X} +   e\errorbasis{X\otimes Z\otimes Y}{Y} \ .
\end{equation}

Further composing \eqref{eq:tel_p1_2} and \eqref{eq:tel_p3} results with:
\begin{equation}
\circuitoperator{TEL} = e\errorbasis{I}{I} + e\errorbasis{X}{X} + e\errorbasis{Y}{Y} + e\errorbasis{Z}{Z} = \circuitoperator{I}.
\end{equation}

\end{example}

As expected, we got the identity circuit tensor~\eqref{eq:ct_identity}. This might seem meaningless, but next, we will show how to add noise sources to circuit tensors, enabling an easy way to analyze a noisy circuit.

\section{Noise Analysis in quantum circuits}\label{section:noise-analysis}

In this section, we show how to incorporate noise sources into circuits and analyze these using the circuit tensor. Here, we will only treat models that depend on a finite number of noise modes, which we index as $m=0, \dots, M-1$ (where implicitly, mode $m=0$ denotes no error). Given a quantum channel for each noise mode, we form the composite quantum channel that takes $m$ as a classical input selecting the error mode. We then define a ``trace'' akin to \cite[Lemma IV.1]{cao2023quantum} that attaches a formal variable to this classical input, which may also be interpreted as a weight of the corresponding error mode. This defines a ``circuit enumerator'' that we will use extensively in later sections to count error paths in syndrome extraction circuits of quantum codes.

Setting notation, for a given quantum circuit let us write $\mathcal{U}: \mathfrak{H} \leadsto \mathfrak{K}$ for the quantum channel of that circuit with no errors. We then write $\widetilde{\mathcal{U}}(m):\mathfrak{H} \leadsto \mathfrak{K}$ for the analogous noisy circuit operating under noise mode $m$ (so $\widetilde{\mathcal{U}}(0) = \mathcal{U}$).  We combine these channels into a channel, $\widetilde{\mathcal{U}}: \mathbb{C}^M \otimes \mathfrak{H} \leadsto \mathfrak{K}$, as illustrated in Fig.~\ref{fig:muxed-error-channel}a,  where the first factor in the domain is the (classical) selection for the noise mode. Formally,
\begin{equation}\label{eq:composite-error-channel}
\widetilde{\mathcal{U}}(\ketbra{m}{m}\otimes \rho) = \widetilde{\mathcal{U}}(m)(\rho).
\end{equation}

\begin{figure}[ht]
    \centering
    \includegraphics{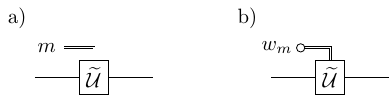}
    \caption{(a) An error channel accepts a classical input to select the error mode. (b) The trace (represented by the open circle) creates the circuit enumerator of the channel, whose variables track occurrences of error modes.}
    \label{fig:muxed-error-channel}
\end{figure}

\begin{lemma}\label{lemma:circuit-tensor-error-model}
    $\circuitoperator{\widetilde{\mathcal{U}}} = \sum_{m,E,E'} \zeta_M^{-\alpha m}
    \circuittensor{\widetilde{\mathcal{U}}(m)}{E}{E'}\: e\errorbasis{Z^\alpha\otimes E}{E'}$, where $\zeta_M = e^{2\pi i/M}$ is the canonical $M$-th root-of-unity.
\end{lemma}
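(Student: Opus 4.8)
The plan is to treat this as the $M$-ary analogue of Proposition~\ref{proposition:selector-channel} (classical selection between two channels) and argue the same way: exhibit a Kraus family for the composite channel $\widetilde{\mathcal{U}}:\mathbb{C}^M\otimes\mathfrak{H}\leadsto\mathfrak{K}$, assemble its Choi matrix, and then unfold the definition of the circuit tensor relative to the product error basis $\{Z_M^\alpha\otimes E\}$ of the domain. The displayed right-hand side is to be read as the component-free expansion, with $\alpha$ also ranging over $0,\dots,M-1$ so that each basis element $e\errorbasis{Z^\alpha\otimes E}{E'}$ occurs once and picks up coefficient $\sum_m\zeta_M^{-\alpha m}\circuittensor{\widetilde{\mathcal{U}}(m)}{E}{E'}$.

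First I would fix, for each noise mode $m$, Kraus operators $\{A^{(m)}_j\}_j$ for $\widetilde{\mathcal{U}}(m):\mathfrak{H}\leadsto\mathfrak{K}$. Since a classical input register is, by convention, read out (dephased) before anything else happens, the channel $\widetilde{\mathcal{U}}$ specified by \eqref{eq:composite-error-channel} is represented by the Kraus family $\{\,A^{(m)}_j(\bra{m}\otimes I_{\mathfrak{H}})\,\}_{m,j}$ of operators $\mathbb{C}^M\otimes\mathfrak{H}\to\mathfrak{K}$; one checks at once that $\sum_{m,j}(\bra{m}\otimes I)(A^{(m)}_j)^\dagger A^{(m)}_j(\ket{m}\otimes I) = \sum_m \ketbra{m}{m}\otimes I = I$ and that this family reproduces \eqref{eq:composite-error-channel} on inputs $\ketbra{m}{m}\otimes\rho$. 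Feeding each of these Kraus operators into the definition of the Choi state, the factor $\bra{m}\otimes I$ collapses the sum over the computational basis of $\mathbb{C}^M$ to the single term $m$, giving
\[
\ket{T_{A^{(m)}_j(\bra{m}\otimes I)}} = \tfrac{1}{\sqrt{M}}\,\ket{m}\otimes\ket{T_{A^{(m)}_j}}.
\]
Summing the associated projectors over $j$ and recognising $\sum_j\ketbra{T_{A^{(m)}_j}}{T_{A^{(m)}_j}} = T_{\widetilde{\mathcal{U}}(m)}$ then yields the block-diagonal Choi matrix $T_{\widetilde{\mathcal{U}}} = \tfrac{1}{M}\sum_m\ketbra{m}{m}\otimes T_{\widetilde{\mathcal{U}}(m)}$, the $M$-mode version of \eqref{eq:Choi-matrix-sum}.

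Then I would substitute this into $\circuittensor{\widetilde{\mathcal{U}}}{Z^\alpha\otimes E}{E'} = \Tr\bigl((Z_M^{-\alpha}\otimes E^*\otimes E')\,T_{\widetilde{\mathcal{U}}}\bigr)$. Because $T_{\widetilde{\mathcal{U}}}$ is block-diagonal in the classical register, the trace splits as a sum over $m$ of $\bra{m}Z_M^{-\alpha}\ket{m}\cdot\Tr\bigl((E^*\otimes E')\,T_{\widetilde{\mathcal{U}}(m)}\bigr)$; by \eqref{eq:powers_of_classical_error_basis_zn} the first factor equals $\zeta_M^{-\alpha m}$, while the second is by definition $\circuittensor{\widetilde{\mathcal{U}}(m)}{E}{E'}$. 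Collecting these over all indices in the $e\errorbasis{Z^\alpha\otimes E}{E'}$ basis gives the asserted identity, the normalization $\tfrac{1}{M}$ coming from $\dim\mathbb{C}^M = M$ in the Choi state exactly as the $\tfrac{1}{2}$ does in the $M=2$ special case \eqref{eq:Choi-matrix-sum} of Proposition~\ref{proposition:selector-channel}.

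I do not expect a genuinely hard step here: the only point that deserves care is the first one, namely that the channel $\widetilde{\mathcal{U}}$ extended to all of $\mathbb{C}^M\otimes\mathfrak{H}$ — not merely to the classically prepared inputs $\ketbra{m}{m}\otimes\rho$, on which \eqref{eq:composite-error-channel} alone determines it — is the one that reads the classical register first, which is what legitimises the Kraus family used above and makes the resulting Choi matrix genuinely canonical. Everything after that is routine tensor-factor bookkeeping.
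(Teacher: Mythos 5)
Your proof is correct and follows essentially the same route as the paper's: establish the block-diagonal Choi matrix $T_{\widetilde{\mathcal{U}}} = \tfrac{1}{M}\sum_m \ketbra{m}{m}\otimes T_{\widetilde{\mathcal{U}}(m)}$ (you supply the Kraus-operator and Choi-state details that the paper delegates to ``just as in Proposition~\ref{proposition:selector-channel}'') and then unfold the definition of the circuit tensor in the product error basis $\{Z_M^\alpha\otimes E\}$. You are also right that a normalization $\tfrac{1}{M}$ should accompany the coefficient --- matching the $\tfrac{1}{2}$ in the $M=2$ case and the $\tfrac{1}{M}$ that reappears in the proof of Proposition~\ref{proposition:circuit-enumerator} --- whereas the lemma as printed (and the final line of the paper's own computation) drops it.
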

\begin{proof}
    Just as in the Proposition~\ref{proposition:selector-channel} above, the Choi matrix of the composite channel has 
    \begin{equation}
        T_{\widetilde{\mathcal{U}}} = \frac{1}{M}\sum_{m=0}^{M-1} \ketbra{m}{m}\otimes T_{\widetilde{\mathcal{U}}(m)}.
    \end{equation} 
    Thus by definition
    \begin{align}
        \nonumber \circuittensor{\widetilde{\mathcal{U}}}{Z^\alpha\otimes E}{E'} &= \Tr((Z^{-\alpha}\otimes E^* \otimes E') T_{\widetilde{\mathcal{U}}})\\
        \nonumber &= \sum_{m=0}^{M-1} \bra{m}Z^{-\alpha}\ket{m} \Tr((E^* \otimes E')T_{\widetilde{\mathcal{U}}(m)})\\
        &= \sum_{m=0}^{M-1} \zeta_M^{-\alpha m} \circuittensor{\widetilde{\mathcal{U}}(m)}{E}{E'} \ .
    \end{align}
\end{proof}

Note that this single channel is merely an encoding of all the different error modes of the circuit; there is yet no assignment of weight or likelihood attached to each mode. As the factor $\mathbb{C}^M$ supports only classical information, a density operator $\sigma \in \mathcal{S}(\mathbb{C}^M)$ must be of the form $\sigma = \sum_{m=0}^{M-1} w_m \ketbra{m}{m}$. So $\widetilde{\mathcal{U}}(\sigma \otimes \rho)$ is the result of the noisy channel on $\rho \in \mathcal{S}(\mathfrak{H})$ where each error mode $m$ is selected with probability $w_m$.

We want to treat the $w_m$ as formal variables, and so create a \emph{``circuit enumerator''} that tracks the occurrence of each error mode. We can achieve that with a trace-like operation $\widetilde{\Tr}: \mathcal{S}(\mathbb{C}^M) \to \mathbb{R}[w_0,\dots, w_{M-1}]$, as illustrated in Fig.~\ref{fig:muxed-error-channel}b, by $\widetilde{\Tr}(\ketbra{m}{m}) = w_m$ extended linearly to all density operators--compare to \cite[Definition IV.2]{cao2023quantum}.

But, at the level of circuit tensors, we instead define Fourier variables $u_\alpha = \sum_{m=0}^{M-1} \zeta_M^{\alpha m}w_m$, where, as above $\zeta_M = e^{2\pi i/M}$, and extend this trace to tensors, for which we do not introduce new notation, as 
\begin{equation}\label{eq:tensor-trace-definition}
    \widetilde{\Tr}(e\errorbasis{Z^\alpha\otimes E}{E'}) = u_\alpha e\errorbasis{E}{E'} \ .
\end{equation}
The link between the two above notions of trace is given as follows.

\begin{proposition}\label{proposition:circuit-enumerator}
    Let $\widetilde{\mathcal{U}}(m):\mathfrak{H} \leadsto \mathfrak{K}$ be quantum channels for $m=0, \dots, M-1$, each with an asssociated variable $w_m$. Define $\widetilde{\mathcal{U}}: \mathbb{C}^M \otimes \mathfrak{H} \leadsto \mathfrak{K}$ as in (\ref{eq:composite-error-channel}).
    If as above, $\widetilde{\Tr}(e\errorbasis{Z^\alpha\otimes E}{E'}) = u_\alpha e\errorbasis{E}{E'}$ with $u_\alpha = \sum_{m} \zeta_M^{\alpha m}w_m$ we then have
    \begin{equation}\label{eq:circuit-enumerator}
        \widetilde{\Tr}\left[\circuitoperator{\widetilde{\mathcal{U}}}\right] = \sum_{m=0}^{M-1} w_m \circuitoperator{\widetilde{\mathcal{U}}(m)}.
    \end{equation}
\end{proposition}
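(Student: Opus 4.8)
The plan is to combine Lemma~\ref{lemma:circuit-tensor-error-model}, which expands $\circuitoperator{\widetilde{\mathcal{U}}}$ in the tensor basis, with the definition~\eqref{eq:tensor-trace-definition} of $\widetilde{\Tr}$ on tensors, and then recognize the resulting coefficient as the inverse discrete Fourier transform that recovers $w_m$ from $u_\alpha$. Concretely, I would first write
\begin{equation}\label{eq:proof-plan-step1}
    \widetilde{\Tr}\left[\circuitoperator{\widetilde{\mathcal{U}}}\right] = \sum_{m,\alpha,E,E'} \zeta_M^{-\alpha m} \circuittensor{\widetilde{\mathcal{U}}(m)}{E}{E'}\: \widetilde{\Tr}\left(e\errorbasis{Z^\alpha\otimes E}{E'}\right),
\end{equation}
using linearity of $\widetilde{\Tr}$ and Lemma~\ref{lemma:circuit-tensor-error-model}. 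Applying~\eqref{eq:tensor-trace-definition} turns each $\widetilde{\Tr}(e\errorbasis{Z^\alpha\otimes E}{E'})$ into $u_\alpha e\errorbasis{E}{E'}$, so the right-hand side becomes $\sum_{m,\alpha,E,E'} \zeta_M^{-\alpha m} u_\alpha \circuittensor{\widetilde{\mathcal{U}}(m)}{E}{E'}\: e\errorbasis{E}{E'}$.

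Next I would substitute $u_\alpha = \sum_{m'} \zeta_M^{\alpha m'} w_{m'}$ and collect the sum over $\alpha$, which gives a factor $\sum_{\alpha=0}^{M-1} \zeta_M^{\alpha(m'-m)}$. This is the standard orthogonality relation for $M$-th roots of unity: it equals $M$ when $m' \equiv m \pmod M$ and $0$ otherwise. Performing the sum over $m'$ then kills all terms except $m'=m$ and produces an overall factor of $M$; I would need to check the normalization carefully to confirm whether the definition of $u_\alpha$ (or an implicit $1/M$ somewhere) makes this come out to exactly~$1$, or whether the intended statement absorbs the constant — in the excerpt as written the cleanest reading is that the $\widetilde{\Tr}$ normalization is chosen precisely so the $M$ cancels, and I would state that explicitly. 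After the dust settles we are left with $\sum_{m,E,E'} w_m \circuittensor{\widetilde{\mathcal{U}}(m)}{E}{E'}\: e\errorbasis{E}{E'}$, and recognizing the inner double sum over $E,E'$ as the component-free form of $\circuitoperator{\widetilde{\mathcal{U}}(m)}$ yields the claimed identity~\eqref{eq:circuit-enumerator}.

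The only real subtlety — and the step I would treat most carefully — is the bookkeeping of normalization constants: reconciling the $1/M$ appearing in the Choi matrix of the composite channel (from the proof of Lemma~\ref{lemma:circuit-tensor-error-model}), the absence of a $1/M$ in the definition of $u_\alpha$, and the factor $M$ produced by the root-of-unity orthogonality sum. Everything else is a routine interchange of finite sums and an appeal to linearity of $\widetilde{\Tr}$, both of which are unproblematic since all sums are finite. I would therefore organize the proof as: (i) expand via Lemma~\ref{lemma:circuit-tensor-error-model}; (ii) apply $\widetilde{\Tr}$ termwise using~\eqref{eq:tensor-trace-definition}; (iii) substitute the definition of $u_\alpha$ and swap the order of summation; (iv) evaluate the geometric sum over $\alpha$ by root-of-unity orthogonality; (v) simplify and re-assemble the component-free form, commenting on the normalization as the one place care is required.
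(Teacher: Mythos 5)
Your proposal is correct and follows essentially the same route as the paper: expand via Lemma~\ref{lemma:circuit-tensor-error-model}, apply $\widetilde{\Tr}$ termwise using \eqref{eq:tensor-trace-definition}, substitute $u_\alpha = \sum_{m'}\zeta_M^{\alpha m'}w_{m'}$, and evaluate the sum over $\alpha$ by root-of-unity orthogonality. On the one point you leave open, the needed $1/M$ indeed comes from the Choi matrix $T_{\widetilde{\mathcal{U}}} = \tfrac{1}{M}\sum_m \ketbra{m}{m}\otimes T_{\widetilde{\mathcal{U}}(m)}$ --- the statement of Lemma~\ref{lemma:circuit-tensor-error-model} drops this factor, but the paper's proof of the proposition reinstates it, so the factor of $M$ from the orthogonality sum cancels exactly as you anticipated.
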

\begin{proof}
    We apply the trace definition (\ref{eq:tensor-trace-definition}) to the circuit tensor as given in Lemma~\ref{lemma:circuit-tensor-error-model}, yielding
    \begin{align}
        \nonumber \widetilde{\Tr}\left[\circuitoperator{\widetilde{\mathcal{U}}}\right] &= \sum_{\alpha,m,E,E'} \frac{1}{M} \zeta_M^{-\alpha m} u_\alpha \circuittensor{\widetilde{\mathcal{U}}(m)}{E}{E'}\: e^E_{E'}\\
        \nonumber &= \sum_{m,E,E'} w_m \circuittensor{\widetilde{\mathcal{U}}(m)}{E}{E'}\: e^E_{E'}\\
        &= \sum_m w_m \circuitoperator{\widetilde{\mathcal{U}}(m)}.
    \end{align}
\end{proof}

We can consider (\ref{eq:circuit-enumerator}) as a definition for the circuit enumerator of the noisy circuit $\widetilde{\mathcal{U}}$. However, to utilize the machinery developed in \cite{cao2023quantum} we instead attach a weight function to an error model. Then the circuit enumerator is constructed analogously to the tensor enumerators of that work.

\begin{definition}[Circuit enumerator]
    Let $\mathcal{E}$ be an error basis. A weight function is any function $\mathrm{wt}:\mathcal{E} \to \mathbb{Z}^k_{\geq 0}$. If $u = (u_0, \dots, u_k)$ is a tuple of indeterminates, we write $u^{\mathrm{wt}(E)} = u_0^{\mathrm{wt}(E)_0}\cdots  u_{k}^{\mathrm{wt}(E)_k}$. The circuit enumerator of an error channel $\widetilde{\mathcal{U}}$ (as in \eqref{eq:composite-error-channel}) with associated weight function $\mathrm{wt}$ is defined by
    \begin{equation}\label{eq:circuit-enumerator-definition}
        \widetilde{\Tr}\left[\circuitoperator{\widetilde{\mathcal{U}}}\right] = \sum_{\alpha,E,E'} \circuittensor{\widetilde{\mathcal{U}}}{Z^\alpha\otimes E}{E'} u^{\mathrm{wt}(Z^\alpha)} e\errorbasis{E}{E'}.
    \end{equation}
\end{definition}

We hasten to note that the trace on the left side of (\ref{eq:circuit-enumerator-definition}) is purely notational, but from (\ref{eq:tensor-trace-definition}) it is consistent with the trace used in the previous Proposition. We also note that our definition of a weight function differs from that in \cite{cao2023quantum}; using the terminology of that paper, we only have need of scalar weight functions and so restrict to that case.

\begin{example}[Bit flip error]
We begin with a very simple classical error model: the bit flip error. We view the bit flip as a classical xor operation of an input bit with the bit that controls whether the error occurs, as illustrated in Fig.~\ref{fig:readout_circ}. In Example~\ref{exp:xor} we found $\circuitoperator{\mathtt{xor}} = e\errorbasis{I\otimes I}{I} + e\errorbasis{Z\otimes Z}{Z}$, and so performing the trace (\ref{eq:tensor-trace-definition}) we find
\begin{equation}
    \widetilde{\Tr}\left[\circuitoperator{\mathtt{xor}}\right] = u_0 e\errorbasis{I}{I} + u_1 e\errorbasis{Z}{Z}.
\end{equation}
From this equation, we can deduce the weight function of our error model: $\mathrm{wt}(I) = (1,0)$ and $\mathrm{wt}(Z) = (0,1)$. For purposes of enumerating bit-flip errors, we take weight variables $w_0 = 1$ (no bit-flip) and $w_1 = r$ (one bit-flip), so that $r$ becomes the variable in the associated weight enumerator. Using the relationship $u_0 = w_0 + w_1$ and $u_1 = w_0 - w_1$ we have
\begin{equation}\label{eq:circuit-xor-traced}
    \widetilde{\Tr}[\circuitoperator{\mathtt{xor}}] = (1+r) e\errorbasis{I}{I} + (1 - r) e\errorbasis{Z}{Z}.
\end{equation}
\end{example}

\begin{figure}[t]
    \centering
    \includegraphics{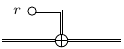}
    \caption{Modeling a bit-flip on a (classical) measurement result. Here, $r$ is its weight enumerator variable.}
    \label{fig:readout_circ}
\end{figure}

\begin{example}[Pauli readout assignment error]
    In Example~\ref{example:pauli-measurement}, we found the circuit tensor for the projective measurement of a Pauli operator $S \in \mathcal{P}^n$ is given by $\circuitoperator{\mathcal{MP}_S} = \sum_{P\::\:\omega(P,S) = 1} e\errorbasis{P}{I\otimes P} + \mu(PS) e\errorbasis{P}{Z\otimes \mu(PS)PS}$. Now consider the quantum channel where this measurement may suffer from a readout assignment error. We compose the above circuit tensor with \eqref{eq:circuit-xor-traced} to obtain the circuit weight enumerator of this channel:
    \begin{equation}\begin{split}        &\circuitoperator{\widetilde{\mathcal{MP}}_S} 
    \\&= \sum_{P\::\:\omega(P,S) = 1} (1+r)e\errorbasis{P}{I\otimes P} + (1-r)\mu(PS) e\errorbasis{P}{Z\otimes \mu(PS)PS}.
    \end{split}\end{equation}
    Note: as $r$ indicates the occurrence of a readout assignment error, one sees that neither of terms $e\errorbasis{P}{I\otimes P}$ nor $e\errorbasis{P}{Z\otimes \mu(PS)PS}$ can be associated with an assignment error, just as neither can be associated to measuring $+1$ versus $-1$.
\end{example}

In the above, we had $w_m$ and $u_\alpha$ related by the Fourier transform as implicitly we identified the error modes as elements of the integers modulo $M$. However, in some cases, this is not what we want to do. For example, consider the ($q$-ary) Pauli error channel $\widetilde{\mathcal{D}}: \mathbb{C}^{q^2} \otimes \mathfrak{H} \leadsto \mathfrak{H}$, where $\widetilde{\mathcal{D}}\left(\ketbra{m}{m} \otimes \rho\right) = P_m \rho P_m^\dagger$. Here, we want our error modes to be a pair $m = (m_1,m_2)\in \mathbb{Z}_q^2$, which indexes our Pauli operators $P_m = X^{m_1}Z^{m_2}$ (except when $q=2$ where $P_{(1,1)} = Y$).

 Based on this recognition, we take classical error basis of $\mathbb{C}^{q^2}$ to be $\{Z^\beta = Z^{\beta_1}\otimes Z^{\beta_2}\}$. Recall that 
 \begin{equation}
     P_m P_n = \zeta_q^{m_2 n_1 - m_1 n_2}P_n P_m = \zeta_q^{c(m)\cdot n} P_n P_m,
 \end{equation} 
 where $\zeta_q = e^{2\pi i/q}$ and $c(m) = (m_2, -m_1)$. Then the circuit tensor of the Pauli error channel is
\begin{align}
    &\circuittensor{\widetilde{\mathcal{D}}}{Z^\beta \otimes P_m}{P_{m'}}\nonumber\\
    &= \frac{1}{q^3} \sum_{n\in\mathbb{Z}_q^2}  \Tr((P_m^\dagger\otimes Z^{-\beta})(P^\dagger_n \otimes \ket{n}) P_{m'} (P_n\otimes \bra{n}))\nonumber\\
    &= \frac{1}{q^3} \sum_{n\in\mathbb{Z}_q^2} \zeta_q^{-n\cdot\beta} \Tr(P_m^\dagger P_n^\dagger P_{m'} P_n)\nonumber\\
    &= \frac{1}{q^3}  \sum_{n\in\mathbb{Z}_q^2} \zeta_q^{n\cdot(c(m') - \beta)} \Tr(P_m^\dagger P_{m'})\nonumber\\
    &= \begin{cases}
     1 & \text{if $m = m'$ and $\beta = c(m)$}\\
     0 &\text{otherwise.}\end{cases}
\end{align}
That is $\circuitoperator{\widetilde{\mathcal{D}}} = \sum_m e^{Z^{c(m)}\otimes P_m}_{P_m}$, and hence the trace has
\begin{equation}\label{eq:Pauli-error-enumerator}
    \widetilde{\Tr}\left[\circuitoperator{\widetilde{\mathcal{D}}}\right] = \sum_m u_{c(m)} e^{P_m}_{P_m}.
\end{equation}
Now, the relationship between our formal variables $u_\beta$ and the weight of a Pauli error $w_m$ is $u_\beta = \sum_{m\in \mathbb{Z}_q^2} \zeta_q^{\beta\cdot m} w_m$, the Fourier transform over $\mathbb{Z}_q^2$. 

In the case of a uniform Pauli error model, we take our weights as $w_{(0,0)} = w$ and $w_m = z$ when $m \not= (0,0)$. In particular, our weight function is $\mathrm{wt}(I) = (1,0)$ and $\mathrm{wt}(Z^\beta) = (0,1)$ for all $\beta \not= (0,0)$. Then our Fourier transform reduces to
\begin{equation}
    u_{(0,0)} = \sum_{m\in \mathbb{Z}_q^2} w_m = w + (q^2 - 1) z,
\end{equation}
and for $\beta \not= (0,0)$,
\begin{align}
    \nonumber u_\beta &= w + \sum_{m \not= 0} \zeta_q^{\beta\cdot m} z\\
    &= w - z + z \sum_{m \in \mathbb{Z}_q^2} \zeta_q^{\beta\cdot m} = w - z.
\end{align}
This is of course the usual MacWilliams transform for quantum codes \cite{shor1997quantum, rains1998quantum}. Hence we have shown the following result.

\begin{proposition}[Circuit enumerator of a uniform Pauli error channel]\label{proposition:pauli-error-channel}
    Let $\widetilde{\mathcal{D}}$ be the uniform ($q$-ary) Pauli error channel, where no error occurs with weight $w$ and each nontrivial Pauli error occurs with weight $z$. Then its circuit weight enumerator is
    \begin{equation}\label{eq:uniform-Pauli-circuit-enumerator}
        \widetilde{\Tr}\left[\circuitoperator{\widetilde{\mathcal{D}}}\right] = (w + (q^2 - 1)z) e\errorbasis{I}{I} + (w-z) \sum_{P\not= I} e\errorbasis{P}{P}.
    \end{equation}
\end{proposition}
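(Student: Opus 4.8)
The plan is to specialize the general computation of the Pauli-error circuit tensor carried out immediately above the statement, then evaluate one Fourier sum. Recall we already established that the circuit tensor of the $q$-ary Pauli error channel is $\circuitoperator{\widetilde{\mathcal{D}}} = \sum_{m\in\mathbb{Z}_q^2} e\errorbasis{Z^{c(m)}\otimes P_m}{P_m}$ with $c(m) = (m_2,-m_1)$, and hence, by the definition of the tensor trace \eqref{eq:tensor-trace-definition}, that $\widetilde{\Tr}\left[\circuitoperator{\widetilde{\mathcal{D}}}\right] = \sum_{m\in\mathbb{Z}_q^2} u_{c(m)}\, e\errorbasis{P_m}{P_m}$, where $u_\beta = \sum_{m\in\mathbb{Z}_q^2} \zeta_q^{\beta\cdot m} w_m$ is the Fourier transform of the weight assignment over $\mathbb{Z}_q^2$ and $\zeta_q = e^{2\pi i/q}$. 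So the only work remaining is to evaluate $u_\beta$ under the uniform weighting.

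First I would substitute $w_{(0,0)} = w$ and $w_m = z$ for $m\neq(0,0)$. For $\beta = (0,0)$ every character value $\zeta_q^{\beta\cdot m}$ equals $1$, so $u_{(0,0)} = w + (q^2-1)z$. For $\beta\neq(0,0)$, I would write $u_\beta = w - z + z\sum_{m\in\mathbb{Z}_q^2}\zeta_q^{\beta\cdot m}$ and invoke the standard character-sum identity $\sum_{m\in\mathbb{Z}_q^2}\zeta_q^{\beta\cdot m} = 0$ whenever $\beta\neq 0$, giving $u_\beta = w - z$. This is exactly the usual MacWilliams substitution, as noted in the text.

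Next I would observe that $c:\mathbb{Z}_q^2\to\mathbb{Z}_q^2$, $(m_1,m_2)\mapsto(m_2,-m_1)$, is a bijection fixing the origin: $c(m) = (0,0)$ if and only if $m = (0,0)$, i.e.\ if and only if $P_m = I$. Consequently, in $\sum_{m} u_{c(m)}\, e\errorbasis{P_m}{P_m}$ the single term with $P_m = I$ carries coefficient $u_{(0,0)} = w + (q^2-1)z$, while every term with $P_m\neq I$ carries coefficient $u_{c(m)} = w - z$. Collecting terms yields \eqref{eq:uniform-Pauli-circuit-enumerator}. For $q=2$ one simply notes the harmless relabelling $P_{(1,1)} = Y$ does not affect the argument, as $Y$ is still a nontrivial Pauli.

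There is essentially no hard step here; the computation is routine Fourier analysis on $\mathbb{Z}_q^2$. The one point requiring a little care is the bookkeeping around the permutation $c$: one must check it fixes $0$ so that the distinguished Fourier mode $u_{(0,0)}$ lands on the identity Pauli and on no other term, and that $c$ being a bijection means reindexing the sum over $m$ by $c(m)$ introduces no multiplicities or omissions.
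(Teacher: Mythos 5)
Your proposal is correct and follows essentially the same route as the paper: the text immediately preceding the proposition derives $\circuitoperator{\widetilde{\mathcal{D}}} = \sum_m e\errorbasis{Z^{c(m)}\otimes P_m}{P_m}$, traces it to $\sum_m u_{c(m)} e\errorbasis{P_m}{P_m}$, and evaluates the Fourier sums exactly as you do. Your explicit remark that $c$ is a bijection fixing the origin is a small but welcome piece of bookkeeping that the paper leaves implicit.
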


\begin{example}[State preparation error]
    In Example~\ref{example:state-prep-circuit-tensor} we found that the circuit tensor for preparing a state $\ket\psi$ is given by (\ref{eq:state-prep-circuit-tensor}): $\circuittensor{\ket\psi}{}{E} = \bra\psi E \ket\psi$. A common model for state preparation error of a $n$-qubit state is to apply the uniform Pauli error channel to the prepared state. That is, the noisy state preparation circuit is $\widetilde{\mathcal{D}} \circ \ket\psi$. Composing (\ref{eq:state-prep-circuit-tensor}) and (\ref{eq:uniform-Pauli-circuit-enumerator}) gives
    \begin{equation}\begin{split}
        \widetilde{\Tr}&\left[\circuitoperator{\widetilde{\mathcal{D}}\circ \ket\psi}\right] \\&= (w + (q^2 - 1)z) e\errorbasis{}{I} + (w-z) \sum_{P\not= I} \bra\psi P \ket\psi e\errorbasis{}{E}.
    \end{split}\end{equation}
\end{example}

\begin{example}[A coherent error]
    As a very simple example of a coherent error, consider the qubit error model $\widetilde{\mathcal{S}}$ that applies $I$, with weight $w$, or $S$, $Z$, or $S^\dagger$, each with equal weight $z$. Indexing our error channels in this order, from Example~\ref{example:Clifford_ct} we have
    \begin{align}
        \circuitoperator{\widetilde{\mathcal{S}}(0)} &= e\errorbasis{I}{I} + e\errorbasis{X}{X} + e\errorbasis{Y}{Y} + e\errorbasis{Z}{Z},\\
         \circuitoperator{\widetilde{\mathcal{S}}(1)} &= e\errorbasis{I}{I} + e\errorbasis{X}{Y} - e\errorbasis{Y}{X} + e\errorbasis{Z}{Z},\\ 
        \circuitoperator{\widetilde{\mathcal{S}}(2)} &= e\errorbasis{I}{I} - e\errorbasis{X}{X} - e\errorbasis{Y}{Y} + e\errorbasis{Z}{Z},\\ 
        \circuitoperator{\widetilde{\mathcal{S}}(3)} &= e\errorbasis{I}{I} - e\errorbasis{X}{Y} + e\errorbasis{Y}{X} + e\errorbasis{Z}{Z}.
    \end{align}
    Directly applying Proposition~\ref{proposition:circuit-enumerator} we have
    \begin{align}
        \nonumber \widetilde{\Tr}\left[\circuitoperator{\widetilde{\mathcal{S}}}\right] &= (w + 3z) e\errorbasis{I}{I} + (w-z) e\errorbasis{X}{X}\\
        &\qquad + (w-z) e\errorbasis{Y}{Y} + (w + 3z) e\errorbasis{Z}{Z}.
    \end{align}
\end{example}

We will close out this section with an example of a noisy quantum teleportation. In Example~\ref{ex:teleportation} we have seen how the circuit tensor of a noise-less teleportation circuit~\eqref{eq:noiseless_teleportation} simplifies into the identity circuit tensor~\eqref{eq:ct_identity}. Now, we will add noise sources to create the circuit enumerator and utilize it to calculate the complete error model.

\begin{figure}[ht]
    \centering
    \includegraphics[width=0.95\linewidth]{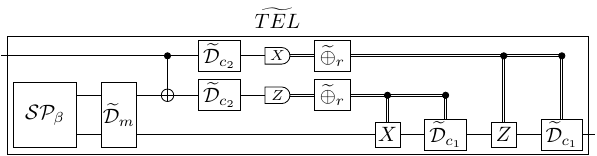}
    \caption[Noisy Teleportation Circuit]{Noisy teleportation circuit --- $\widetilde{D}_z$ represents a uniform Pauli error channel with $z$ as its formal weight vriable. $\widetilde{\oplus}_r$ is a classical bit-flip error channel with $r$ as its formal weight vriable. A classicaly controled error channel, is a noie less identity when the control bit is zero. See Fig.~\ref{fig:teleportation} for the noiseless teleportation circuit. }
    \label{fig:noisy_teleportation}
\end{figure}

\begin{example}[Noisy teleportation circuit analysis]\label{ex:noisy_teleportation}
Using the teleportation circuit given in Fig.~\ref{fig:teleportation} we will add the quantum errors in the following locations: Bell-State preparation with weight variable $m$, CNOT gate with weight variable $c_2$, and the two classically controlled Pauli operations with weight variable $c_1$. To model the quantum noise, we add a uniform Pauli error channel after each location. We will also consider classical error sources, specifically a bit flip error in both measurements, with $r$ as their weight enumeration variable. To be more precise, after the state preparation we add a 2-qubit Pauli error channel, after the CNOT we add two 1-qubit error channels one on each output leg, the noise after the classical correction is also a 1-qubit Pauli error channel but it is only applied if the classical bit is one. The noisy teleportation circuit can be seen in Fig.~\ref{fig:noisy_teleportation}.

The resulting circuit enumerator is:
\begin{equation}
    \widetilde{\Tr}\left[\circuitoperator{\widetilde{TEL}}\right] = \sum_{P\in\mathcal{P}} u_P(c_1, c_2, r, m)e\errorbasis{P}{P},
\end{equation}
where: $u_I = 1$, and 
\begin{equation}\begin{split}
u_X = u_Z= &1 -\tfrac{2}{3} c_1 -\tfrac{14}{15} c_2 -2 r -\tfrac{14}{15} m + \tfrac{1}{9} c_1^2 + \tfrac{28}{45} c_1 c_2  \\
    & + \tfrac{4}{3} c_1 r + \tfrac{28}{45} c_1 m + \tfrac{28}{15} c_2 r + \tfrac{196}{225} c_2 m + \tfrac{28}{15} r m   \\
    & -\tfrac{14}{135} c_1^2 c_2 -\tfrac{2}{9}c_1^2 r -\tfrac{14}{135} c_1^2 m -\tfrac{56}{45} c_1 c_2 r\\
    & -\tfrac{392}{675} c_1 c_2 m -\tfrac{56}{45} c_1 r m -\tfrac{392}{225} c_2 r m + \tfrac{28}{135} c_1^2 c_2 r \\
    &+ \tfrac{196}{2025} c_1^2 c_2 m + \tfrac{28}{135} c_1^2 r m  + \tfrac{784}{675} c_1 c_2 r m \\
    &-\tfrac{392}{2025} c_1^2 c_2 r m,
\end{split} \end{equation}
and
\begin{equation}\begin{split}
u_Y =& 1 -\tfrac{2}{3} c_1 -\tfrac{14}{15} c_2 -4 r -\tfrac{14}{15} m + \tfrac{1}{9} c_1^2 + \tfrac{28}{45} c_1 c_2 + \tfrac{8}{3} c_1 r\\
    &+ \tfrac{28}{45} c_1 m + \tfrac{56}{15} c_2 r + \tfrac{196}{225} c_2 m + 4 r^2 + \tfrac{56}{15} r m -\tfrac{14}{135} c_1^2 c_2 \\
    &-\tfrac{4}{9} c_1^2 r -\tfrac{14}{135} c_1^2 m -\tfrac{112}{45} c_1 c_2 r -\tfrac{392}{675} c_1 c_2 m -\tfrac{8}{3} c_1 r^2\\
    &-\tfrac{112}{45} c_1 r m -\tfrac{56}{15} c_2 r^2 -\tfrac{784}{225} c_2 r m -\tfrac{56}{15} r^2 m + \tfrac{56}{135} c_1^2 c_2 r\\
    &+ \tfrac{196}{2025} c_1^2 c_2 m + \tfrac{4}{9} c_1^2 r^2 + \tfrac{56}{135} c_1^2 r m + \tfrac{112}{45} c_1 c_2 r^2  \\
    &+\tfrac{1568}{675} c_1 c_2 r m + \tfrac{112}{45} c_1 r^2 m + \tfrac{784}{225} c_2 r^2 m -\tfrac{56}{135} c_1^2 c_2 r^2\\
    &-\tfrac{784}{2025} c_1^2 c_2 r m -\tfrac{56}{135} c_1^2 r^2 m -\tfrac{1568}{675} c_1 c_2 r^2 m\\
    &+ \tfrac{784}{2025} c_1^2 c_2 r^2 m \ .
\end{split} \end{equation}

With these circuit enumerator coefficients, we can construct an error model for the overall circuit. In this case, as we only introduce Pauli error and utilize Clifford operations, the resulting error model will be a Pauli error model. We transform the coefficient above into probabilities $p_I, p_X, p_Y, p_Z$ that satisfy
\begin{equation}
    \sum_{\text{Pauli $P$}}p_P \circuitoperator{P} = \widetilde{\Tr}\left[\circuitoperator{\widetilde{TEL}}\right].
\end{equation}
Then from (\ref{eq:ct_identity}, \ref{eq:circuit-tensor-X}-\ref{eq:circuit-tensor-Z}) we construct a set of linear equations,
\begin{equation}
    \begin{pmatrix}
        1 & 1 & 1 & 1 \\
        1 & 1 & -1 & -1 \\
        1 & -1 & 1 & -1 \\
        1 & -1 & -1 & 1 \\
    \end{pmatrix}
    \begin{pmatrix}
        p_I \\ p_X\\ p_Y \\ p_Z
    \end{pmatrix}
    =
    \begin{pmatrix}
        u_I \\ 
        u_X \\
        u_Y\\ 
        u_Z
    \end{pmatrix}.
\end{equation}
Solving these we get,
\begin{equation}\begin{split}
p_I =& 1 -\tfrac{1}{2} c_1 -\tfrac{7}{10} c_2 -2 r -\tfrac{7}{10} m + \tfrac{1}{12} c_1^2 + \tfrac{7}{15} c_1 c_2 + \tfrac{4}{3} c_1 r \\
    &+ \tfrac{7}{15} c_1 m + \tfrac{28}{15} c_2 r + \tfrac{49}{75} c_2 m + r^2 + \tfrac{28}{15} r m -\tfrac{7}{90} c_1^2 c_2 \\
    &-\tfrac{2}{9} c_1^2 r -\tfrac{7}{90} c_1^2 m -\tfrac{56}{45} c_1 c_2 r -\tfrac{98}{225} c_1 c_2 m -\tfrac{2}{3} c_1 r^2\\
    &-\tfrac{56}{45} c_1 r m -\tfrac{14}{15} c_2 r^2 -\tfrac{392}{225} c_2 r m -\tfrac{14}{15} r^2 m + \tfrac{28}{135} c_1^2 c_2 r\\
    &+ \tfrac{49}{675} c_1^2 c_2 m + \tfrac{1}{9} c_1^2 r^2 + \tfrac{28}{135} c_1^2 r m + \tfrac{28}{45} c_1 c_2 r^2 \\
    &+ \tfrac{784}{675} c_1 c_2 r m + \tfrac{28}{45} c_1 r^2 m + \tfrac{196}{225} c_2 r^2 m -\tfrac{14}{135} c_1^2 c_2 r^2 \\
    &-\tfrac{392}{2025} c_1^2 c_2 r m -\tfrac{14}{135} c_1^2 r^2 m -\tfrac{392}{675} c_1 c_2 r^2 m\\
    &+ \tfrac{196}{2025} c_1^2 c_2 r^2 m \ ,
\end{split}\end{equation}

\begin{equation}\begin{split}
p_X = p_Z =& \tfrac{1}{6} c_1 + \tfrac{7}{30} c_2 + r + \tfrac{7}{30} m -\tfrac{1}{36} c_1^2 -\tfrac{7}{45} c_1 c_2 \\
    & -\tfrac{2}{3} c_1 r -\tfrac{7}{45} c_1 m -\tfrac{14}{15} c_2 r -\tfrac{49}{225} c_2 m - r^2  \\
    & -\tfrac{14}{15} r m + \tfrac{7}{270} c_1^2 c_2 + \tfrac{1}{9} c_1^2 r + \tfrac{7}{270} c_1^2 m \\
    &+ \tfrac{28}{45} c_1 c_2 r + \tfrac{98}{675} c_1 c_2 m  + \tfrac{2}{3} c_1 r^2 + \tfrac{28}{45} c_1 r m    \\
    &+ \tfrac{14}{15} c_2 r^2 + \tfrac{196}{225} c_2 r m + \tfrac{14}{15} r^2 m -\tfrac{14}{135} c_1^2 c_2 r \\
    & -\tfrac{49}{2025} c_1^2 c_2 m  -\tfrac{1}{9} c_1^2 r^2 -\tfrac{14}{135} c_1^2 r m -\tfrac{28}{45} c_1 c_2 r^2 \\
    &-\tfrac{392}{675} c_1 c_2 r m -\tfrac{28}{45} c_1 r^2 m -\tfrac{196}{225} c_2 r^2 m  \\
    &+ \tfrac{14}{135} c_1^2 c_2 r^2 + \tfrac{196}{2025} c_1^2 c_2 r m + \tfrac{14}{135} c_1^2 r^2 m \\
    &+ \tfrac{392}{675} c_1 c_2 r^2 m -\tfrac{196}{2025} c_1^2 c_2 r^2 m\ ,
\end{split}\end{equation}
and
\begin{equation}\begin{split}
p_Y =& 
    \tfrac{1}{6} c_1 + \tfrac{7}{30} c_2 + \tfrac{7}{30} m -\tfrac{1}{36} c_1^2 -\tfrac{7}{45} c_1 c_2 -\tfrac{7}{45} c_1 m  \\
    &-\tfrac{49}{225} c_2 m + r^2 + \tfrac{7}{270} c_1^2 c_2 + \tfrac{7}{270} c_1^2 m + \tfrac{98}{675} c_1 c_2 m \\
    &-\tfrac{2}{3} c_1 r^2 -\tfrac{14}{15} c_2 r^2 -\tfrac{14}{15} r^2 m -\tfrac{49}{2025} c_1^2 c_2 m + \tfrac{1}{9} c_1^2 r^2 \\
    &+ \tfrac{28}{45} c_1 c_2 r^2 + \tfrac{28}{45} c_1 r^2 m + \tfrac{196}{225} c_2 r^2 m -\tfrac{14}{135} c_1^2 c_2 r^2 \\
    &-\tfrac{14}{135} c_1^2 r^2 m -\tfrac{392}{675} c_1 c_2 r^2 m + \tfrac{196}{2025} c_1^2 c_2 r^2 m.
\end{split}\end{equation}

\end{example}

\section{Poisson summation for stabilizer codes}\label{section:Poisson-summation}

In the previous section, we saw how various error models can be attached to circuit elements, which produce weight functions and circuit enumerators. Composing several such circuit elements together, these enumerators count, or when properly normalized compute the probability of, all error paths through the circuit. The principle of fault-tolerance is that an error mitigation strategy must also deal with errors that arise from applying error correction circuitry itself. To that end, we provide in this section a powerful computational tool for stabilizer codes akin to the Poisson Summation Formula.

Like Poisson summation, our formula arises through a duality of the convolution and pointwise product. However, in our case this duality is provided by the MacWilliams transform. Specifically, consider a Hilbert space $\mathfrak{H}$ of dimension $q$, with error basis $\mathcal{E}$. Recall from \cite[\S{VI}]{cao2023quantum}, that given a weight function $\mathrm{wt}$, an algebraic mapping $\Phi(\mathbf{u}) = (\Phi_1(\mathbf{u}), \dots, \Phi_k(\mathbf{u}))$ is a MacWilliams transform for that weight function if
\begin{equation}\label{eq:MacWilliams-transform}
    \Phi(\mathbf{u})^{\mathrm{wt}(D)} = \frac{1}{q} \sum_{E\in\mathcal{E}} \omega(D,E) \mathbf{u}^{\mathrm{wt}(E)}.
\end{equation}   

\begin{example}[Uniform Pauli errors]\label{example:Pauli-errors}
    Continuing from Proposition~\ref{proposition:pauli-error-channel}, consider the uniform Pauli error model. The weight function for this model that tracks whether a nontrivial error occurs:
    \begin{equation}
        \mathrm{wt}(E) = \begin{cases}
        (1,0) & \text{if $E = I$,}\\
        (0,1) & \text{otherwise.}
        \end{cases}
    \end{equation}
    As there, take $\mathbf{u} = (w,z)$ for a tuple of indeterminates so that
    \begin{equation}
        \mathbf{u}^{\mathrm{wt}(E)} = \begin{cases}
        w & \text{if $E = I$,}\\
        z & \text{otherwise.}\end{cases}
    \end{equation}
    For $\Phi$ to be a MacWilliams transform for this weight function, equation \eqref{eq:MacWilliams-transform} must hold. First taking $D=I$ in \eqref{eq:MacWilliams-transform} we must have
    \begin{equation}
        \Phi_0(w,z) = \frac{1}{2} \sum_{E\in\mathcal{E}} \mathbf{u}^{\mathrm{wt}(E)} = \frac{w+3z}{2}.
    \end{equation}
    On the other hand, if $D \not= I$ then on the left side of \eqref{eq:MacWilliams-transform} we have $\Phi_1(w,z)$, irrespective of $D$. On the right side \eqref{eq:MacWilliams-transform}, when $E = I$ we have $\omega(D,I) = 1$ and hence contributes $\frac{w}{2}$ to the sum. Yet, as we sum $E\in \{X,Y,Z\}$, precisely one has $\omega(D,E) = 1$, while two have $\omega(D,E) = -1$. For example, if $D = Y$ then $\omega(Y,X) = -1$, $\omega(Y,Y) = 1$, and $\omega(Y,Z) = -1$. So regardless of $D$ the sum of the three terms with $E \not= I$ contributes $-\frac{z}{2}$, and hence irrespective of $D$ we find 
    \begin{equation}
        \Phi_1(w,z) = \frac{w-z}{2}.
    \end{equation} 
    Therefore we find a unique MacWilliams transform given by 
    \begin{equation}
    \Phi(w,z) = \left(\frac{w+3z}{2}, \frac{w-z}{2}\right).
    \end{equation}
\end{example}

Given a tuple $\mathbf{wt} = (\mathrm{wt}_1, \dots, \mathrm{wt}_n)$ of weight functions, where each weight function takes values in $\mathbb{Z}_{\geq 0}^k$, we define
\begin{equation}
    \mathbf{wt}(E_1\otimes \cdots \otimes E_n) = \sum_{j=1}^n \mathrm{wt}_j(E_j).
\end{equation}

\begin{theorem}[Poisson summation for stabilizer codes]\label{theorem:Poisson-summation}
    Let $\mathfrak{C}$ be a $[[n,k,d]]_q$ stabilizer code with stabilizer group $\mathcal{S}(\mathfrak{C})$ and normalizer $\mathcal{N}(\mathfrak{C})$. Let $\mathbf{wt}_1$ and $\mathbf{wt}_2$ be scalar weight functions that have MacWilliams transforms $\Phi_1$ and $\Phi_2$ respectively. Then
    \begin{align}\label{eq:Poisson-summation1}
        \nonumber &\sum_{D\in\mathcal{N}(\mathfrak{C})}\Phi_1(\mathbf{u}_1)^{\mathbf{wt}_1(D)}\Phi_2(\mathbf{u}_2)^{\mathbf{wt}_2(D)}\\
        &\qquad = \frac{1}{q^{n-k}} \sum_{\text{\tiny$ \begin{array}{c}E_1,E_2\in\mathcal{E}^n\\E_1E_2\in\mathcal{S}(\mathfrak{C})\end{array}$}} \mathbf{u}_1^{\mathbf{wt}_1(E_1)}\mathbf{u}_2^{\mathbf{wt}_2(E_2)}
    \end{align}
    and
    \begin{align}\label{eq:Poisson-summation2}
        \nonumber &\sum_{D\in\mathcal{S}(\mathfrak{C})}\Phi_1(\mathbf{u}_1)^{\mathbf{wt}_1(D)}\Phi_2(\mathbf{u}_2)^{\mathbf{wt}_2(D)}\\ &\qquad = \frac{1}{q^{n+k}} \sum_{\text{ \tiny $\begin{array}{c}E_1,E_2\in\mathcal{E}^n\\E_1E_2\in\mathcal{N}(\mathfrak{C})\end{array}$}} \mathbf{u}_1^{\mathbf{wt}_1(E_1)}\mathbf{u}_2^{\mathbf{wt}_2(E_2)}.
    \end{align}
\end{theorem}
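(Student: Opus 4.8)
The plan is to expand each of the two MacWilliams-transformed factors on the left using the defining identity \eqref{eq:MacWilliams-transform}, turning the left side into a double character sum over $\mathcal{E}^n\times\mathcal{E}^n$ against the normalizer (resp.\ stabilizer), and then to collapse that sum by the duality relations \eqref{eq:stabilizer-duality1}--\eqref{eq:stabilizer-duality2}. This is the MacWilliams incarnation of the convolution/product duality that underlies ordinary Poisson summation.

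First I would record the tensorized form of \eqref{eq:MacWilliams-transform}. The hypothesis that $\mathbf{wt}_i$ has MacWilliams transform $\Phi_i$ is to be read on each leg (each component of $\mathbf{wt}_i$ satisfies \eqref{eq:MacWilliams-transform}); combined with the additivity of $\mathbf{wt}_i$ over tensor factors defined just before the theorem, and the multiplicativity $\omega(D,E)=\prod_j\omega(D_j,E_j)$ of the phase across tensor legs, multiplying \eqref{eq:MacWilliams-transform} over the $n$ local legs gives, for $D=D_1\otimes\cdots\otimes D_n$,
\[
\Phi_i(\mathbf{u}_i)^{\mathbf{wt}_i(D)}=\frac{1}{q^{n}}\sum_{E\in\mathcal{E}^n}\omega(D,E)\,\mathbf{u}_i^{\mathbf{wt}_i(E)},\qquad i=1,2.
\]
Substituting these two expansions (with independent dummy indices $E_1,E_2$) into the left side of \eqref{eq:Poisson-summation1} and interchanging the finite sums, the left side becomes $q^{-2n}\sum_{E_1,E_2\in\mathcal{E}^n}\mathbf{u}_1^{\mathbf{wt}_1(E_1)}\mathbf{u}_2^{\mathbf{wt}_2(E_2)}\bigl(\sum_{D\in\mathcal{N}(\mathfrak{C})}\omega(D,E_1)\omega(D,E_2)\bigr)$.

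Next I would use that $P\mapsto\omega(D,P)$ is a character of the Pauli group — the bicharacter property of $\omega$, which follows from the commutation rule of Definition~\ref{definition:error-basis} together with $\omega(P,Q)=\tfrac{1}{q}\Tr(P^\dagger Q^\dagger PQ)$ — so that $\omega(D,E_1)\omega(D,E_2)=\omega(D,E_1E_2)$, the phase carried by the Pauli $E_1E_2$ being immaterial to $\omega$. The inner sum is then $\sum_{D\in\mathcal{N}(\mathfrak{C})}\omega(D,E_1E_2)$, which by the general-$q$ analogue of \eqref{eq:stabilizer-duality2} equals $q^{n+k}$ when $E_1E_2\in\mathcal{S}(\mathfrak{C})$ and $0$ otherwise; since $q^{-2n}\cdot q^{n+k}=q^{-(n-k)}$, this is exactly \eqref{eq:Poisson-summation1}. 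Identity \eqref{eq:Poisson-summation2} is obtained by the same chain of steps with the roles of $\mathcal{S}(\mathfrak{C})$ and $\mathcal{N}(\mathfrak{C})$ exchanged, invoking \eqref{eq:stabilizer-duality1} in place of \eqref{eq:stabilizer-duality2}, so that the inner sum contributes $q^{n-k}$ and the prefactor becomes $q^{-2n}\cdot q^{n-k}=q^{-(n+k)}$.

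The core computation is short; the delicate points are bookkeeping ones. The main obstacle is keeping track of phases: elements of $\mathcal{S}(\mathfrak{C})$ carry signs (cf.\ Corollary~\ref{crol:alpha_stab}) and a product $E_1E_2$ of two positive basis elements generally acquires a phase in $\{\pm1,\pm i\}$, so the condition ``$E_1E_2\in\mathcal{S}(\mathfrak{C})$'' must be understood up to such a phase; since $\omega(D,\cdot)$ and every weight function are blind to global phases, this is harmless once it is stated. A secondary point is that \eqref{eq:stabilizer-duality1}--\eqref{eq:stabilizer-duality2} are quoted in the text only for $q=2$, so I would first remark that they extend verbatim to general local dimension $q$, with the counts $2^{n\mp k}$ replaced by $q^{n\mp k}$ (see \cite{rains1998quantum}).
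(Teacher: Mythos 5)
Your proposal is correct and follows essentially the same route as the paper's proof: expand each factor via the MacWilliams identity, use the bicharacter property to combine $\omega(D,E_1)\omega(D,E_2)=\omega(D,E_1E_2)$, and collapse the inner character sum with the stabilizer/normalizer duality relations, with the normalizations matching. Your added remarks on tensorizing \eqref{eq:MacWilliams-transform} over the $n$ legs, on phases of $E_1E_2$, and on extending \eqref{eq:stabilizer-duality1}--\eqref{eq:stabilizer-duality2} to general $q$ are bookkeeping points the paper leaves implicit, but they do not change the argument.
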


\begin{proof}
    Starting with the left summand, we use \eqref{eq:MacWilliams-transform} to write
    \begin{align}\label{eq:Poisson-summation-proof}
        \nonumber &\Phi_1(\mathbf{u}_1)^{\mathbf{wt}_1(D)}\Phi_2(\mathbf{u}_2)^{\mathbf{wt}_2(D)}\\
        \nonumber &\quad = \frac{1}{q^{2n}}\sum_{E_1,E_2} \omega(D,E_1)\omega(D,E_2) \mathbf{u}_1^{\mathbf{wt}_1(E_1)} \mathbf{u}_2^{\mathbf{wt}_2(E_2)}\\
        &\quad = \frac{1}{q^{2n}}\sum_{E_1,E_2} \omega(D,E_1 E_2) \mathbf{u}_1^{\mathbf{wt}_1(E_1)} \mathbf{u}_2^{\mathbf{wt}_2(E_2)}.
    \end{align}
    Now summing this equation over $D\in\mathcal{S}(\mathfrak{C})$ we note 
    \begin{equation}\label{eq:Poisson-summation-duality1}
        \sum_{D\in\mathcal{S}(\mathfrak{C})} \omega(D,E_1 E_2) = \begin{cases}
        q^{n-k} & \text{if $E_1E_2 \in \mathcal{N}(\mathfrak{C})$,}\\
        0 & \text{otherwise,}\end{cases}
    \end{equation}
    from whence \eqref{eq:Poisson-summation2} follows. Similarly, for $\mathcal{N}(\mathfrak{C})$ we have
    \begin{equation}\label{eq:Poisson-summation-duality2}
        \sum_{D\in\mathcal{N}(\mathfrak{C})} \omega(D,E_1 E_2) = \begin{cases}
         q^{n+k} & \text{if $E_1E_2 \in \mathcal{S}(\mathfrak{C})$,}\\
         0 & \text{otherwise,}\end{cases}
    \end{equation}
    and so \eqref{eq:Poisson-summation1} also follows.
\end{proof}

The crux of the above argument is \eqref{eq:Poisson-summation-proof} where we used the fact that $\omega$ is a bicharacter. Clearly, this extends to arbitrary finite products,
\begin{align}
    \nonumber &\prod_{j=1}^m \Phi_j(\mathbf{u}_j)^{\mathbf{wt}_j(D)}\\
    &\quad = \frac{1}{q^{mn}} \sum_{E_1,\dots,E_m} \omega(D,E_1\cdots E_m) \prod_{j=1}^m \mathbf{u}_j^{\mathbf{wt}_j(E_j)}.
\end{align}
Then by applying the duality relations \eqref{eq:Poisson-summation-duality1} and \eqref{eq:Poisson-summation-duality2}, we have thus proven the following extension of the theorem to arbitrary products.

\begin{corollary}[Generalized Poisson summation for stabilizer codes]\label{corollary:Poisson-summation-complete}
    Let $\mathfrak{C}$ be a $[[n,k,d]]_q$ stabilizer code with stabilizer group $\mathcal{S}(\mathfrak{C})$ and normalizer $\mathcal{N}(\mathfrak{C})$. Let $\{\mathbf{wt}_j\}_{j=1}^m$ be scalar weight functions with MacWilliams transforms $\{\Phi_j\}_{j=1}^m$ respectively. Then
    \begin{align}\label{eq:Poisson-summation-extention1}
        \nonumber &\sum_{D\in\mathcal{N}(\mathfrak{C})} \prod_{j=1}^m \Phi_j(\mathbf{u}_j)^{\mathbf{wt}_j(D)}\\ 
        &\quad = \frac{1}{q^{(m-1)n-k}} \sum_{\text{\tiny$\begin{array}{c} E_1,\dots,E_m\in\mathcal{E}^n\\E_1\cdots E_m\in\mathcal{S}(\mathfrak{C})\end{array}$}} \prod_{j=1}^m \mathbf{u}_j^{\mathbf{wt}_j(E_j)}
    \end{align}
    and
    \begin{align}\label{eq:Poisson-summation-extention2}
        \nonumber &\sum_{D\in\mathcal{S}(\mathfrak{C})} \prod_{j=1}^m \Phi_j(\mathbf{u}_j)^{\mathbf{wt}_j(D)}\\ 
        &\quad = \frac{1}{q^{(m-1)n+k}} \sum_{\text{\tiny$\begin{array}{c} E_1,\dots,E_m\in\mathcal{E}^n\\E_1\cdots E_m\in\mathcal{N}(\mathfrak{C})\end{array}$}} \prod_{j=1}^m \mathbf{u}_j^{\mathbf{wt}_j(E_j)}.
    \end{align}
\end{corollary}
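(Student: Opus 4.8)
The plan is to run the proof of Theorem~\ref{theorem:Poisson-summation} essentially verbatim, with the pair $(\mathbf{wt}_1,\mathbf{wt}_2)$ replaced by the family $\{\mathbf{wt}_j\}_{j=1}^m$. The one extra fact needed is the one already isolated in the paragraph preceding the corollary: since $\omega(D,\cdot)$ depends only on commutation and phases are central, it is multiplicative in its second argument, so $\prod_{j=1}^m \omega(D,E_j) = \omega(D,E_1\cdots E_m)$ for any $E_1,\dots,E_m\in\mathcal{E}^n$.

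First I would expand the left-hand side using the defining identity \eqref{eq:MacWilliams-transform} of a MacWilliams transform, applied coordinatewise over the $n$ tensor factors. Because both $\omega$ and $\mathbf{u}^{\mathbf{wt}}$ factor across tensor legs, this yields $\Phi_j(\mathbf{u}_j)^{\mathbf{wt}_j(D)} = q^{-n}\sum_{E_j\in\mathcal{E}^n}\omega(D,E_j)\,\mathbf{u}_j^{\mathbf{wt}_j(E_j)}$, one factor of $q$ per coordinate. Multiplying these $m$ identities together, pulling all the finite sums out front, and invoking multiplicativity of $\omega$ gives $\prod_{j=1}^m\Phi_j(\mathbf{u}_j)^{\mathbf{wt}_j(D)} = q^{-mn}\sum_{E_1,\dots,E_m}\omega(D,E_1\cdots E_m)\prod_{j=1}^m\mathbf{u}_j^{\mathbf{wt}_j(E_j)}$, which is precisely the displayed equation stated just before the corollary.

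Next I would sum this identity over $D\in\mathcal{N}(\mathfrak{C})$, respectively over $D\in\mathcal{S}(\mathfrak{C})$, and push the $D$-sum inside so that it acts only on the factor $\omega(D,E_1\cdots E_m)$. Applying the duality relation \eqref{eq:Poisson-summation-duality2} in the first case collapses the $E$-sum to those tuples with $E_1\cdots E_m\in\mathcal{S}(\mathfrak{C})$ and contributes a factor $q^{n+k}$; applying \eqref{eq:Poisson-summation-duality1} in the second case collapses it to tuples with $E_1\cdots E_m\in\mathcal{N}(\mathfrak{C})$ and contributes $q^{n-k}$. Combining with the prefactor $q^{-mn}$ gives the claimed normalizations $q^{-((m-1)n-k)}$ and $q^{-((m-1)n+k)}$, which are \eqref{eq:Poisson-summation-extention1} and \eqref{eq:Poisson-summation-extention2} respectively.

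I do not expect a serious obstacle; the argument is a clean iteration of the $m=2$ case. The only point requiring minor care is phase bookkeeping: a product $E_1\cdots E_m$ of elements of the positive basis $\mathcal{P}^n$ need not itself lie in $\mathcal{P}^n$, so the membership conditions ``$E_1\cdots E_m\in\mathcal{S}(\mathfrak{C})$'' and ``$E_1\cdots E_m\in\mathcal{N}(\mathfrak{C})$'' must be read up to an overall phase. This is harmless, since $\omega$ is insensitive to overall phases in either slot and $\mathcal{S}(\mathfrak{C})$, $\mathcal{N}(\mathfrak{C})$ are cut out by commutation conditions, so both sides of the duality relations \eqref{eq:Poisson-summation-duality1}--\eqref{eq:Poisson-summation-duality2} (for $q>2$, the $q$-ary analogues of \eqref{eq:stabilizer-duality1}--\eqref{eq:stabilizer-duality2} used in the theorem's proof) are well defined on phase classes. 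Everything else is routine reindexing.
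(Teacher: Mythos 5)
Your argument is correct and is essentially the paper's own proof: the paper likewise extends the key identity \eqref{eq:Poisson-summation-proof} to $m$ factors via the bicharacter (multiplicativity) property of $\omega$ and then applies the duality relations \eqref{eq:Poisson-summation-duality1}--\eqref{eq:Poisson-summation-duality2}, with the same bookkeeping of the $q^{-mn}$ prefactor against $q^{n\pm k}$. Your added remark on reading the membership conditions up to phase is a reasonable point of care that the paper leaves implicit.
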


We can extend Theorem~\ref{theorem:Poisson-summation} in a different direction. Given a logical operator $L$ on our code $\mathfrak{C}$ we ``twist'' the left sum by introducing $\omega(D,L^\dagger)$ into the summand. As we show in the following result, this will allow us to count error paths whose product lies in the $L$-coset of the stabilizer $\mathcal{S}(\mathfrak{C})$.

\begin{corollary}\label{col:logical_coset}
Let $L \in \mathcal{N}(\mathfrak{C})$ be any logical operator. Then
    \begin{align}
    \sum_{D \in \mathcal{N}(\mathfrak{C})} \omega(D,L^\dagger) \Phi_1(\mathbf{u}_1)^{\mathbf{wt}_1(D)}\Phi_2(\mathbf{u}_2)^{\mathbf{wt}_2(D)} \nonumber\\ = \frac{1}{q^n} \sum_{\text{\tiny$\begin{array}{c}E_1,E_2\in\mathcal{E}^n\\E_1E_2\in L\mathcal{S}(\mathfrak{C})\end{array}$}} \mathbf{u}_1^{\mathbf{wt}_1(E_1)}\mathbf{u}_2^{\mathbf{wt}_2(E_2)}.
    \end{align}
\end{corollary}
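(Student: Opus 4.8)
\emph{Proof idea.} The plan is to run the argument of Theorem~\ref{theorem:Poisson-summation} essentially unchanged, simply carrying the twisting factor $\omega(D,L^\dagger)$ through the computation. First I would recall the identity \eqref{eq:Poisson-summation-proof} established in that proof,
\begin{equation}
\Phi_1(\mathbf{u}_1)^{\mathbf{wt}_1(D)}\Phi_2(\mathbf{u}_2)^{\mathbf{wt}_2(D)} = \frac{1}{q^{2n}}\sum_{E_1,E_2\in\mathcal{E}^n}\omega(D,E_1E_2)\,\mathbf{u}_1^{\mathbf{wt}_1(E_1)}\mathbf{u}_2^{\mathbf{wt}_2(E_2)},
\end{equation}
which comes from applying the MacWilliams-transform relation \eqref{eq:MacWilliams-transform} to each of the two factors and using that $\omega(D,\cdot)$ is multiplicative. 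Multiplying through by $\omega(D,L^\dagger)$ and invoking the bicharacter identity $\omega(D,L^\dagger)\,\omega(D,E_1E_2)=\omega(D,L^\dagger E_1E_2)$ turns the summand coefficient into $\omega(D,L^\dagger E_1 E_2)$.

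Next I would sum over $D\in\mathcal{N}(\mathfrak{C})$ and exchange the order of summation, so that the inner $D$-sum becomes $\sum_{D\in\mathcal{N}(\mathfrak{C})}\omega(D,L^\dagger E_1E_2)$. Applying the duality relation \eqref{eq:Poisson-summation-duality2} (equivalently \eqref{eq:stabilizer-duality2}) with $E=L^\dagger E_1 E_2$, this collapses to $q^{n+k}$ precisely when $L^\dagger E_1 E_2\in\mathcal{S}(\mathfrak{C})$ and to $0$ otherwise. Since $\omega$ ignores scalar phases, the condition $L^\dagger E_1 E_2\in\mathcal{S}(\mathfrak{C})$ is the same as the coset membership $E_1 E_2\in L\mathcal{S}(\mathfrak{C})$ read modulo an overall phase, which is exactly the index set appearing on the right-hand side of the claim; the surviving constant is $q^{n+k}/q^{2n}$, inherited from \eqref{eq:Poisson-summation-duality2} just as in the $\mathcal{N}$-case \eqref{eq:Poisson-summation1} of Theorem~\ref{theorem:Poisson-summation}.

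This is a routine extension rather than a result with a genuine obstacle; the only points warranting care are bookkeeping. I would want to (i) verify the bicharacter step, which is immediate for Paulis from $D(L^\dagger E_1 E_2)=\omega(D,L^\dagger)\omega(D,E_1)\omega(D,E_2)(L^\dagger E_1 E_2)D$; (ii) be precise that all membership conditions are taken in the Pauli group modulo phase, noting via Corollary~\ref{crol:alpha_stab} that elements of $\mathcal{S}(\mathfrak{C})$ carry at most a sign and that $L\mathcal{S}(\mathfrak{C})=\mathcal{S}(\mathfrak{C})L$ as a set since $L\in\mathcal{N}(\mathfrak{C})$ commutes with $\mathcal{S}(\mathfrak{C})$ up to phase; and (iii) track the normalization constant. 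An identical argument, using $\omega(D,L^\dagger)\prod_{j}\omega(D,E_j)=\omega(D,L^\dagger E_1\cdots E_m)$, would yield the $m$-fold twisted version paralleling Corollary~\ref{corollary:Poisson-summation-complete}.
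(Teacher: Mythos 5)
Your proposal is correct and follows essentially the same route as the paper's proof: twist \eqref{eq:Poisson-summation-proof} by $\omega(D,L^\dagger)$, use the bicharacter property to absorb it into $\omega(D,L^\dagger E_1E_2)$, and collapse the $D$-sum with the duality relation \eqref{eq:Poisson-summation-duality2}. Your careful tracking of the normalization in fact gives $q^{n+k}/q^{2n}=1/q^{n-k}$, consistent with \eqref{eq:Poisson-summation1}, rather than the $1/q^{n}$ printed in the corollary statement, which appears to be a typo in the paper.
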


\begin{proof}
    Twisting \eqref{eq:Poisson-summation-proof} as described above we have
    \begin{align}
        \nonumber &\omega(D,L^\dagger)\Phi_1(\mathbf{u}_1)^{\mathbf{wt}_1(D)}\Phi_2(\mathbf{u}_2)^{\mathbf{wt}_2(D)}\\
        &\quad = \frac{1}{q^{2n}}\sum_{E_1,E_2} \omega(D,L^\dagger E_1 E_2) \mathbf{u}_1^{\mathbf{wt}_1(E_1)} \mathbf{u}_2^{\mathbf{wt}_2(E_2)}.
    \end{align}
    Then just as in \eqref{eq:Poisson-summation-duality2} we have
    \begin{equation}
        \sum_{D\in\mathcal{N}(\mathfrak{C})} \omega(D,L^\dagger E_1 E_2) =\begin{cases}
         q^{n+k} & \text{if $E_1E_2 \in L\mathcal{S}(\mathfrak{C})$,}\\
         0 & \text{otherwise,}\end{cases}
    \end{equation}
    from which the result follows immediately.
\end{proof}

This corollary can also be extended to arbitrary products like in Corollary~\ref{corollary:Poisson-summation-complete}. We leave the details of this to the reader.

\section{Application to Fault-tolerance}\label{section:app_ft}

The concept of quantum fault-tolerance addresses the issue that quantum operations designed to remove noise from a quantum computation are themselves noisy. Even for well designed circuits, only if the noise in these operations lies below some threshold can one guarantee that enough error correction will ultimately suppress errors and allow robust quantum computation. A critical part of this is understanding how errors accumulate circuits, with the potential of reducing the distance of the code \cite{dennis2002topological, gidney2023pair, beverland2024fault, grans2023improved}. Such ``hook'' errors can, with a single error event, produce multiple errors in the domain of a logical operation and so reduce the effective distance that a code can detect.

In many cases hook errors are rare, and so while they may reduce the effective distance of the code, they may not have a large impact on the code's error correction capability. By that same token, it is often challenging to identify when hook errors exist, how many there are, and their relative severity, particularly with Monte Carlo methods.

In this section, we show how to use circuit enumerators to analyze a syndrome extraction circuit for a stabilizer code, and provide quantitive examples for the perfect code~\cite{bennettMixedstateEntanglementQuantum1996} and the distance three and five rotated surface code~\cite{fowlerSurfaceCodesPractical2012, chao2020optimization}. Namely, we will count and characterize the error paths that generate a normalizer or a stabilizer in the syndrome extraction circuit. This method enables the quantification of all hook errors within a specific syndrome extraction circuit, along with assessing their severity, under a specified error model.

To validate the method, we have also developed a simulation tool that enumerates all possible error combinations, up to weight 3 errors, and counts all the paths that lead to the same output state as the input state (up to a global phase). When performing this simulation with different logical input states $\{\ket{0}_L, \ket{+}_L, \ket{i}_L\}$, it tallies the error paths leading to a logical $I$, along with those leading to a logical $Z$, $X$, or $Y$, dependent on the initial input state. See Fig.~\ref{fig:noisy_syndrome} for an illustration of the simulated circuit.

\begin{figure*}[ht]
    \centering
    \includegraphics{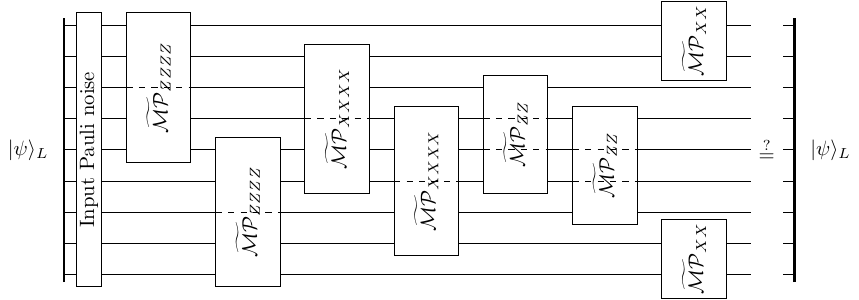}
    \caption[Noisy syndrome extraction circuit]{An illustration of the simulated circuit used to verify the enumerators' calculation result. For a distance 3 surface code, we initialize the 9 physical qubits to a logical state $\ket{\psi}_L$, then apply Pauli noise to those qubits, following are the noisy projective Pauli measurements ($\widetilde{\mathcal{M}\mathcal{P}}_{P}$) forming the syndrome extraction. We don't illustrate the measurement results, as we ignore them during the simulation. Then we count the number of different error paths that output the original input $\ket{\psi}_L$ state, which are exactly the number of error paths that generate a logical $I$ plus the number of error paths that generate a logical $Z, X,$ or $Y$ depending whether $\ket{\psi}_L$ is $\ket{0}_L$, $\ket{+}_L$, or $\ket{i}_L$.}
    \label{fig:noisy_syndrome}
\end{figure*}

The rest of this section is organized as follows: we start by developing the circuit tensor for a composition of noisy stabilizer measurements and using Theorem~\ref{theorem:Poisson-summation} (or more precisely Corollary~\ref{corollary:Poisson-summation-complete}) we show that its trace leads to the enumeration of all the error paths that generate a stabilizer.  Next, we simplify the calculation and show how when using only a simple relation between the stabilizers and normalizers we can directly count the error paths that generate a stabilizer or a normalizer. Finally, we provide more information about our validation simulation framework.

Let $\mathfrak{C}$ be a $[[n,k]]$ stabilizer code with stabilizer group $\mathcal{S} = \mathcal{S}(\mathfrak{C}) = \langle S_1 \cdots, S_{n-k}\rangle$. From Example~\ref{example:pauli-measurement}, the circuit tensor of projective measurement of stabilizer $S_j$ is:
\begin{equation}
    \circuitoperator{\mathcal{MP}_{S_j}} = \sum_{E\::\:\omega(E,S_j) = 1} (e^E_{E\otimes I} + \alpha_j(E) e^E_{(\pm ES_j)\otimes Z^{(j)}}),
\end{equation}
where $Z^{(j)} = I \otimes \cdots \otimes I \otimes I \otimes  Z \otimes I \otimes \cdots \otimes I$ is supported only in the $j$-th factor, and as in that example $\alpha_j(E) = \{\pm 1\}$ according to whether $ES_j \in \mathcal{P}^n$ or $-ES_j \in \mathcal{P}$.

A syndrome extraction circuit is a composition of stabilizer measurements. For example, the circuit tensor for two consecutive stabilizer measurements is given by:
\begin{align}
        &\circuitoperator{\mathcal{MP}_{S_1} \circ \mathcal{MP}_{S_2}}\nonumber\\ 
        &= \left[\sum_{E\in S_1^\perp} \left(e^E_{E\otimes I} + \alpha_1(E) e^E_{(\pm ES_1)\otimes Z^{(1)}}\right)\right]\nonumber\\ 
        &\qquad \cdot\left[\sum_{E\in S_2^\perp} \left(e^E_{E\otimes I} + \alpha_2(E) e^E_{(\pm ES_2)\otimes Z^{(2)}}\right)\right]\nonumber\\
        &= \sum_{E\in S_1^\perp \cap S_2^\perp} \alpha_1(E) e^E_{(\pm ES_1)\otimes Z^{(1)}} + \alpha_2(E) e^E_{(\pm ES_2)\otimes Z^{(2)}} \nonumber\\ 
        &\qquad + e^E_{E\otimes I} + \alpha_{(1,2)}(E) e^E_{(\pm ES_1S_2)\otimes Z^{(1)}Z^{(2)}}.    
\end{align}
Here, for simplicty we write $\alpha_{(1,2)}(E) = \alpha_1(E)\alpha_2(E)$ and $S_j^\perp = \{ E\in \mathcal{P}^n \::\: \omega(E,S_j) = 1\}$.

If we recursively perform the concatenation on all of the stabilizer measurements, we get the circuit tensor of the syndrome extraction circuit:
\begin{equation} \label{eq:stab_synd_ct}
\begin{split}
    &\circuitoperator{\mathcal{MP}_{S_1} \circ \cdots \circ \mathcal{MP}_{S_{n-k}}} \\
    &\quad=   \prod_{i = 1}^{n-k} \sum_{E\in S_i^\perp} (e^E_{E\otimes I} + \alpha_i(E) e^E_{(\pm ES_i)\otimes Z^{(i)}}) \\
    &\quad= \prod_{i = 1}^{n-k} \sum_{E \in \mathcal{N}(\mathfrak{C})} (e^E_{E\otimes I} + \alpha_i(E) e^E_{(\pm ES_i)\otimes Z^{(i)}}) \\
    &\quad=  \sum_{E \in \mathcal{N}(\mathfrak{C})} \prod_{i = 1}^{n-k} (e^E_{E\otimes I} + \alpha_i(E) e^E_{(\pm ES_i)\otimes Z^{(i)}}) \\
    &\quad= \sum_{E \in \mathcal{N}(\mathfrak{C})} \sum_{S \in \mathcal{S}(\mathfrak{C})} \alpha_S(E) e^E_{\pm ES\otimes Z^{\mathtt{gen}(S)}}.
\end{split}
\end{equation}

Above we write $\alpha_S(E) = \alpha_{j_1}(E) \cdots \alpha_{j_\ell}(E)$ and $Z^{\mathtt{gen}(S)} = Z^{(j_1)} \cdots Z^{(j_\ell)}$  when $S = S_{j_1}\cdot \cdots \cdot S_{j_\ell}$.

Recall that for a stabilizer code, the Shor-Laflamme $A$-enumerator \eqref{eq:Shor-Laflamme-A}, when evaluated at the projection onto the code, counts the number of stabilizers at each weight. The following result shows that the trace of the circuit tensor of its syndrome extraction circuit composed with a Pauli error channel also performs this enumeration.

\begin{proposition}\label{prop:trace_ct_is_A}
    Let $\mathfrak{C}$ have stabilizer $\mathcal{S}(\mathfrak{C}) = \langle S_1 \cdots, S_{n-k}\rangle$ and $\widetilde{\mathcal{D}}^{\otimes n}$ to be the $n$-qubit Pauli error channel. Then
    \begin{equation}
    \Tr\left[\circuitoperator{\mathcal{MP}_{S_1} \circ \cdots \circ \mathcal{MP}_{S_{n-k}}\circ \widetilde{\mathcal{D}}^{\otimes n}}\right] = A(w,z; \Pi_{\mathfrak{C}}) e_{I}.
    \end{equation}
\end{proposition}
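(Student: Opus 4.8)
The plan is to compose the explicit circuit tensor \eqref{eq:stab_synd_ct} of the syndrome extraction circuit with the circuit enumerator of the Pauli error channel from Proposition~\ref{proposition:pauli-error-channel}, take the trace, and then convert a sum over the normalizer into a sum over the stabilizer using Poisson summation (the $m=1$ case of Corollary~\ref{corollary:Poisson-summation-complete}).

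First I would carry out the composition. By the composition law for circuit tensors of quantum channels, $\circuitoperator{\mathcal{MP}_{S_1} \circ \cdots \circ \mathcal{MP}_{S_{n-k}}\circ \widetilde{\mathcal{D}}^{\otimes n}} = \circuitoperator{\widetilde{\mathcal{D}}^{\otimes n}}\,\circuitoperator{\mathcal{MP}_{S_1} \circ \cdots \circ \mathcal{MP}_{S_{n-k}}}$. For $q=2$, tensoring the one-qubit enumerator of Proposition~\ref{proposition:pauli-error-channel} gives $\widetilde{\Tr}\bigl[\circuitoperator{\widetilde{\mathcal{D}}^{\otimes n}}\bigr] = \sum_{P\in\mathcal{P}^n}(w+3z)^{n-\mathrm{wt}(P)}(w-z)^{\mathrm{wt}(P)}\,e^P_P$, which is diagonal in the Pauli basis. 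Multiplying this diagonal tensor into \eqref{eq:stab_synd_ct} simply attaches the scalar $(w+3z)^{n-\mathrm{wt}(E)}(w-z)^{\mathrm{wt}(E)}$ to each term indexed by $E\in\mathcal{N}(\mathfrak{C})$, so the composite circuit tensor becomes $\sum_{E\in\mathcal{N}(\mathfrak{C})}\sum_{S\in\mathcal{S}(\mathfrak{C})}(w+3z)^{n-\mathrm{wt}(E)}(w-z)^{\mathrm{wt}(E)}\,\alpha_S(E)\,e^E_{(\pm ES)\otimes Z^{\mathtt{gen}(S)}}$.

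Next I would take the trace. Tracing out the $n-k$ classical measurement registers keeps only the terms whose classical output is $I^{\otimes(n-k)}$; since the generators $S_1,\dots,S_{n-k}$ are independent, $Z^{\mathtt{gen}(S)}=I^{\otimes(n-k)}$ forces $S=I$, for which $\alpha_I(E)=1$ and $\pm EI=E$. Closing the loop on the $n$ qubits then extracts the diagonal, and (absorbing the $q^{-(n-k)}$ normalization coming from tracing out the syndrome registers) we are left with $\bigl(\sum_{E\in\mathcal{N}(\mathfrak{C})}(w+3z)^{n-\mathrm{wt}(E)}(w-z)^{\mathrm{wt}(E)}\bigr)e_I$. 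Note that at this point the index set is the normalizer, not the stabilizer — this is exactly why the Poisson/MacWilliams step is needed. Recognizing, as in Example~\ref{example:Pauli-errors}, that $(w+3z,\,w-z)=q\,\Phi(w,z)$ with $\Phi$ the MacWilliams transform of the per-qubit weight function $\mathrm{wt}(I)=(1,0),\ \mathrm{wt}(P)=(0,1)$ for $P\ne I$, the sum equals $q^{n}\sum_{E\in\mathcal{N}(\mathfrak{C})}\Phi(w,z)^{\mathbf{wt}(E)}$, and Corollary~\ref{corollary:Poisson-summation-complete} with $m=1$ (equivalently, the quantum MacWilliams identity $B=A\circ\Phi$, with $\Phi$ an involution) rewrites it as $q^{n+k}\sum_{S\in\mathcal{S}(\mathfrak{C})}w^{n-\mathrm{wt}(S)}z^{\mathrm{wt}(S)}$. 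Finally, a direct computation from \eqref{eq:Shor-Laflamme-A} with $\Pi_{\mathfrak{C}}=q^{-(n-k)}\sum_{S\in\mathcal{S}}S$ shows $\Tr(P\Pi_{\mathfrak{C}})$ is nonzero precisely when $P$ is the positive representative of some $S\in\mathcal{S}(\mathfrak{C})$, whence $A(w,z;\Pi_{\mathfrak{C}})=q^{2k}\sum_{S\in\mathcal{S}(\mathfrak{C})}w^{n-\mathrm{wt}(S)}z^{\mathrm{wt}(S)}$; matching this with the previous display (the powers of $q$ cancel against the normalization of the trace over the measurement registers) gives the claim.

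The main obstacle is bookkeeping rather than any deep step: one must make precise that ``ignore the syndrome'' is the normalized partial trace over the $n-k$ measurement registers — this is what supplies the factor $q^{-(n-k)}$ — and then track the phases $\alpha_S(E)$, $\mu(\cdot)$ together with the powers of $q$ arising from the Bell-state contractions, from Poisson summation, and from $\Tr(\Pi_{\mathfrak{C}})=q^{k}$, checking that they all cancel. Fortunately the phases trivialize once the classical trace forces $S=I$, so the only genuine content is the Poisson/MacWilliams conversion of the normalizer sum into the stabilizer sum, and the rest is normalization.
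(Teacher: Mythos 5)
Your proof is correct and follows essentially the same route as the paper: compose the diagonal Pauli-error tensor with the syndrome-extraction tensor \eqref{eq:stab_synd_ct}, observe that the trace kills every $S\neq I$ term, and identify the surviving normalizer sum in the MacWilliams-transformed variables with $A(w,z;\Pi_{\mathfrak{C}})$. The only difference is that you make the final Poisson-summation and normalization bookkeeping explicit, whereas the paper's proof stops at $\sum_{E\in\mathcal{N}(\mathfrak{C})}\Phi(\mathbf{u})^{\mathbf{wt}(E)}e_I$ and leaves that identification implicit.
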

\begin{proof}
    Expanding Proposition~\ref{proposition:pauli-error-channel} we have:
    \begin{equation}\label{eq:ct_of_n_pauli_error_ch}
        \circuitoperator{\widetilde{\mathcal{D}}^{\otimes n}} = \sum_{E\in P^n} \Phi(\mathbf{u})^{\mathbf{wt}(E)} e^E_E.
    \end{equation}
    Composing \eqref{eq:ct_of_n_pauli_error_ch} with \eqref{eq:stab_synd_ct} we get:
    \begin{equation}
    \begin{split}
       &\circuitoperator{\mathcal{MP}_{S_1} \circ \cdots \circ \mathcal{MP}_{S_{n-k}}\circ \widetilde{\mathcal{D}}^{\otimes n}} \\
       &= \sum_{E \in \mathcal{N}(\mathfrak{C})} \sum_{S \in \mathcal{S}(\mathfrak{C})} \Phi(\mathbf{u})^{\mathbf{wt}(E)} \alpha_S(E) e^E_{\pm ES\otimes Z^{\mathtt{gen}(S)}}\\
       &= \sum_{E \in \mathcal{N}(\mathfrak{C})} \Phi(\mathbf{u})^{\mathbf{wt}(E)} e^E_{E\otimes I}  + \text{ off-diagonal terms.}
    \end{split}
    \end{equation}

    Where the first part of the summation is for $S=I$. Therefore, when we take the trace we take the trace and keep only the diagonal terms we get:
    \begin{equation}\begin{split}
        &\Tr\left[\circuitoperator{\mathcal{M}_{S_1} \circ \cdots \circ \mathcal{M}_{S_{n-k}}\circ \widetilde{\mathcal{D}}^{\otimes n}}\right]\\
        &\quad=  \sum_{E \in \mathcal{N}(\mathfrak{C})} \Phi(\mathbf{u})^{\mathbf{wt}(E)} e_I.
    \end{split}\end{equation}
\end{proof}

Now we turn to the case where each operation in the circuit tensor of the syndrome extraction circuit~\eqref{eq:stab_synd_ct} is noisy. In this work, we focus on a simple error model, where each syndrome measurement $\mathcal{MP}_{S_j}$ suffers from a uniform Pauli error $\widetilde{\mathcal{D}}_j$ on its support. Let us write $\widetilde{\mathcal{MP}}_{S_j} = \widetilde{\mathcal{D}}_j\circ \mathcal{MP}_{S_j}$, and the associated variables and weight function to be $(\mathbf{wt}_j,\mathbf{u}_j)$. We can develop the following for a single syndrome measurement:
\begin{align}
    \circuitoperator{\widetilde{\mathcal{MP}}_{S_j}} &= \sum_{E\in S_j^\perp} \Phi(\mathbf{u}_j)^{\mathbf{wt}_j(E)} e^E_{E\otimes I} \nonumber\\
    &\quad\quad+ \alpha_j(E) \Phi(\mathbf{u}_j)^{\mathbf{wt}_j(ES)} e^E_{(\pm ES_j)\otimes Z^{(j)}}.
\end{align}

By concatenating all the noisy syndrome measurements together with an initial decoherence channel, for which we specify $(\mathbf{wt},\mathbf{u})$, we get:
\begin{align}\label{eq:noisy_synd_and_deco_ct}
    &\circuitoperator{\widetilde{\mathcal{M}}_{S_{1}} \circ \cdots \circ \widetilde{\mathcal{M}}_{S_{n-k}}\circ \widetilde{\mathcal{D}}^{\otimes n}} =\nonumber\\
    &\sum_{E\in\mathcal{N}(\mathfrak{C})} \Phi(\mathbf{u}_1)^{\mathbf{wt}_1(E)}\cdots \Phi(\mathbf{u}_{n-k})^{\mathbf{wt}_{n-k}(E)}  \Phi(\mathbf{u})^{\mathbf{wt}(E)} e^E_{E\otimes I}\nonumber\\
    & \qquad + \text{ off-diagonal terms.}
\end{align}

The derivation of \eqref{eq:noisy_synd_and_deco_ct} follows the same path as the derivation of \eqref{eq:stab_synd_ct} and the proof of Proposition~\ref{prop:trace_ct_is_A}.

Next, we will provide two examples~(\ref{ex:noisy_perfect_syn}) and (\ref{ex:noisy_rotated_surface_syn}) showcasing the steps needed to take starting from the stabilizers group to get the $A_{\text{path}}$ and $B_{\text{path}}$ enumerators. Rather than creating and tracing the appropriate circuit tensor, as suggested by Proposition~\ref{prop:trace_ct_is_A}, we will use Therom~\ref{theorem:Poisson-summation} and perform a simple weighted counting procedure. The noise model we will consider will be a decoherence channel before the syndrome extraction circuit, a Pauli error on the qubits that take part in a projective measurement, and Pauli idling errors for qubits outside the measurement.

To facilitate the enumeration we will use the following variables, weight functions, and MacWilliams transforms:
\begin{itemize}
    \item Initial decoherence channel will use the $z$ variable--- $\mathbf{u} = (w_z,z)$, the weight function
        \begin{equation}
        \mathtt{wt}(P) =
            \begin{cases}
                (1,0) &  P = I  \\
                (0,1) & P \in \{X,Y,Z\}
            \end{cases},
        \end{equation}
         and the MacWilliams transform of
        \begin{equation}\label{eq:pauli_macwill}
            \Phi(w_z,z) = (\tfrac{w_z+3z}{2}, \tfrac{w_z-z}{2}).
        \end{equation}
    \item Pauli errors on $r$ measured qubits for the stabilizer $S_i$ will use the $m_i$ variable--- $\mathbf{u}_{m_i} = (w_{m_i},m_i)$, the weight function
            \begin{equation} \label{eq:pauli_weight_function}
                \mathtt{wt}_{m_i}(P) =
                    \begin{cases}
                    (1,0) &  \bigotimes_{j\in \mathrm{supp}(S_i)}P_j = I^{\otimes r}  \\
                    (0,1) & \bigotimes_{j\in \mathrm{supp}(S_i)}P_j \ne I^{\otimes r}
                    \end{cases},
        \end{equation}
            and the MacWilliams transform of
            \begin{equation}\label{eq:meas_macwill}
                \Phi(w_{m_i},m_i) = (\tfrac{w_{m_i}+(4^r-1)m_i}{2^r}, \tfrac{w_{m_i}-m_i}{2^r}).
            \end{equation}
    \item Pauli errors on the idling qubit during the measurement of $S_i$ stabilizer will use the $c_i$ variable ---  $\mathbf{u}_{c_i} = (w_{c_i},c_i)$, the weight function
            \begin{align}\label{eq:pauli_weight_function_for_idle}
            \mathtt{wt}_{c_i}(P) &= (a,b) \text{ where} \nonumber \\
                &a = |\{P_j|P_j = I\}_{j\in \mathrm{off\_supp}(S_i)} | \nonumber \\ 
                &b = |\{P_j|P_j \ne I\}_{j\in \mathrm{off\_supp}(S_i)} |,
            \end{align}
            and the MacWilliams transform of
            \begin{equation}\label{eq:idling_macwill}
                \Phi(w_{c_i},c_i) = (\tfrac{w_{c_i}+3c_i}{2}, \tfrac{w_{c_i}-c_i}{2}).
            \end{equation}
\end{itemize}

We would like to emphasize that this is a worst-case error model, where a single measurement error can affect all of the measured qubits.

Just to illustrate how to apply the weight functions~\eqref{eq:pauli_weight_function} \eqref{eq:pauli_weight_function_for_idle}, consider the Pauli string $S_1 = \mathtt{XZZXI}$. This operator has support on the first four qubits, hence when we are to calculate the weight function $\mathtt{wt}_{m_1}$ on any Pauli string, we first choose the first four factors (the indexes that $S_1$ supports) and check if that is the identity $I^{\otimes 4}$. For example, for the Pauli string  $\mathtt{IXZZX}$, the supported substring is $\mathtt{IXZZ} \ne I^{\otimes 4}$ hence $\mathtt{wt}_{m_1}(\mathtt{IXZZX}) = (0,1)$. 

In a similar vein, if measuring $S_1$ only the fifth qubit is idle, and so we can compute $\mathtt{wt}_{c_1}$ on a Pauli string by checking if its fifth factor is $I$. For example, we get $\mathtt{wt}_{c_1}(\mathtt{IXZZX}) = (0,1)$ while $\mathtt{wt}_{c_1}(\mathtt{YXXYI}) = (1,0)$.

\begin{example}[Analysis of a noisy syndrome extraction for the perfect code]\label{ex:noisy_perfect_syn}
    Let us consider the generating set of $\mathcal{S} = \langle \mathtt{XZZXI}, \mathtt{IXZZX}, \mathtt{XIXZZ}, \mathtt{ZXIXZ}\rangle$. Just as above, we can calculate the appropriate weight functions for each of the elements in the set. For example, for $S_2 = \mathtt{IXZZX}$ one can easily calculate have the following:
    \begin{equation}
        \begin{tabular}{cc}
             $\mathtt{wt}_{m_2}(\mathtt{XIXZZ}) = (0,1)$ &  $\mathtt{wt}_{c_2}(\mathtt{XIXZZ}) = (0,1)$\\
             $\mathtt{wt}_{m_2}(\mathtt{IZYYZ}) = (0,1)$ &  $\mathtt{wt}_{c_2}(\mathtt{IZYYZ}) = (1,0)$
        \end{tabular}.
    \end{equation}
    Performing a simple counting operation over all normalizers and stabilizers we get the following:
    \begin{align}\label{eq:perfect_code_stab_sum}
        \sum_{E\in \mathcal{S}(\mathfrak{C})} &\mathbf{u}_{m_1}^{\mathtt{wt}_{m_1}(E)}  \mathbf{u}_{c_1}^{\mathtt{wt}_{c_1}(E)}\cdots \mathbf{u}_{m_4}^{\mathtt{wt}_{m_4}(E)}  \mathbf{u}_{c_4}^{\mathtt{wt}_{c_4}(E)} \mathbf{u}^{\mathtt{wt}(E)} = \nonumber \\ & w_c^4 w_m^4 w_z^5 + 3c^4 m^4 w_z z^4 + 12 c^3 m^4 w_c w_z z^4,
    \end{align}
    \begin{align}\label{eq:perfect_code_norm_sum}
        \sum_{E\in \mathcal{N}(\mathfrak{C})} &\mathbf{u}_{m_1}^{\mathtt{wt}_{m_1}(E)}  \mathbf{u}_{c_1}^{\mathtt{wt}_{c_1}(E)}\cdots \mathbf{u}_{m_4}^{\mathtt{wt}_{m_4}(E)} \mathbf{u}_{c_4}^{\mathtt{wt}_{c_4}(E)} \mathbf{u}^{\mathtt{wt}(E)} = \nonumber\\ & w_c^4 w_m^4 w_z^5 + 12 c^3 m^4 w_c w_z^2 z^3 + 18c^2 m^4 w_c^2 w_z^2 z^3\nonumber\\
        & + 3c^4 m^4 w_z z^4 + 12c^3 m^4 w_c w_z z^4 + 18 c^4 m^4 z^5.
    \end{align}

To get the analogues of the Shor-Laflamme enumerators $A$ and $B$, we will follow Theorem~\ref{theorem:Poisson-summation} and use the MacWilliams transforms~\eqref{eq:pauli_macwill}, \eqref{eq:meas_macwill}, \eqref{eq:idling_macwill}) on the above expressions~\eqref{eq:perfect_code_stab_sum} and \eqref{eq:perfect_code_norm_sum}. Let $A_\text{path}$ and $B_\text{path}$ be the enumerators that count the paths leading to stabilizers \eqref{eq:Poisson-summation-extention1} and normalizers \eqref{eq:Poisson-summation-extention2} respectively.

For ease of presentation, we are showing the normalized unhomogenized result up to the third order:

\begin{align}
    B_{\text{path}} &= 1 + 60m + 960mz + 12cz + 24390m^2 + 768cm \nonumber\\ & + 30z^3 + 5760mz^2 + 72cz^2 + 365760m^2z \nonumber\\ & + 11472cmz + 54c^2z + 4145340m^3 + 292608cm^2\nonumber\\ &  + 3456c^2m + 12c^3 + \cdots,
\end{align}

\begin{align}
    A_{\text{path}} &= 1 + 12m + 240mz + 12cz + 6102m^2 + 192cm \nonumber\\ & + 1440mz^2 + 91440m^2z + 2832cmz + 1036332m^3 \nonumber\\ & + 73152cm^2 + 864c^2m + \cdots.
\end{align}

We can further subtract $A_{\text{path}}$ from $B_{\text{path}}$ and get a count of all the error paths leading to a nontrivial logical error:
\begin{equation}\begin{split}
    B_{\text{path}} - A_{\text{path}} &= 48m + 720mz + 18288m^2 + 576cm + 30z^3 \\ 
    & + 4320mz^2 + 72cz^2 + 274320m^2z + 8640cmz  \\
    & + 54c^2z  + 3109008m^3 + 219456cm^2  \\
    & + 2592c^2m + 12c^3 + \cdots.
\end{split}\end{equation}
\end{example}

\begin{example}[Analysis of a noisy syndrome extraction for the distance 3 rotated surface code]\label{ex:noisy_rotated_surface_syn}
Let us consider the standard plaquettes in the distance 3 rotated surface code~\cite{fowlerSurfaceCodesPractical2012}:
\begin{align}
    \mathcal{S} &= \langle \mathtt{ZZIZZIIII},\mathtt{IIIIZZIZZ},\mathtt{IXXIXXIII},\mathtt{IIIXXIXXI},\nonumber\\ 
    &\qquad \mathtt{IIZIIZIII},\mathtt{IIIZIIZII},\mathtt{XXIIIIIII},\mathtt{IIIIIIIXX}\rangle.
\end{align}
Some of the plaquettes have 4 qubits while others have only 2, thus we want to distinguish between them in our enumeration. We will do that by using two different variables $m_4$ and $m_2$. We will use the same counting variables and weight functions for the initial errors and the idling errors as in Example~\ref{ex:noisy_perfect_syn}, while for the measurement errors, we have $m_4$ and $m_2$ as counting variables and the respective $\mathtt{wt}_{m_4}$ and $\mathtt{wt}_{m_2}$ weight functions. 

Table~\ref{tab:rotated_surface_code_weight} contains the (unhomogenized) monomials for 8 out of the 256 stabilizers. When summing over all the products of the stabilizers we get the following unhomogenized polynomial:
\begin{align}
    &1 + 4m_4^2 m_2 c^{11} z^2 + 8 m_4^4 m_2^2 c^{20} z^4 + 8 m_4^4 m_2^2 c^{21} z^4 \nonumber \\
    & \ + 4 m_4^3 m_2^2 c^{22} z^4 + 2 m_4^4 m_2^2 c^{22} z^4 + 32 m_4^4 m_2^3 c^{31} z^6 \nonumber\\
    &\ + 32 m_4^4 m_2^3 c^{32} z^6 + 8 m_4^4 m_2^4 c^{32} z^6 + 4 m_4^4 m_2^3 c^{33} z^6 \nonumber\\
    &\ + 16 m_4^4 m_2^4 c^{33} z^6 + 8 m_4^4 m_2^4 c^{34} z^6 + 56 m_4^4 m_2^4 c^{42} z^8 \nonumber\\
    &\ + 56 m_4^4 m_2^4 c^{43} z^8 + 17 m_4^4 m_2^4 c^{44} z^8.
\end{align}

Following Therom~\ref{theorem:Poisson-summation} we can calculate the $A_{\text{path}}$ and $B_{\text{path}}$ enumerators. For ease of presentation, we are showing the normalized unhomogenized result up to the third order:

\begin{align}\label{eq:surface_b_enumerator}
    B_{\text{path}} &= 1 + 16m + 438c^2 + 188cz + 1952 c m + 4 z^2 \nonumber \\
    & + 368 m z  + 3228 m^2  + 5432 c^3 + 3358 c^2 z + 92600 c^2 m \nonumber \\
    & + 432 c z^2  + 36160 c z m + 395744 c m^2  + 2832 m z^2 \nonumber \\
    & + 74600 m^2 z + 403280 m^3 + 24 z^3 + \cdots,
\end{align}

\begin{align}\label{eq:surface_a_enumerator}
    A_{\text{path}} &= 1 + 16m + 438c^2 + 188cz + 1320 c m + 4 z^2   \nonumber \\
    & + 256 m z + 1516 m^2  + 1824 c^3 + 1316 c^2 z + 44224 c^2 m  \nonumber \\ 
    & + 88 c z^2 + 17992 c z m + 150744 c m^2 + 1264 m z^2  \nonumber \\ 
    & + 28880 m^2 z + 114192 m^3 + \cdots.
\end{align}

We can further subtract $A_{\text{path}}$ from $B_{\text{path}}$ and get a count of all the error paths leading to a nontrivial logical error:
\begin{align}
    B_{\text{path}} &- A_{\text{path}} = 632 c m + 112 m z  + 1712 m^2 + 3608 c^3  \nonumber \\
    & + 2042 c^2 z + 48376 c^2 m + 344 c z^2 + 18168 c z m  \nonumber \\
    & + 245000 c m^2 + 1568 m z^2 + 45720 m^2 z + 289088 m^3 \nonumber \\
    & + 24 z^3 + \cdots.
\end{align}

When discarding any idling errors in \eqref{eq:surface_a_enumerator} we have 144,336 error paths with 3 errors that lead to a logical $I$. Using the observation in Corollary~\ref{col:logical_coset} we can calculate the error paths enumerators that lead to any specific logical error in the syndrome extraction circuit. For the logical $X$ or $Z$ errors, the number of paths with 3 errors is 120,260 while for the logical $Y$ the number is 95,880.
\end{example}

\begin{table*}[htpb]
    \centering
    \begin{tabular}{c|ccccccccc|c}
Stabilizer & $\mathbf{u}_1^{\mathrm{wt}_1}$ & $\mathbf{u}_2^{\mathrm{wt}_2}$ & $\mathbf{u}_3^{\mathrm{wt}_3}$ & $\mathbf{u}_4^{\mathrm{wt}_4}$ & $\mathbf{u}_5^{\mathrm{wt}_5}$ & $\mathbf{u}_6^{\mathrm{wt}_6}$ & $\mathbf{u}_7^{\mathrm{wt}_7}$ & $\mathbf{u}_8^{\mathrm{wt}_8}$ & $\mathbf{u}^{\mathrm{wt}}$ & product\\
\hline
$\mathtt{ZZIZZIIII}$ & $m_4     $ & $ m_4 c^3 $ & $ m_4 c^2 $ & $ m_4 c^2 $ & $ c^4     $ & $ m_2 c^3 $ & $ m_2 c^2 $ & $ c^4     $ & $ z^4 $ & $ m_4^4 m_2^2 c^{20} z^4$ \\
$\mathtt{IIIIZZIZZ} $ & $ m_4 c^3 $ & $ m_4     $ & $ m_4 c^2 $ & $ m_4 c^2 $ & $ m_2 c^3 $ & $ c^4     $ & $ c^4     $ & $ m_2 c^2 $ & $ z^4 $ & $ m_4^4 m_2^2 c^{20} z^4$ \\
$\mathtt{IXXIXXIII} $ & $ m_4 c^2 $ & $ m_4 c^2 $ & $ m_4     $ & $ m_4 c^3 $ & $ m_2 c^2 $ & $ c^4     $ & $ m_2 c^3 $ & $ c^4     $ & $ z^4 $ & $ m_4^4 m_2^2 c^{20} z^4$ \\
$\mathtt{IIIXXIXXI} $ & $ m_4 c^2 $ & $ m_4 c^2 $ & $ m_4 c^3 $ & $ m_4     $ & $ c^4     $ & $ m_2 c^2 $ & $ c^4     $ & $ m_2 c^3 $ & $ z^4 $ & $ m_4^4 m_2^2 c^{20} z^4$ \\
$\mathtt{IIZIIZIII} $ & $ c^2     $ & $ m_4 c   $ & $ m_4     $ & $ c^2     $ & $ m_2     $ & $ c^2     $ & $ c^2     $ & $ c^2     $ & $ z^2 $ & $ m_4^2 m_2 c^{11} z^2$ \\
$\mathtt{IIIZIIZII} $ & $ m_4 c   $ & $ c^2     $ & $ c^2     $ & $ m_4     $ & $ c^2     $ & $ m_2     $ & $ c^2     $ & $ c^2     $ & $ z^2 $ & $ m_4^2 m_2 c^{11} z^2$ \\
$\mathtt{XXIIIIIII} $ & $ m_4     $ & $ c^2     $ & $ m_4 c   $ & $ c^2     $ & $ c^2     $ & $ c^2     $ & $ m_2     $ & $ c^2     $ & $ z^2 $ & $ m_4^2 m_2 c^{11} z^2$ \\
$\mathtt{IIIIIIIXX} $ & $ c^2     $ & $ m_4     $ & $ c^2     $ & $ m_4 c   $ & $ c^2     $ & $ c^2     $ & $ c^2     $ & $ m_2     $ & $ z^2 $ & $ m_4^2 m_2 c^{11} z^2$
    \end{tabular}
    \caption{The (unhomogenized) variables with their appropriate weight for the distance 3 rotated surface code. We show only 8 out of the 256  stabilizers }
    \label{tab:rotated_surface_code_weight}
\end{table*}

Next, we describe the simulation we used to validate the results of Example~\ref{ex:noisy_rotated_surface_syn}. 

The noise model we considered in the simulation has two types of errors. The first is a Pauli error affecting one or more of the data qubits even before starting the syndrome extraction circuit. The second error is a post-measurement Pauli error, that is applied to one or more of the plaquettes. For example, in the 4-qubit plaquette, the possible errors are any four Pauli except $IIII$. See Fig.~\ref{fig:noisy_syndrome} for the simulated circuit and more information about the counting.

To generate the logical zero state we initialized all of the data qubits to $\ket{0}$ and then performed a noiseless syndrome extraction. Similarly, to generate the $\ket{+}_L$ state we initialized all of the data qubits to $\ket{+}$ before performing the noiseless syndrome extraction. Before running the syndrome extraction circuit with noise in different locations, we initialized the data qubits to one of the saved logical states.

In total, we considered 104,183,380 possible combinations of errors. The simulation time for each different initial state was about 1000 core hours. The circuit enumerator calculations took only a few seconds using Sage~\cite{sagemath}.

When initializing the data qubits to $\ket{0}_L$, we got 264,596 possible error paths, with 3 errors, that generate a logical $I$ or a logical $Z$. The same number of error paths was calculated when we initialized the data qubits to $\ket{+}_L$. These numbers align perfectly with those calculated in Example~\ref{ex:noisy_rotated_surface_syn}, as $144336 + 120260 = 264596$.

We conclude this section with the $A_{\text{path}}$ and $B_{\text{path}}$ enumerators for the distance five rotated surface code syndrome extraction circuit. In Example~\ref{ex:noisy_rotated_surface_syn} we have shown how to efficiently calculate these enumerators for the distance 3 rotated surface code syndrome extraction circuit. 

We will use the standard plaquettes of the distance five rotated surface code, as shown in Fig.~\ref{fig:d5_sc_plq}. The generator group we use can be seen in Appendix~\ref{app:d5_rsc_gen}. The enumerators' calculation for the distance five code follows the same path as presented in Example~\ref{ex:noisy_rotated_surface_syn}, resulting in the error paths enumerators below. We present the results up to degree 5.

\newcommand{\halfscaling}{0.6}
\newcommand{\nodesize}{1.5}

\begin{figure}[ht]
    \centering
    \includegraphics{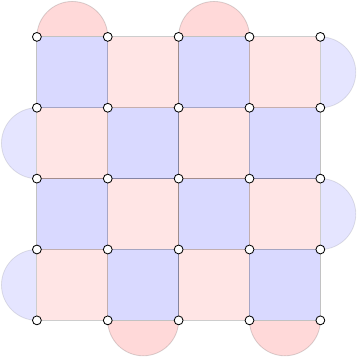}
    \caption[Distance five rotated surface code plaquettes]{Standard plaquettes of the distance five rotated surface code --- The vertices represent the physical qubits, while the purple and pink patches represent $Z$ and $X$ stabilizers respectively. The complete generator group can be found in Appendix~\ref{app:d5_rsc_gen}.}
    \label{fig:d5_sc_plq}
\end{figure}

\begin{equation}\begin{split}
    B_{\text{path}} &= 1 + 40 m + 8 z^2 + 704 m z + 4892 m^2 + 3656 m z^2 \\
    &+ 106568 m^2 z + 606632 m^3 + 72 z^4 + 16960 m z^3\\
    &+ 1156208 m^2 z^2 + 19015984 m^3 z + 94658202 m^4  \\
    & + 160 z^5 + 73040 m z^4 + 8544672 m^2 z^3  \\
    & + 292544120 m^3 z^2 + 3723068248 m^4 z \\
    &+ 16168935704 m^5 + \cdots
\end{split} \end{equation}

\begin{equation} \begin{split}
    A_{\text{path}} &= 1 + 40 m + 8 z^2 + 704 m z + 4892 m^2 + 3656 m z^2  \\ 
    &+ 103440 m^2 z + 548712 m^3 + 72 z^4 + 15424 m z^3 \\
    &+ 1046000 m^2 z^2 + 15997312 m^3 z + 71438618 m^4\\
    &+ 52816 m z^4 + 6800352 m^2 z^3 + 222326424 m^3 z^2 \\
    &+ 2569524432 m^4 z + 9919808920 m^5 + \cdots
\end{split} \end{equation}

\begin{equation} \begin{split}
    B_{\text{path}} - A_{\text{path}} &=  57920 m^3  + 3128 m^2 z + 110208 m^2 z^2 \\
    &+ 3018672 m^3 z + 1536 m z^3 +  1744320 m^2 z^3 \\
    &+  20224 m z^4 + 6249126784 m^5  \\
    &+ 1153543816 m^4 z + 23219584 m^4 \\
    &+ 70217696 m^3 z^2  +  160 z^5 + \cdots
\end{split} \end{equation}

Executing the enumerators' calculation required approximately 30 hours using our Sage code running on a single laptop CPU core. Given our current approach validation approach, conducting such a simulation would be unfeasible due to the necessity of considering 6,395,354,893,463,716 potential error paths.








\section{Conclusions}\label{section:concliu}

In this paper, we presented a novel approach to analyze circuits and error models in the context of quantum error correction. By introducing the concept of circuit enumerators, we have extended the framework of tensor enumerators to address the specific challenges posed by fault-tolerant quantum circuits. We developed all the necessary machinery to analyze any quantum circuit.  Our methodology offers a promising alternative to Monte Carlo techniques, providing an explicit means of analyzing circuits without sacrificing accuracy.

With the development of an analogue of the Poisson summation formula tailored for stabilizer codes, we have demonstrated the efficacy of our approach in rapidly enumerating error paths within syndrome extraction circuits. This advancement is particularly significant as it enables the exact computation of error probabilities, even for larger circuits and rare error events, where traditional computational methods become intractable.

The explicit counting of error paths in a distance five surface code serves as a great example of the utility and effectiveness of our approach. By showcasing scenarios previously deemed infeasible via simulation, we have highlighted the practical relevance of circuit enumerators in the realm of quantum error correction.

Looking forward, the development and usage of circuit enumerators hold promise for advancing the field of quantum computation. Future research may focus on using the methodology to tackle complex quantum circuits other than syndrome extractions. Furthermore, in our example, we ignored the syndrome measurement result, while there is value in taking that into account as well.

\bibliography{main}

\appendices

\section{Stablizer Generator Group for the distance five rotated surface code}\label{app:d5_rsc_gen}
For the calculation of enumerators for the distance five rotated surface code we have used the following stabilizer generator group:

\begin{equation}\begin{split}
\mathcal{S} = \langle &\mathtt{ZZIIIZZIIIIIIIIIIIIIIIIII},\\ &\mathtt{ IIZZIIIZZIIIIIIIIIIIIIIII},\\ &\mathtt{ IIIIIIZZIIIZZIIIIIIIIIIII},\\ &\mathtt{ IIIIIIIIZZIIIZZIIIIIIIIII},\\ &\mathtt{ IIIIIIIIIIZZIIIZZIIIIIIII},\\ &\mathtt{ IIIIIIIIIIIIZZIIIZZIIIIII},\\ &\mathtt{ IIIIIIIIIIIIIIIIZZIIIZZII},\\ &\mathtt{ IIIIIIIIIIIIIIIIIIZZIIIZZ},\\ &\mathtt{ IXXIIIXXIIIIIIIIIIIIIIIII},\\ &\mathtt{ IIIXXIIIXXIIIIIIIIIIIIIII},\\ &\mathtt{ IIIIIXXIIIXXIIIIIIIIIIIII},\\ &\mathtt{ IIIIIIIXXIIIXXIIIIIIIIIII},\\ &\mathtt{ IIIIIIIIIIIXXIIIXXIIIIIII},\\ &\mathtt{ IIIIIIIIIIIIIXXIIIXXIIIII},\\ &\mathtt{ IIIIIIIIIIIIIIIXXIIIXXIII},\\ &\mathtt{ IIIIIIIIIIIIIIIIIXXIIIXXI},\\ &\mathtt{ IIIIZIIIIZIIIIIIIIIIIIIII},\\ &\mathtt{ IIIIIZIIIIZIIIIIIIIIIIIII},\\ &\mathtt{ IIIIIIIIIIIIIIZIIIIZIIIII},\\ &\mathtt{ IIIIIIIIIIIIIIIZIIIIZIIII},\\ &\mathtt{ XXIIIIIIIIIIIIIIIIIIIIIII},\\ &\mathtt{ IIIIIIIIIIIIIIIIIIIIIXXII},\\ &\mathtt{ IIXXIIIIIIIIIIIIIIIIIIIII},\\ &\mathtt{ IIIIIIIIIIIIIIIIIIIIIIIXX}\rangle
\end{split}\end{equation}

\section{Examples of circuit tensor for boolean operations}\label{app:calssical_ct_ex}
In this appendix, we provide more examples of circuit tensors for some primitive boolean operations that commonly appear in quantum fault-tolerance circuits.

In Example~\ref{exp:xor} we constructed the circuit tensor for the xor operation. Here we provide the construction for the and, or, and mux functions.

\begin{example}[And circuit tensor]\label{exp:and}
    The operator of \textbf{and} is: 
    \begin{equation}
        A_{and} = \sum_{x_0,x_1} \ket{x_0 \land x_1}\bra{x_0,x_1} \in L((\mathbb{C}^2)^{\otimes 2},\mathbb{C}^2).
    \end{equation}
    For it, we get the circuit tensor:
\begin{align}
    &\circuittensor{\mathtt{and}}{Z^{\alpha_0}\otimes Z^{\alpha_1}}{Z^\beta} = \tfrac{1}{4} \sum_{x_0,x_1} (-1)^{-\alpha_0 x_0 - \alpha_1 x_1 + \beta (x_0 \land x_1)} \nonumber\\
    &\qquad\qquad= \begin{cases}
    1 & \text{ if $\alpha_0 = \alpha_1 = \beta=0$,}\\
    (-1)^{\alpha_0 \alpha_1}\frac{1}{2} & \text{ if $\beta=1$,}\\
    0 & \text{otherwise.}\end{cases}
\end{align}
 Or using a tensor basis 
 \begin{equation}\label{eq:and_ct}
     \circuitoperator{\mathtt{and}} = e\errorbasis{I\otimes I}{I} + \frac{1}{2}(e\errorbasis{I\otimes I}{Z} + e\errorbasis{I\otimes Z}{Z} + e\errorbasis{Z\otimes I}{Z} - e\errorbasis{Z\otimes Z}{Z}).
 \end{equation}
\end{example}

\begin{example}[Or circuit tensor]\label{exp:or}
    The operator of \textbf{or} is:
    \begin{equation}
    A_{or} = \sum_{x_0,x_1} \ket{x_0 \lor  x_1}\bra{x_0,x_1} \in L((\mathbb{C}^2)^{\otimes 2},\mathbb{C}^2).
    \end{equation}
    For it, we get the circuit tensor:
\begin{align}
    &\circuittensor{\mathtt{or}}{Z^{\alpha_0}\otimes Z^{\alpha_1}}{Z^\beta} = \tfrac{1}{4} \sum_{x_0,x_1} (-1)^{-\alpha_0 x_0 - \alpha_1 x_1 + \beta (x_0 \lor x_1)} \nonumber\\
    &\qquad = \begin{cases}
     1 & \text{ if $\alpha_0 = \alpha_1 = \beta=0$,}\\
     (-1)^{1 + \alpha_0\alpha_1}\frac{1}{2} & \text{ if $\beta=1$,}\\
     0 & \text{otherwise.}
     \end{cases}
\end{align}
 Or using a tensor basis 
 \begin{equation}
 \circuitoperator{\mathtt{or}} = e\errorbasis{I\otimes I}{I} + \frac{1}{2}( - e\errorbasis{I\otimes I}{Z} + e\errorbasis{I\otimes Z}{Z} + e\errorbasis{Z\otimes I}{Z} + e\errorbasis{Z\otimes Z}{Z}).
 \end{equation}
\end{example}

\begin{example}[Mux circuit tensor]\label{exp:mux}
    The operator of \textbf{mux} is: 
    \begin{equation}    
    A_{mux} = \sum_{s, x_1,x_2} \ket{mux(s, x_1, x_2)}\bra{s,x_1,x_2} \in L((\mathbb{C}^2)^{\otimes 3},\mathbb{C}^2).
    \end{equation}
    For it, we get the circuit tensor:
\begin{align}
    &\circuittensor{\mathtt{mux}}{Z^{\alpha_0}\otimes Z^{\alpha_1} \otimes Z^{\alpha_2}}{Z^\beta} \nonumber\\
    &\quad =\tfrac{1}{8} \sum_{s, x_1, x_2} (-1)^{\alpha_0 s + \alpha_1 x_1 + \alpha_2 x_2 + \beta\ mux(s, x_1,  x_2)} \nonumber\\
    &\quad = \begin{cases}
     1 & \text{ if $\alpha_0 = \alpha_1 = \alpha_2 = \beta=0$,}\\
     (-1)^{\alpha_0\alpha_2}\frac{1}{2} & \text{ if $\beta=\alpha_1\oplus\alpha_2 = 1$,}\\
     0 & \text{otherwise.}\end{cases}
\end{align}
 Or using a tensor basis 
 \begin{align}
 &\circuitoperator{\mathtt{mux}} = e\errorbasis{I\otimes I\otimes I}{I}\nonumber\\
 &\qquad + \frac{1}{2}(e\errorbasis{I\otimes I \otimes Z}{Z} + e\errorbasis{I\otimes Z \otimes I}{Z} - e\errorbasis{Z\otimes I \otimes Z}{Z} + e\errorbasis{Z\otimes Z \otimes I}{Z}).
 \end{align}
\end{example}

\section{Relation between process matrix and circuit tensor}\label{app:pm_ct}
In this work, we have constructed circuit tensors from the operator representation, or Choi-Kraus form, of a quantum channel. However, another popular representation of a quantum channel arises from its so-called process matrix (usually computed relative to the Pauli basis, but any error basis will suffice). In this appendix, we show that the process matrix of a quantum channel is related to the channel's circuit tensor precisely by the quantum MacWilliams identity for tensor enumerators, \cite{cao2023quantum}.

\begin{definition}[Process matrix of a quantum channel]
    Let $\mathcal{E}$ be an error basis on a Hilbert space $\mathfrak{H}$ and $\mathcal{M}:\mathfrak{H} \leadsto \mathfrak{H}$ a quantum channel. The \emph{process matrix} of $\mathcal{M}$ in the error basis $\mathcal{E}$ is the matrix $\chi\errorbasis{E}{E'}$ defined implicitly by $\mathcal{M}(\rho) = \sum_{E,E'} \chi\errorbasis{E}{E'} (E')^\dagger \rho E$.
\end{definition}

Unlike in the case of Kraus operators, the process matrix associated with a quantum channel is unique and completely characterizes the channel. To see this, we extend $\mathcal{M}$ linearly to all linear operators $L(\mathfrak{H})$ and then we could define the process matrix directly as $\chi\errorbasis{E}{E'} = \frac{1}{\dim(\mathcal{H})} \Tr(E^\dagger \mathcal{M}(E'))$.

Like the circuit tensor, we write the process matrix in an index-free form $\chi = \frac{1}{\dim(\mathcal{H})} \sum_{E,E'} \Tr(E^\dagger \mathcal{M}(E')) e\errorbasis{E}{E'}$.

Note that the matrix is a Hermitian matrix: by definition
\begin{align}
    \nonumber \mathcal{M}(\rho) &= (\mathcal{M}(\rho))^\dagger = \sum_{E,E'} \overline{\chi\errorbasis{E}{E'}} E^\dagger \rho E'\\
    &= \sum_{E,E'} \overline{\chi\errorbasis{E'}{E}} (E')^\dagger \rho E.
\end{align}
So by uniqueness of the process matrix, $\overline{\chi\errorbasis{E'}{E}} = \chi\errorbasis{E}{E'}$.

It is straightforward to compute the process matrix from a channel's Kraus operators. Namely, if $\mathcal{M}(\rho) = \sum_j A_j \rho A_j^\dagger$ then we expand each Kraus operator as $A_j = \sum_E \lambda_{j,E} E^\dagger$. Then substituting this expression gives
\begin{equation}
    \mathcal{M}(\rho) = \sum_j \left(\sum_{E,E'} \lambda_{j,E'}\overline{\lambda_{j,E}}\right) (E')^\dagger \rho E.
\end{equation}
And hence we see $\chi^E_{E'} = \sum_j \lambda_{j,E'}\overline{\lambda_{j,E}}$. Conversely finding Kraus operators from the process matrix of a channel just is a matter of diagonalization.

Following \cite{cao2023quantum}, and akin to what appears in \S{\ref{section:Poisson-summation}}, we define the quantum MacWilliams transform of an error basis $\mathcal{E}$ to be the linear operator on matrices indexed by elements of $\mathcal{E}$ given by
\begin{equation}
    \Psi(e\errorbasis{E}{E'}) = \frac{1}{\dim(\mathfrak{H})^2} \sum_{F,F'\in\mathcal{E}} \Tr(F^\dagger E F' (E')^\dagger) e\errorbasis{F}{F'}.
\end{equation}
Then our version of the MacWilliams identity for quantum channels is as follows.

\begin{theorem}
Let $\mathcal{M}:\mathfrak{H} \leadsto \mathfrak{H}$ a quantum channel. Relative to some error basis $\mathcal{E}$, its circuit tensor $\circuitoperator{\mathcal{M}}$ and process matrix $\chi$ satisfy $\circuitoperator{\mathcal{M}} = \dim(\mathfrak{H}) \Psi(\chi)$.
\end{theorem}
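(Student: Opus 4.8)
The plan is to compute both sides explicitly in components relative to the error basis $\mathcal{E}$ and match them. On the one hand, recall from \eqref{eq:circuit-tensor-to-B-enumerator} (extended to channels) that the circuit tensor has components $\circuittensor{\mathcal{M}}{E}{E'} = \tfrac{1}{\dim(\mathfrak{H})}\Tr(E^\dagger \mathcal{M}(E'))$, where $\mathcal{M}$ is extended linearly to all of $L(\mathfrak{H})$. On the other hand, the process matrix is $\chi = \tfrac{1}{\dim(\mathfrak{H})}\sum_{E,E'} \Tr(E^\dagger \mathcal{M}(E')) e\errorbasis{E}{E'}$, so in fact the components of $\circuitoperator{\mathcal{M}}$ and of $\chi$ are literally the same numbers; the content of the theorem is that the MacWilliams transform $\Psi$, up to the scalar $\dim(\mathfrak{H})$, is exactly the identity when these numbers are reinterpreted through $\mathcal{M}$. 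So first I would write $\chi = \tfrac{1}{\dim(\mathfrak{H})}\sum_{G,G'} \Tr(G^\dagger \mathcal{M}(G')) e\errorbasis{G}{G'}$ and apply $\Psi$ termwise.

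Second, I would substitute the definition of $\Psi$ into this, obtaining
\begin{equation}
\Psi(\chi) = \frac{1}{\dim(\mathfrak{H})^3} \sum_{F,F',G,G'} \Tr(G^\dagger \mathcal{M}(G'))\,\Tr(F^\dagger G F' (G')^\dagger)\, e\errorbasis{F}{F'}.
\end{equation}
Now the key step is to collapse the sums over $G$ and $G'$ using the completeness/orthogonality of the error basis: for a fixed pair $F,F'$, the quantity $\sum_{G,G'} \Tr(G^\dagger X G')\,\Tr(F^\dagger G F' (G')^\dagger)$ should reduce to something proportional to $\Tr(F^\dagger \mathcal{M}(F'))$ when $X = \mathcal{M}(\cdot)$ is plugged in appropriately. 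Concretely, I would use the resolution of identity on $L(\mathfrak{H})$ given by the error basis, $X = \tfrac{1}{\dim(\mathfrak{H})}\sum_{G} \Tr(G^\dagger X) G$ (from trace-orthogonality and the fact that $\mathcal{E}$ is a basis of unitaries), first in the variable $G$ with $X = \mathcal{M}(G')$, to turn $\tfrac{1}{\dim(\mathfrak{H})}\sum_G \Tr(G^\dagger \mathcal{M}(G'))\, G$ into $\mathcal{M}(G')$ itself. This leaves $\tfrac{1}{\dim(\mathfrak{H})^2}\sum_{F,F',G'} \Tr(F^\dagger \mathcal{M}(G') F' (G')^\dagger)\, e\errorbasis{F}{F'}$, and then I would do the sum over $G'$ the same way, using linearity of $\mathcal{M}$ and the identity $\tfrac{1}{\dim(\mathfrak{H})}\sum_{G'} (G')^\dagger \otimes G' $ acting as a "swap-like" contraction — more precisely $\tfrac{1}{\dim(\mathfrak{H})}\sum_{G'}\Tr(Y (G')^\dagger) \mathcal{M}(G') = \mathcal{M}(Y)$ — to collapse to $\tfrac{1}{\dim(\mathfrak{H})}\sum_{F,F'} \Tr(F^\dagger \mathcal{M}(F'))\, e\errorbasis{F}{F'} = \chi$ with an extra factor of $\tfrac{1}{\dim(\mathfrak{H})}$ accounted for. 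Tracking the powers of $\dim(\mathfrak{H})$ carefully then yields $\Psi(\chi) = \tfrac{1}{\dim(\mathfrak{H})}\circuitoperator{\mathcal{M}}$, i.e. $\circuitoperator{\mathcal{M}} = \dim(\mathfrak{H})\Psi(\chi)$.

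The main obstacle I anticipate is bookkeeping: getting the normalization factors of $\dim(\mathfrak{H})$ exactly right, and correctly handling the "two-sided" contraction in the second $G'$-sum, which is not a plain resolution of identity but rather a statement about how $\sum_{G'} (G')^\dagger \otimes G'$ (equivalently $\dim(\mathfrak{H})$ times the Bell projector, cf. Lemma~\ref{lemma:bell-state-stabilizer}) contracts against an operator sandwiched between $\mathcal{M}(G')$ and $(G')^\dagger$. An alternative, possibly cleaner, route is to verify the identity on Kraus operators: expand $A_j = \sum_E \lambda_{j,E} E^\dagger$, so that $\chi\errorbasis{E}{E'} = \sum_j \lambda_{j,E'}\overline{\lambda_{j,E}}$, compute $\circuittensor{\mathcal{M}}{E}{E'} = \tfrac{1}{\dim(\mathfrak{H})}\sum_j \Tr(E^\dagger A_j^\dagger E' A_j)$, expand both $A_j$'s in the basis, and then check directly that applying $\Psi$ to $\chi$ and multiplying by $\dim(\mathfrak{H})$ reproduces this — the cross-terms $\Tr(E^\dagger F E' (F')^\dagger)$ appearing in $\Psi$ are exactly what show up when expanding $\Tr(E^\dagger A_j^\dagger E' A_j)$. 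I would likely present the Kraus-operator version as the cleanest, since it reduces everything to a finite linear-algebra identity in the coefficients $\lambda_{j,E}$ with no delicate operator-valued contractions.
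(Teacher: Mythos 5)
Your second route is the paper's proof verbatim: diagonalize $\chi\errorbasis{E}{E'} = \sum_j \lambda_{j,E'}\overline{\lambda_{j,E}}$, set $A_j = \sum_E \lambda_{j,E}E^\dagger$, and observe that the kernel $\Tr(F^\dagger E F'(E')^\dagger)$ appearing in $\Psi$ reassembles the double basis expansion into $\Tr(F^\dagger A_j^\dagger F' A_j)$, i.e.\ into $\dim(\mathfrak{H})\circuittensor{A_j}{F}{F'}$. That argument is correct, and since you say you would present that version, the final writeup would be fine.

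The first route, however, rests on a false premise, and the obstacle there is not bookkeeping. The components of $\circuitoperator{\mathcal{M}}$ and of $\chi$ are \emph{not} the same numbers: the circuit tensor is the transfer (Bloch) matrix $\circuittensor{\mathcal{M}}{E}{E'} = \tfrac{1}{\dim(\mathfrak{H})}\sum_j \Tr(E^\dagger A_j^\dagger E' A_j)$, whereas the process matrix is the coefficient matrix of the operator-sum expansion, $\chi\errorbasis{E}{E'} = \sum_j\lambda_{j,E'}\overline{\lambda_{j,E}} = \tfrac{1}{\dim(\mathfrak{H})^2}\sum_j \Tr(E' A_j)\Tr(E^\dagger A_j^\dagger)$. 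For the identity channel these are $\sum_E e\errorbasis{E}{E}$ and $e\errorbasis{I}{I}$ respectively, so they cannot coincide; and if they did, the theorem would assert that $\Psi$ is $\tfrac{1}{\dim(\mathfrak{H})}$ times the identity, which is false --- the entire content of the statement is that $\Psi$ is the nontrivial transform converting one representation into the other. (You were misled by the paper's parenthetical ``direct'' formula $\chi\errorbasis{E}{E'} = \tfrac{1}{\dim(\mathfrak{H})}\Tr(E^\dagger\mathcal{M}(E'))$, which is inconsistent with the implicit definition actually used in the proof; that expression is essentially the transfer matrix, not $\chi$.) If you nevertheless push your first computation through, the two contractions you describe evaluate, via $\sum_{G}\Tr(G^\dagger X)\Tr(GY) = \dim(\mathfrak{H})\Tr(XY)$ and $\sum_{G'} (G')^\dagger C G' = \dim(\mathfrak{H})\Tr(C)\,I$, to $\sum_j \Tr(F^\dagger A_j)\Tr(A_j^\dagger F')$ --- an $A$-type quantity rather than the $B$-type circuit tensor --- so that route does not prove the statement no matter how carefully the powers of $\dim(\mathfrak{H})$ are tracked.
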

\begin{proof}
    Let $\chi$ be the process matrix of $\mathcal{M}$, and via diagonalization write $\chi^E_{E'} = \sum_j \lambda_{j,E'}\overline{\lambda_{j,E}}$. Then as above define Kraus operators $A_j = \sum_E \lambda_{j,E} E^\dagger$. Finally, compute
    \begin{align*}
        \Psi(\chi) &= \sum_{E,E'\in\mathcal{E}} \chi\errorbasis{E}{E'} \Psi(e\errorbasis{E}{E'})\\
        &= \frac{1}{\dim(\mathfrak{H})^2} \sum_{E,E',F,F'} \chi^E_{E'} \Tr(F^\dagger E F' (E')^\dagger) e\errorbasis{F}{F'}\\
        &= \frac{1}{\dim(\mathfrak{H})^2} \sum_{j,E,E',F,F'} \lambda_{j,E'}\overline{\lambda_{j,E}} \Tr(F^\dagger E F' (E')^\dagger) e\errorbasis{F}{F'}\\
        &= \frac{1}{\dim(\mathfrak{H})^2} \sum_{j,F,F'} \Tr(F^\dagger A_j^\dagger F' A_j) e\errorbasis{F}{F'}\\
        &= \frac{1}{\dim(\mathfrak{H})} \circuitoperator{\mathcal{M}}.
    \end{align*}
\end{proof}

\end{document}